\documentclass[10pt,a4paper]{article}

\usepackage{amsmath,amssymb,amsthm,mathrsfs}
\usepackage{stmaryrd} 
\usepackage{graphicx}
\usepackage{xcolor}
\usepackage{soul}
\usepackage{cancel} 
\usepackage{fullpage}
\usepackage{changes}
\usepackage{comment}
\usepackage[most]{tcolorbox}
\usepackage[shortlabels,inline]{enumitem}

\newif\ifshowdraftboxes
\showdraftboxestrue

\NewDocumentEnvironment{draftbox}{+b}
{%
    \ifshowdraftboxes
        \begin{tcolorbox}[breakable]
        #1
        \end{tcolorbox}
    \fi
}
{}

\showdraftboxestrue

\newlength{\defbaselineskip}
\newcommand{\setlinespacing}[1]%
{\setlength{\baselineskip}{#1 \defbaselineskip}}

\theoremstyle{plain}
\newtheorem{theorem}{Theorem}[section]
\newtheorem{lemma}[theorem]{Lemma}
\newtheorem{proposition}[theorem]{Proposition}
\newtheorem{corollary}[theorem]{Corollary}

\theoremstyle{definition}
\newtheorem{definition}[theorem]{Definition}

\newtheorem{ass}[theorem]{Assumption}

\theoremstyle{remark}
\newtheorem{remark}[theorem]{Remark}

\numberwithin{equation}{section}

\DeclareMathOperator*{\esssup}{ess\,sup}
\DeclareMathOperator*{\essinf}{ess\,inf}

\newcommand{\sM}{\mathscr{M}}
\newcommand{\sN}{\mathscr{N}}
\newcommand{\diag}{\mathrm{diag}}
\newcommand{\stkout}[1]{\ifmmode\text{\sout{\ensuremath{#1}}}\else\sout{#1}\fi}
\newcommand{\E}{{\mathbb E}}
\newcommand{\R}{{\mathbb R}}

\newcommand{\dd}{\ensuremath{\operatorname{d}\! }}
\newcommand{\dt}{\ensuremath{\operatorname{d}\! t}}

\newcommand{\ds}{\ensuremath{\operatorname{d}\! s}}
\newcommand{\dr}{\ensuremath{\operatorname{d}\! r}}

\newcommand{\dw}{\ensuremath{\operatorname{d}\! W}}
\newcommand{\nn}{\nonumber}
\usepackage[colorlinks=true, citecolor=blue]{hyperref}

\allowdisplaybreaks
\begin{document}
\title{Multi-Asset Liquidation in Dark Pools with Adverse Selection}

\author{Guanxing Fu\thanks{Department of Applied Mathematics, The Hong Kong Polytechnic University. Email: guanxing.fu@polyu.edu.hk}\and Johannes Ruf\thanks{Department of Mathematics, London School of Economics. Email: j.ruf@lse.ac.uk} \and Xiaomin Shi\thanks{School of Statistics and Mathematics, Shandong University of Finance and Economics. Email: shixm@mail.sdu.edu.cn} \and Zuo Quan Xu\thanks{Department of Applied Mathematics, The Hong Kong Polytechnic University. Email: maxu@polyu.edu.hk}
}

\maketitle

\begin{abstract}
We study the optimal liquidation of a multi-asset portfolio using both a traditional exchange and dark pools in the presence of quadratic adverse-selection costs. The problem leads to a matrix-valued backward stochastic differential equation  with jumps and a singular terminal condition. We establish existence and uniqueness of its solution and use it to characterize the value function and the optimal liquidation strategy. The uniqueness result is the main mathematical contribution and strengthens the existing theory even in simpler special cases; the existence result is also new.

For a two-asset model, we distinguish the roles of asset correlation, own-asset adverse selection, and cross-asset spillover in adverse-selection costs. Under diagonal temporary impact and in the absence of cross-asset spillover, an initially well-diversified portfolio remains well diversified during optimal liquidation and, for a fixed sign of the correlation, its liquidation cost is strictly decreasing in the magnitude of the correlation. By contrast, under the same diagonal-impact specification, under explicit conditions and sufficiently close to the liquidation horizon, cross-asset spillover makes a well-diversified portfolio more costly to liquidate than its poorly diversified sign-reversed counterpart and causes sufficiently unbalanced well-diversified portfolios to become poorly diversified with positive probability.
Separately, without requiring diagonal temporary impact, we show that, in the absence of cross-asset spillover, own-asset adverse selection introduces an explicit shrinkage factor in the optimal dark-pool order relative to the order minimizing the post-execution continuation value. Finally, we derive an explicit condition under which a dark-pool execution transforms a poorly diversified portfolio into a well-diversified one.
\end{abstract}

{\bf AMS Subject Classification:} 93E20, 60H30, 91G80

{\bf Keywords:}{ Adverse selection, dark pool, matrix-valued BSDEs with singular terminal conditions, optimal liquidation  }

\section{Introduction and overview}\label{sec:introduction}

Dark pools are alternative trading venues that match buyers and sellers without displaying order information before execution. By reducing information leakage, they can mitigate market impact, but they also expose traders to adverse selection when counterparties possess superior information---for example, when a trader sells shortly before prices rise or buys shortly before they fall. In practice, portfolio liquidation and execution problems typically involve multiple correlated assets. Yet the literature on optimal liquidation has largely studied adverse selection and multi-asset trading separately: models incorporating adverse selection are predominantly single-asset, whereas multi-asset models typically abstract from adverse-selection effects.
 
Kratz and Sch\"oneborn \cite{KS-2014} study discrete-time multi-asset liquidation without adverse selection; they also consider the single-asset case with absolute-value adverse-selection costs in \cite[Section~5]{KS-2014}, which is further developed in \cite{Kratz-2018}. Continuous-time multi-asset liquidation with dark-pool trading is analyzed in \cite{KS-2015}, again without adverse selection. Kratz \cite{Kratz-2014} extends the single-asset version of this model by incorporating absolute-value adverse-selection costs, motivated by \cite{HN-2014}. Horst and Naujokat \cite{HN-2014} study single-asset liquidation in dark pools and in a two-sided limit-order market with linear adverse-selection costs and singular controls, using the stochastic maximum principle. Cheridito and Sepin \cite{Cheridito-Sepin-a} consider discrete-time liquidation with adverse selection, stochastic volatility, and stochastic liquidity, also in the single-asset setting. Kl\"ock et al.\ \cite{KSS-2017} study price manipulation in a dark-pool market-impact model. They show that, in order to exclude price manipulation, dark-pool executions must incur a slippage penalty bounded below by a quadratic function of the execution size; this penalty may, among other effects, represent adverse-selection costs. Quadratic dark-pool costs are subsequently used in, for example, \cite{GHQ-2015,GHS-2018,HX-2021,KP-2016}. For further work on multi-asset liquidation and execution, see, for example, \cite{Ackermann-2026,AKU-2025,Alfonsi-2016,Muhle-Karbe-2024,HX-2019,Jaber-2024,Muhle-Karbe-2024b,Schied-2010}.

To the best of our knowledge, multi-asset liquidation in dark pools with adverse-selection costs has not previously been analyzed in the literature. Motivated by this gap, we study a multidimensional stochastic control problem for portfolio liquidation through a primary exchange and dark pools. As is standard in the liquidation literature, we assume linear temporary price impact, which gives rise to quadratic trading costs; see, for example, \cite{AC-2001}. To keep the model tractable, we impose several simplifying assumptions on dark-pool trading, which is substantially more complex in practice.
First, following the standard approach in the dark-pool liquidation literature, we model execution times as the jump times of Poisson processes, one for each asset; see, for example, \cite{HN-2014,KS-2015}. Each jump triggers the execution of the outstanding dark-pool order in the corresponding asset. Second, the trader may update the outstanding dark-pool orders continuously over time, without restrictions such as a minimum resting time. Third, executions are all-or-nothing: whenever the Poisson process associated with an asset jumps, the outstanding order in that asset is executed in full.
Finally, we represent adverse-selection losses by a reduced-form quadratic cost. This specification is consistent with the single-asset framework of \cite{GHS-2018}, but differs from other formulations in the literature. For example, \cite{Kratz-2014} uses absolute-value adverse-selection costs, leading to a control problem that is no longer linear--quadratic and is currently tractable only in the single-asset setting. Another approach is taken in \cite{HN-2014}, who instead use linear adverse-selection costs in a model that allows trading through a two-sided limit-order market.
Besides preserving the linear--quadratic structure of the control problem, the matrix-valued quadratic specification allows adverse-selection costs to interact across assets. We interpret these cross-asset interactions as information spillover in dark-pool trading: information associated with trading in one asset may affect the adverse-selection losses associated with trading in another related asset.

As a consequence of these modeling choices, the liquidation problem takes the form of the following multidimensional linear--quadratic stochastic control problem:
\begin{align}\label{cost}
	J(\xi,\beta;x_0)
	=
	\E\left[
		\int_0^T
		\left(
			\xi_t^\top\Sigma_t\xi_t
			+
			X_t^\top\Lambda_tX_t
			+
			\beta_t^\top\Gamma_t\beta_t
		\right)
		\dd t
	\right]
	\longrightarrow
	\min_{\xi,\beta},
\end{align}
where the controlled inventory process \(X\) satisfies
\begin{align}\label{state}
	X_t
	=
	x_0
	-
	\int_0^t\xi_s\,\dd s
	-
	\int_0^t\beta_s\odot\dd N_s,
	\qquad
	X_T=0.
\end{align}
Here, \(x_0\in\R^n\) is the initial portfolio, \(\xi\) is the trading rate on the primary exchange, and \(\beta\) is the vector of outstanding dark-pool orders. Moreover,
\(N=(N_1,\ldots,N_n)^\top\) is an \(n\)-dimensional Poisson process, and a jump of \(N_i\) triggers the execution of the outstanding order in asset \(i\). The symbol \(\odot\) denotes the Hadamard product.\footnote{For \(x=(x_1,\ldots,x_n)^\top\) and \(y=(y_1,\ldots,y_n)^\top\in\R^n\), we define \(x\odot y=(x_1y_1,\ldots,x_ny_n)^\top\).}

In the cost functional \eqref{cost}, the terms involving \(\Sigma\) and \(\Lambda\) represent temporary price impact and  a running penalty for the market risk of the remaining inventory, respectively, while the term involving \(\Gamma\) represents reduced-form adverse-selection costs. The diagonal entries of \(\Gamma\) capture own-asset adverse selection, whereas its off-diagonal entries capture pairwise interactions between dark-pool orders in different assets, which we interpret as cross-asset spillover in adverse-selection costs. A central financial question is how this additional interaction channel affects the cost and dynamic protection of diversified portfolios.

We solve a general version of \eqref{cost}--\eqref{state} with random coefficients by analyzing a matrix-valued BSDE with Poisson jumps and a singular terminal condition induced by the liquidation constraint \(X_T=0\). Our first contribution is to establish existence and uniqueness for this equation and thereby characterize the value function and the optimal feedback strategy. A central difficulty is the absence of a general comparison principle for multidimensional BSDEs. To overcome it, we adapt the linearization method of \cite[Section~5]{Peng-1992} to our matrix-valued setting with jumps. We establish a positivity-preservation result for the resulting linear BSDE and use it to derive a comparison principle for the original nonlinear equation. The jump terms make this argument substantially more involved than its continuous counterpart. This comparison principle is the key tool in both the existence and uniqueness proofs and yields the a priori bounds required for the singular limit.

To prove existence, we replace the singular terminal condition by the finite terminal conditions \(Y_T^L=L\mathbf{I}_n\). The comparison principle yields monotonicity in \(L\) and uniform upper and lower bounds for the corresponding solutions on every compact subinterval of \([0,T)\). These estimates allow us to pass to the limit as \(L\nearrow\infty\) and verify that the limit solves the original matrix-valued BSDE with singular terminal condition.

The uniqueness result is more delicate. Uniqueness for BSDEs with singular terminal conditions is generally difficult and remains open in many one- and multidimensional settings. To the best of our knowledge, our result is the first uniqueness theorem for matrix-valued BSDEs with singular terminal conditions, even without jumps, and the second for a genuinely multidimensional BSDE with a singular terminal condition. The first such result was obtained in earlier work \cite{Fu-Shi-Xu-2025} for a system of BSDEs arising from optimal liquidation under liquidity regime switching.
Our proof follows the strategy developed in
\cite{Fu-Shi-Xu-2025}, building on ideas from \cite{GHS-2018}.
We first show that the solution obtained from the finite-terminal
approximations is minimal in the Loewner order: every other solution
dominates it. A crucial ingredient is a quantitative positive-definite lower bound for arbitrary solutions that diverges as the terminal time is
approached. This allows us to compare the finite-terminal
approximations with an arbitrary solution near the terminal time and
then propagate the comparison backward.
We next solve the associated stochastic control problem and identify
the value function with the quadratic form generated by the minimal
solution. Applying the verification inequality for an arbitrary
solution along the optimal strategy associated with the minimal
solution yields the reverse matrix inequality. Combining the two
inequalities proves uniqueness.

As our second contribution, we analyze a two-asset model with constant
coefficients and distinguish the roles of asset correlation, own-asset
adverse selection, and cross-asset spillover in adverse-selection costs.
To isolate these effects, we assume that the temporary-impact matrix is
diagonal. We first derive a representation of the off-diagonal coefficient
of the value function, showing how correlation and cross-asset spillover
may reinforce or offset each other. In the absence of cross-asset spillover,
an initially well-diversified portfolio remains well diversified throughout
optimal liquidation, and for a fixed sign of the correlation, its
liquidation cost is strictly decreasing in the magnitude of the correlation.
When cross-asset spillover is present, correlation and spillover may act
in opposite directions. Under explicit conditions and sufficiently close
to the liquidation horizon, a well-diversified portfolio is more costly
to liquidate than its poorly diversified sign-reversed counterpart, and
sufficiently unbalanced well-diversified portfolios become poorly
diversified with positive probability.
Separately, without requiring the temporary-impact matrix to be diagonal,
we show that, in the absence of cross-asset spillover, own-asset adverse
selection introduces an explicit shrinkage factor in the optimal dark-pool
order relative to the order minimizing the post-execution continuation
value. Within the two-asset diagonal-impact specification, we also derive
an explicit condition under which a dark-pool execution transforms a
poorly diversified portfolio into a well-diversified one.

Our main benchmark for the financial analysis is \cite{KS-2015}, which studies multi-asset liquidation with dark-pool trading but without adverse-selection costs.
 When the temporary-impact matrix is diagonal, asset correlation is therefore the only source of interaction between liquidation costs across assets. In this setting, \cite{KS-2015} shows that, among portfolios differing only in the sign of one component, the well-diversified portfolio has lower liquidation cost and remains well diversified under the optimal strategy. Moreover, each dark-pool execution turns a poorly diversified portfolio into a well-diversified one.
Our results show that own-asset adverse selection preserves the benchmark
diversification conclusions of \cite{KS-2015} while modifying dark-pool
order sizes and restoration conditions, whereas cross-asset spillover can reverse the liquidation-cost
ranking and destroy diversification protection.

We next position our theoretical contributions relative to the literature on equations with singular terminal conditions and multidimensional BSDEs with jumps. First, the hard liquidation constraint gives rise to equations with singular terminal conditions, including PDEs, BSDEs, and ODEs; see, for example, \cite{AFKP-2019,AJK-2014,GH-2017,GHS-2018,HX-2019,HXZ-2022,KP-2016,Popier-2006,Popier-2007,PZ-2019}. Much of this literature concerns scalar equations, while results for multidimensional equations with singular terminal conditions remain scarce. Kratz and Sch\"oneborn \cite{KS-2015} study multi-asset liquidation with dark-pool trading but without adverse selection. With constant coefficients, their problem leads to a matrix-valued Riccati ODE with a singular terminal condition. Horst and Xia \cite{HX-2019} consider multi-asset liquidation with transient price impact but without dark-pool trading, leading to a matrix-valued BSDE without jumps. More recently, Ackermann et al.\ \cite{Ackermann-2026} analyze a broad class of multidimensional linear--quadratic stochastic control problems with random unbounded coefficients and general terminal state constraints, again obtaining matrix-valued BSDEs without jumps. Their comparison arguments do not cover the jump structure generated by dark-pool executions in our model. 

Against this background, our main theoretical advance is the uniqueness result, even though our existence result is also novel in the literature. To the best of our knowledge, existing uniqueness results for BSDEs with singular terminal conditions are confined to scalar equations \cite{GH-2017, GHS-2018,GP-2021,HX-2021} and to the multidimensional system arising from liquidity regime switching in \cite{Fu-Shi-Xu-2025}. Our theorem therefore provides the first uniqueness result for a matrix-valued BSDE with a singular terminal condition. This remains new even without jumps and, in particular, in the constant-coefficient setting, where the BSDE reduces to an ODE. Indeed, \cite[Remark~3.7]{KS-2015} establishes only that the corresponding ODE admits a principal solution, whereas our result shows that this solution is unique.

If the liquidation constraint is replaced by a quadratic penalty on the terminal open position, the problem leads to a matrix-valued BSDE with jumps and a finite terminal condition. This connects our analysis to the literature on multidimensional BSDEs with jumps; see, for example, \cite{Barles1997, Becherer2006, Hu-Shi-Xu-2025, KP-2016b, Papapantoleon2018, ZDM}. 
The general well-posedness results in \cite{Barles1997,Becherer2006} apply to our equation only after uniform bounds for \(Y\) and \(\Psi\) have been established, whereas obtaining these bounds is itself one of the main uses of our comparison principle.
 The work \cite{KP-2016b} treats BSDEs with monotone generators under general filtrations, while \cite{Papapantoleon2018} provides a broad well-posedness theory for BSDEs with jumps under general filtrations and random horizons. The comparison results available in these frameworks, however, do not provide the matrix comparison in the Loewner order required here. The multidimensional comparison principles in \cite{Hu-Shi-Xu-2025} are formulated in terms of componentwise order and therefore do not apply directly to our matrix-valued equation.
The paper \cite{ZDM} establishes existence and uniqueness for matrix-valued stochastic Riccati equations with jumps by means of dynamic programming, Doob--Meyer decomposition, and inverse-flow techniques. Our approach is different: it exploits the particular structure of our Riccati equation to obtain a comparison principle and quantitative upper and lower bounds for its solutions. The usefulness of these estimates is already visible in the penalized problem studied in \cite{KS-2015}, which is recovered when the coefficients are constant and adverse selection is absent. Whereas the corresponding result in \cite{KS-2015} requires the penalization parameter to be sufficiently large, our argument applies to every positive penalization parameter.
When this paper was near completion, we became aware of the very interesting work \cite{Ding-2026}, in which Ding et al.\ study a matrix-valued BSDE arising from an indefinite linear--quadratic control problem with jumps. Our truncated BSDE, but not the BSDE with singular terminal condition, can be embedded into their framework. They construct a strongly regular solution that is maximal among all strongly regular solutions, while uniqueness remains open in their general setting; see \cite[Remark~5.10]{Ding-2026}. Our approach is different, as we focus on nonnegative solutions. A priori, a nonnegative solution need not be strongly regular, and a strongly regular solution need not be nonnegative. However, Proposition~\ref{P:260706} below shows that every nonnegative solution is in fact uniformly positive definite, and hence strongly regular. This uniform positivity result is not an immediate consequence of the methods in \cite{Ding-2026}. Moreover, for the analysis of our subsequent BSDE with singular terminal values, we require precise upper and lower bounds for the solutions of the truncated BSDE, which are not required for the finite-terminal problem considered in \cite{Ding-2026}. Finally, we obtain uniqueness within the class of nonnegative solutions, which addresses, for the special case considered here, the uniqueness question left open in \cite[Remark~5.10]{Ding-2026}.

The remainder of the paper is organized as follows. Section~\ref{sec:notation} introduces the notation and preliminary observations used throughout the paper. In Section~\ref{S:main_results}, we formulate the penalized and constrained liquidation problems and state the main well-posedness and optimal-control results. Section~\ref{S:main_proofs} develops the comparison principle and a priori estimates and proves the main results. Finally, Section~\ref{sec:financial-analysis} studies a two-asset specification and analyzes the interaction between asset correlation, own-asset adverse selection, and cross-asset spillover.

\section{Notation and preliminary observations}\label{sec:notation}

Throughout, fix $T>0$ and  $n \in \mathbb N$. Write $x^+=\max\{x, 0\}$ and $x^-=\max\{-x, 0\}$ for $x\in\R$. 
Let $\mathbb{S}^n$ (resp., $\mathbb S^n_+$) denote the set of symmetric $n\times n$ real matrices (resp., positive semidefinite $n\times n$ real matrices). For any vector $y\in\mathbb R^n$, let $y_i$ denote its $i$-th component, where $i=1,\ldots,n$. For $i=1,\ldots,n$, let $e_i$ denote the $i$-th unit vector in $\R^n$. For $v=(v_1,\ldots,v_n)^\top\in\R^n$, let $\diag(v)$ denote the diagonal matrix with diagonal entries $v_1,\ldots,v_n$.
For any matrix $p=(p_{ij})\in\mathbb R^{n\times n}$, let $p^\top$ denote its transpose and $|p|=\sqrt{\sum_{ij}p_{ij}^2}$ its Frobenius norm. If $p\in\mathbb{S}^n$ let
$\sigma_{\min}(p)$ denote its smallest eigenvalue. Furthermore, if $p\in\mathbb{S}^n$ is positive definite (resp., positive semidefinite),  write $p>0$ (resp., $p \geq 0$). For $a, b \in\mathbb{S}^n$,  write $a>b$ (resp., $a\geq b$) if $a-b>0$ (resp., $a -b \geq 0$). For $a\in\mathbb S^n$ with spectral decomposition $a=Q^\top \diag(a_1,\ldots,a_n)Q$ with $Q$ orthogonal,  its positive and negative parts are denoted by $a^\pm=Q^\top \diag(a^\pm_1,\ldots,a^\pm_n)Q$. Let $\mathbf{I}_n$ denote the $n$-dimensional identity matrix.
The Hadamard product of vectors $x,y \in \mathbb{R}^n$ is denoted by $
x \odot y = (x_1 y_1, \ldots, x_n y_n)^{\top}$.

We shall work on a complete probability space $(\Omega, \mathcal{F}, \mathbb{P})$ that supports a one-dimensional Brownian motion $W$ and an independent $n$-dimensional Poisson process $N=(N_1,\ldots,N_n)$ with intensities $\theta_1,\ldots,\theta_n>0$.  The corresponding compensated Poisson process is denoted by $\widetilde{N}$.
Set $\boldsymbol{\theta}=(\theta_1,\ldots,\theta_n)^\top$ and
\begin{align} \label{eq:260730.1}
\theta_{\min}=\min_{1\leq i\leq n}\theta_i,
\qquad
\theta_{\max}=\max_{1\leq i\leq n}\theta_i,
\qquad
\Theta=\sum_{i=1}^n\theta_i.
\end{align}
Denote by $\mathbb F=(\mathcal F_t)_{t\geq 0}$ the augmented natural filtration of $(W,N)$.
Throughout, let $\Sigma$, $\Lambda$, and $\Gamma$ be $\mathbb S^n$-valued predictable processes, representing temporary price impact, the inventory-risk penalty, and adverse-selection costs, respectively.

For any c\`adl\`ag process \(Y\), we adopt the convention \(Y_{0-}=Y_0\).
We say an $\mathbb{S}^n$-valued process $Y$ is uniformly positive definite if there exists a positive constant $\alpha >0$ such that $Y\geq \alpha\mathbf{I}_n$, a.e., a.s. Moreover, we always understand inequalities involving random variables as holding almost surely, and inequalities involving stochastic processes as holding almost surely and (Lebesgue-) almost everywhere in time. Thus, we usually omit the qualifiers ``a.s.'' and ``a.e.''.

For each $0 \leq t < u \leq T$ and each nonempty normed space $\mathcal S$,  we define the following spaces of stochastic processes:  
\begin{align*}
	L^{2}_{\mathcal{P}}(t, u; \mathcal S )&=\Big\{\phi:[t, u]\times\Omega\rightarrow
	\mathcal S \;\Big|\;\phi\mbox{ is } 
	\mbox{predictable and }\E\left[\int_{t}^{u}|\phi_s|^{2}\ds\right]<\infty
	\Big\}, \\
	L^{\infty}_{\mathcal{P}}(t,  u;\mathcal S)&=\Big\{\phi:[t, u]\times\Omega
	\rightarrow \mathcal S \;\Big|\;\phi\mbox{ is } 
	\mbox{predictable and } \esssup_{\omega,s}|\phi_s(\omega)|<\infty \Big\},\\
S^{2}(t, u; \mathcal S )&=\Big\{\phi:[t, u]\times\Omega\to \mathcal S
		\;\Big|\;\phi \mbox{ is c\`adl\`ag, adapted, and } \mathbb E\left[\sup_{t\leq s\leq u }|\phi_s|^2\right]<\infty \Big\},\\
		S^{\infty}(
        t,  u; \mathcal S )&=\Big\{\phi:[t, u]\times\Omega\to \mathcal S
	\;\Big|\;\phi \mbox{ is c\`adl\`ag,  adapted, and } \esssup_{\omega,s}|\phi_s(\omega)|<\infty \Big\},\\
	L^{\infty}_{\mathcal F_u}(\mathcal S)&=\Big\{\phi:\Omega
	\rightarrow \mathcal S \;\Big|\;\phi\mbox{ is } \mathcal F_u
	\mbox{-measurable and } \esssup_{\omega}|\phi(\omega)|<\infty \Big\}.
\end{align*}
We will usually omit specifying the interval $[t,u]$ whenever ${t}=0$ and $u=T$; i.e., $L^{2}_{\mathcal{P}}(\mathcal S ) = L^{2}_{\mathcal{P}}(0, T; \mathcal S )$, etc.

The following observation is used repeatedly.

\begin{lemma}\label{L:260708}
Let \(M\) be a Poisson process with positive intensity and let \(\varphi\) be a
nonnegative predictable process. If
$
\int_0^T \varphi_t\,\dd M_t=0$
then \(\varphi=0\). In particular, for an \(\mathbb S^n\)-valued predictable process \(\chi\) and a constant $\kappa>0$, if \(|\chi|\le \kappa\) at all jump times of the $n$-dimensional Poisson process $N$ then \(|\chi|\le \kappa\).
\end{lemma}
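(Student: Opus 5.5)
The plan is to take expectations and use the compensator of the Poisson process. Let $\lambda>0$ be the intensity of $M$. Since $\varphi\ge0$ and $M$ is nondecreasing, $\int_0^T\varphi_t\,\dd M_t=\sum_{s\le T,\,\Delta M_s=1}\varphi_s\ge0$, so the hypothesis is a statement about a nonnegative quantity. Replace $\varphi$ by the bounded nonnegative predictable process $\varphi\wedge 1$. We still have $\int_0^T(\varphi_t\wedge1)\,\dd M_t=0$, because the integral is a sum of nonnegative terms that all vanish.

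For bounded predictable integrands, $\int_0^\cdot(\varphi\wedge1)\,\dd(M_t-\lambda t)$ is a true martingale. Hence
\[
0=\E\Big[\int_0^T(\varphi_t\wedge1)\,\dd M_t\Big]=\lambda\,\E\Big[\int_0^T(\varphi_t\wedge1)\,\dt\Big].
\]
Since $\lambda>0$, this gives $\varphi\wedge 1=0$ $\dd t\otimes\dd\mathbb P$-a.e., and therefore $\varphi=0$ a.e., a.s. This matches the convention of the paper that inequalities for processes hold a.s.\ and a.e. The truncation step is there only to avoid integrability issues; alternatively one could localize with stopping times and use monotone convergence.

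For the second assertion, set $\varphi=(|\chi|-\kappa)^+$. This is nonnegative and predictable, since it is a continuous function of the predictable process $\chi$. The hypothesis says $\varphi=0$ at every jump time of $N$, and in particular at every jump time of each coordinate $N_i$. Thus $\int_0^T\varphi_t\,\dd N_{i,t}=0$, and $N_i$ is a Poisson process with intensity $\theta_i>0$. The first part then gives $\varphi=0$, that is, $|\chi|\le\kappa$ a.e., a.s. A single coordinate, say $i=1$, already suffices.

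I expect no real obstacle here. The only delicate point is the interchange in the expectation identity, i.e.\ the martingale property of the compensated integral, and the truncation handles it. One should also note that, because the conclusion concerns a process equal a.e., the predictability of $\varphi$ is essential: the compensator identity holds only for predictable integrands.
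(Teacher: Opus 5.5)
Your proof is correct and follows the paper's argument: take expectations and use the compensator $\lambda\,dt$ of $M$, then apply this to $(|\chi|-\kappa)^+$. Two minor differences. The paper takes expectations directly without truncating, which is legitimate because the compensator identity holds for nonnegative predictable integrands by monotone convergence, so your $\varphi\wedge1$ step is harmless but unnecessary. The paper also uses $M=\sum_{i=1}^n N_i$ rather than a single coordinate $N_1$, and either choice works.
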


\begin{proof}
Writing \(\theta\,\dd t\) for the compensator of \(M\) yields
\[
0
=
\mathbb E\left[\int_0^T \varphi_t\,\dd M_t\right]
=
\mathbb E\left[\int_0^T \varphi_t\theta\,\dd t\right].
\]
Since \(\varphi\ge0\) and \(\theta>0\), this implies \(\varphi=0\).
For the second statement, apply the first part to
$
\varphi=(|\chi|-\kappa)^+$ and $M = \sum_{i=1}^n N_i$.
\end{proof}

\begin{remark} \label{R:260709}
Let \((Y,Z,\Psi)\) be an \(\mathbb S^n\)-valued c\`adl\`ag solution of some BSDE driven by
$N$, with jump integrand \(\Psi=(\Psi_1,\ldots,\Psi_n)\).
Fix $i = 1, \ldots, n$.
Then, at every jump time of \(N_i\), one has
$
        Y_t=Y_{t-}+\Psi_{i,t}$. 
Consequently, Lemma~\ref{L:260708} implies the following useful
principle. Let \(a\) and \(b\) be predictable
\(\mathbb S^n\)-valued processes. If
\[
        a\le Y\le b
\]
at all jump times of \(N_i\), then
\[
        a\le Y_{-}+\Psi_i\le b.
\]
Indeed, applying Lemma~\ref{L:260708} to the
nonnegative predictable process
\[
        \left|(a-Y_{-}-\Psi_i)^+\right|^2
        +
        \left|(Y_{-}+\Psi_i-b)^+\right|^2
\]
gives the claim.

In particular, lower, upper, and two-sided predictable bounds for \(Y\)
that hold at the jump times of \(N_i\) are inherited by \(Y_-+\Psi_i\).
For example, if \(Y\ge  \varepsilon \mathbf{I}_n\) at the jump times of
\(N_i\) for some \(\varepsilon>0\), then
$
        Y_-+\Psi_i\ge  \varepsilon \mathbf{I}_n$.
        Thus, if $Y$ is positive semidefinite (resp., uniformly positive definite), then so is $Y_-+\Psi_i$.
The same observation applies to comparisons of two solutions. If
\(Y^1\le Y^2\) at the jump times of \(N_i\), then
\[
        Y^1_-+\Psi^1_i\le Y^2_-+\Psi^2_i.
\]
In particular, if \(Y^1=Y^2\), then
\(\Psi^1_i=\Psi^2_i\).
We shall use these consequences without further comment.
\end{remark}

\begin{remark}
For notational simplicity, we work with constant deterministic jump
intensities and a one-dimensional Brownian motion. The results of
Sections~\ref{S:main_results} and~\ref{S:main_proofs}, except for the
constant-coefficient reduction in Subsections~\ref{SS:3.5}
and~\ref{SS:4.5}, remain valid when the components of \(N\) are counting
processes with predictable intensities \((\theta_{i,t})_{t \geq 0}\) satisfying
\[
0<\underline\theta
\leq
\theta_{i,t}
\leq
\overline\theta<\infty, 
\qquad t \geq 0, i=1,\ldots,n,
\]
provided that they have no common jumps and that the filtration has the
martingale representation property with respect to \(W\) and the
compensated processes
\[
\widetilde N_{i,t}
=
N_{i,t}-\int_0^t\theta_{i,s}\,\dd s.
\]
Here, \(\underline\theta\) and \(\overline\theta\) are positive deterministic constants independent of \(i\), \(t\), and \(\omega\).
Indeed, the proofs go through after replacing \(\theta_i\) by
\(\theta_{i,t}\) in \(\mathscr M\) and \(\mathscr N\), and using
\(\underline\theta\) and \(\overline\theta\) instead of \(\theta_{\min}\) and \(\theta_{\max}\) in the corresponding
estimates.  Moreover, throughout the paper, \(W\) may be replaced
by any finite-dimensional Brownian motion, with \(Z\) interpreted as the
corresponding tuple of matrix-valued integrands.
\end{remark}

\section{Main results for the BSDE and liquidation problems
}\label{S:main_results}

In this section, we formulate the liquidation problems and state our
main BSDE and control results. In Subsection~\ref{SS:3.1}, we introduce
the standing assumptions and define the penalized and constrained
control problems. Subsection~\ref{SS:3.2} establishes well-posedness and
positivity properties for a class of linear matrix-valued BSDEs with
jumps. In Subsection~\ref{SS:3.3}, we study the penalized liquidation
problem and the associated BSDE with finite terminal value, and identify
the optimal feedback control and the value function. In
Subsection~\ref{SS:3.4}, we consider the original constrained
liquidation problem and the corresponding BSDE with singular terminal
condition, for which we establish existence and uniqueness and characterize
the optimal feedback control and the value function. Finally,
Subsection~\ref{SS:3.5} specializes the singular BSDE to constant coefficients, shows that it reduces to a deterministic matrix-valued ODE, and derives its terminal asymptotics. The proofs of all results stated in this section are deferred to
Section~\ref{S:main_proofs}.

\subsection{Assumptions and control problems}
\label{SS:3.1}

We shall invoke the following assumption where needed.

\begin{ass}\label{ass1}
The processes $\Sigma$, $\Lambda$, and $\Gamma$ belong to
$L^\infty_{\mathcal P}(\mathbb S^n)$. Moreover, there exist positive constants
$\eta_\star$, $\eta^\star$, and $\lambda^\star$ such that
$\eta_\star\mathbf I_n\leq\Sigma\leq\eta^\star\mathbf I_n$,
$0\leq\Lambda\leq\lambda^\star\mathbf I_n$, and $\Gamma\geq0$.
\end{ass}

For $(t,x)\in[0,T) \times \R^n$ and
\[
(\xi,\beta)
\in
L^2_{\mathcal P}(t,T;\R^n)
\times
L^2_{\mathcal P}(t,T;\R^n),
\]
we denote the corresponding inventory process by
\begin{equation*}
X^{t,x;\xi,\beta}_s
=
x-\int_t^s\xi_r\dr
-\int_t^s\beta_r\odot\dd N_r,
\qquad s\in[t,T].
\end{equation*}
Whenever the initial condition and the controls are clear from the
context, we simply write $X$.

The set of admissible strategies for the constrained liquidation
problem starting from $(t,x)\in[0,T) \times \R^n$ is
\begin{equation*}
\mathcal A_t(x)
=
\left\{
(\xi,\beta)
\in
L^2_{\mathcal P}(t,T;\R^n)
\times
L^2_{\mathcal P}(t,T;\R^n)
\;\middle|\;
X^{t,x;\xi,\beta}_T=0
\right\}.
\end{equation*}
When the initial state $x$ is understood, we write $\mathcal A_t$
instead of $\mathcal A_t(x)$. In particular, $\mathcal A_0(x_0)$ is
the set of admissible strategies for the original liquidation problem
\eqref{cost}--\eqref{state}.

For $(t,x)\in[0,T) \times \R^n$ and $(\xi,\beta)\in\mathcal A_t(x)$, define
\begin{equation*}
J_t(\xi,\beta;x)
=
\E_t\left[
\int_t^T
\left(
\xi_s^\top\Sigma_s\xi_s
+
X_s^\top\Lambda_sX_s
+
\beta_s^\top\Gamma_s\beta_s
\right)\ds
\right].
\end{equation*}
The corresponding value function for $(t,x)\in[0,T) \times \R^n$ is
\begin{equation*}
V_t(x)
=
\essinf_{(\xi,\beta)\in\mathcal A_t(x)}
J_t(\xi,\beta;x).
\end{equation*}
To approximate the terminal constraint, 
we shall consider a stochastic control problem without the liquidation constraint, but with the terminal open position penalized. Specifically, for each constant $L>0$, we study the unconstrained control problem
with
penalized cost functional
\begin{equation*}
J^L_t(\xi,\beta;x)
=
\E_t\left[
\int_t^T
\left(
\xi_s^\top\Sigma_s\xi_s
+
X_s^\top\Lambda_sX_s
+
\beta_s^\top\Gamma_s\beta_s
\right)\ds
+
L|X_T|^2
\right], \qquad (t,x)\in[0,T) \times \R^n
\end{equation*}
where $(\xi,\beta)$ ranges over
\[
L^2_{\mathcal P}(t,T;\R^n)
\times
L^2_{\mathcal P}(t,T;\R^n).
\]
The corresponding value function for $(t,x)\in[0,T) \times \R^n$ is
\begin{equation*}
V^L_t(x)
=
\essinf_{
(\xi,\beta)\in
L^2_{\mathcal P}(t,T;\R^n)
\times
L^2_{\mathcal P}(t,T;\R^n)
}
J^L_t(\xi,\beta;x).
\end{equation*}
For later use, we introduce four functions that allow us to write the BSDEs and their linearizations in a compact form. The functions $\mathscr M$ and $\mathscr N$ collect the terms arising from dark-pool executions, while $F$ and $G$ collect the quadratic expressions associated with continuous trading and dark-pool trading, respectively. For $y\in\mathbb{S}^n$, $\psi=(\psi_{1},\ldots,\psi_{n})\in(\mathbb{S}^n)^n$, and $c,h\in\R^{n\times n}$, define the predictable functions
\begin{equation}\label{def:MNFG}
	\begin{split}
\mathscr M(y,\psi)
&=
\sum_{i=1}^n
\diag(0,\ldots,\theta_i,\ldots,0)(y+\psi_i),
\\
\mathscr N(y,\psi)
&=
\Gamma+
\diag
\bigl(
\theta_1(y_{11}+\psi_{1,11}),\ldots,\theta_n(y_{nn}+\psi_{n,nn})
\bigr),
\\
F(y,c)
&=
c^\top\Sigma c+c^\top y+yc,
\\
G(y,\psi,h)
&=
h^\top\mathscr N(y,\psi)h
+h^\top\mathscr M(y,\psi)
+\mathscr M(y,\psi)^\top h.
	\end{split}
\end{equation}

\subsection{A linear matrix-valued BSDE}\label{SS:3.2}

The following linearization result, adapted from \cite{Peng-1992} to our setting, is crucial for the analysis of the nonlinear matrix-valued BSDEs arising from the liquidation problems. Besides establishing well-posedness for a class of linear matrix-valued BSDEs with jumps, it shows that positive semidefiniteness and uniform positive definiteness of the terminal value are preserved. These properties will be used repeatedly in the construction, comparison, and uniqueness arguments for both the BSDE with finite terminal value in Subsection~\ref{SS:3.3} and the BSDE with singular terminal condition in Subsection~\ref{SS:3.4}.

\begin{proposition}\label{P:260705}
Consider $R,\Sigma,\Gamma\in L^\infty_{\mathcal P}(\mathbb S^n)$,
$C,H\in L^\infty_{\mathcal P}(\R^{n\times n})$, and
$\zeta\in L^\infty_{\mathcal F_T}(\mathbb S^n)$. Then the linear BSDE
\begin{equation}\label{BSDE-Linear}
	\left\{
	\begin{split}
		-\dd Y_t
		={}&
		\left(
		R_t+F_t(Y_t,C_t)+G_t(Y_{t-},\Psi_t,H_t)
		\right)\dt
		-Z_t\dw_t
		-\sum_{i=1}^n\Psi_{i,t}\dd\widetilde N_{i,t},
		\\
		Y_T
		={}&
		\zeta
	\end{split}
	\right.
\end{equation}
admits a unique solution
$(Y,Z,\Psi)\in S^2(\mathbb S^n)\times L^2_{\mathcal P}(\mathbb S^n)
\times L^2_{\mathcal P}((\mathbb S^n)^n)$.
Moreover,
$(Y,Z,\Psi)\in S^\infty(\mathbb S^n)\times L^2_{\mathcal P}(\mathbb S^n)
\times L^\infty_{\mathcal P}((\mathbb S^n)^n)$.

If, in addition,
\begin{align}\label{ass:R}
R+C^\top\Sigma C+H^\top\Gamma H\geq0,
\end{align}
then the following statements hold.
\begin{enumerate}[(1)]
\item\label{Lemma-L.1}
If $\zeta\geq0$, then $Y$ is positive semidefinite.

\item\label{Lemma-L.2}
If there exists a positive constant $\alpha$ such that
$\zeta\geq\alpha\mathbf I_n$, then $Y$ is uniformly positive definite.

\item\label{Lemma-L.3}
If there exist positive constants $\alpha$ and $\eta_\star$ such that
$\zeta\geq\alpha\mathbf I_n$, $\Sigma\geq\eta_\star\mathbf I_n$, $\Gamma \geq 0$, and
$R\geq0$, then
\[
Y_t
\geq
\frac{\eta_\star\theta_{\max}}
{(1+\eta_\star\theta_{\max}\alpha^{-1})
e^{\theta_{\max}(T-t)}-1}
\,\mathbf I_n, \qquad t \in [0,T].
\]
\end{enumerate}
\end{proposition}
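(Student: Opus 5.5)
The plan is to treat \eqref{BSDE-Linear} as a Lipschitz BSDE with jumps, and then to get the sign properties by a Peng-type representation of the quadratic form $x^\top Y_t x$ along an auxiliary linear state equation. For well-posedness, note that $F_t(y,C_t)-C_t^\top\Sigma_tC_t$ is linear in $y$. Likewise $G_t(y,\psi,H_t)-H_t^\top\Gamma_tH_t$ is linear in $(y,\psi)$, since $\mathscr M$ and $\mathscr N-\Gamma$ are linear. All coefficients are bounded. Hence the driver is uniformly Lipschitz in $(Y,\Psi)$, with bounded free term $R+C^\top\Sigma C+H^\top\Gamma H$, and the standard theory for BSDEs with Brownian and Poisson noise gives a unique solution in $S^2\times L^2_{\mathcal P}\times L^2_{\mathcal P}$.

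Boundedness of $Y$ will follow from the representation below, applied to $x^\top Y_tx$ together with the analogous argument for $-Y$ (the representation is linear and does not need sign assumptions). Alternatively, one can use a standard linear a priori estimate with $\zeta$ and $R$ bounded. Boundedness of $\Psi$ then follows from Remark~\ref{R:260709}, since $Y_-+\Psi_i$ inherits the bound on $Y$ at jump times.

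\emph{Representation.} Fix $t$ and $x\in\R^n$, and let $X$ solve the linear SDE
\[
X_s=x-\int_t^sC_rX_r\,\dr-\int_t^s(H_rX_{r-})\odot\dd N_r,\qquad s\in[t,T].
\]
This SDE has bounded coefficients, so all moments of $\sup_s|X_s|$ are finite. I then apply It\^o's formula to $X^\top YX$.
\begin{itemize}
\item In the drift, the terms $-X^\top(C^\top Y+YC)X$ coming from $\dd X$ cancel the linear part of $F$, which leaves $-X^\top(R+C^\top\Sigma C)X$.
\item At a jump of $N_i$, the change of $X^\top YX$ is $(X_--(HX_-)_ie_i)^\top(Y_-+\Psi_i)(X_--(HX_-)_ie_i)-X_-^\top Y_-X_-$.
\item Its compensator, summed over $i$, produces exactly $X_-^\top\bigl(\sum_i\theta_i\Psi_i\bigr)X_--2X_-^\top\mathscr M^\top HX_-+(HX_-)^\top(\mathscr N-\Gamma)(HX_-)$.
\end{itemize}
The last of these cancels $G$ up to $H^\top\Gamma H$. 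Taking conditional expectations, with the local martingales being true martingales by the moment bounds, gives
\[
x^\top Y_tx=\E_t\Bigl[X_T^\top\zeta X_T+\int_t^TX_s^\top\bigl(R_s+C_s^\top\Sigma_sC_s+H_s^\top\Gamma_sH_s\bigr)X_s\,\ds\Bigr].
\]
Under \eqref{ass:R} and $\zeta\ge0$ this is nonnegative, which proves (1).

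\emph{Uniform positivity via a deterministic shift.} The representation alone does not give (2), because $X_T$ may vanish; for instance, a jump kills coordinate $i$ when $(HX)_i=X_i$. Instead, for a deterministic $C^1$ function $a>0$ with $a(T)=\alpha$, I write $Y=\bar Y+a(t)\mathbf I_n$. Then $(\bar Y,Z,\Psi)$ solves a linear BSDE of the same form, with terminal value $\zeta-\alpha\mathbf I_n\ge0$ and
\[
\bar R=R+a'\mathbf I_n+a\bigl(C+C^\top\bigr)+a\bigl(H^\top\Theta_dH+H^\top\Theta_d+\Theta_dH\bigr),
\]
where $\Theta_d=\diag(\boldsymbol\theta)$. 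Using $H^\top\Theta_dH+H^\top\Theta_d+\Theta_dH=(H+\mathbf I_n)^\top\Theta_d(H+\mathbf I_n)-\Theta_d\ge-\theta_{\max}\mathbf I_n$, the proofs of (2) and (3) go as follows.
\begin{itemize}
\item For (2), bound $a(C+C^\top)\ge-2a\|C\|_\infty\mathbf I_n$ and take $a(t)=\alpha e^{-K(T-t)}$ with $K=2\|C\|_\infty+\theta_{\max}$. This choice gives $\bar R+C^\top\Sigma C+H^\top\Gamma H\ge R+C^\top\Sigma C+H^\top\Gamma H\ge0$. Part (1) applied to $\bar Y$ then yields $Y\ge\alpha e^{-KT}\mathbf I_n$.
\item For (3), complete the square: $C^\top\Sigma C+a(C+C^\top)=(C+a\Sigma^{-1})^\top\Sigma(C+a\Sigma^{-1})-a^2\Sigma^{-1}\ge-\tfrac{a^2}{\eta_\star}\mathbf I_n$. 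Together with $R\ge0$, $\Gamma\ge0$ and the $H$-bound above, this gives $\bar R+C^\top\Sigma C+H^\top\Gamma H\ge\bigl(a'-a^2/\eta_\star-\theta_{\max}a\bigr)\mathbf I_n$.
\item Choose $a$ solving the Riccati ODE $a'=a^2/\eta_\star+\theta_{\max}a$ with $a(T)=\alpha$. Via $1/a$ this ODE becomes linear, and its solution is exactly $a(t)=\eta_\star\theta_{\max}/\bigl((1+\eta_\star\theta_{\max}\alpha^{-1})e^{\theta_{\max}(T-t)}-1\bigr)$. Part (1) applied to $\bar Y$ gives the claimed bound.
\end{itemize}

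The main obstacle I expect is the jump bookkeeping in the It\^o computation. One has to verify that the compensated jump contribution of $X^\top YX$ matches $G$ exactly, including the diagonal structure of $\mathscr N$ and the use of $Y_-+\Psi_i$. One also needs a justification that the stochastic integrals are true martingales, using the boundedness of $\Psi$ and the moment bounds on $X$.
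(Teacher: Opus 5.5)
Your strategy is the paper's. It uses standard linear BSDE theory for well-posedness, a Peng-type representation of $Y_t$ along an auxiliary linear state for (1), and for (2)--(3) a deterministic scalar shift followed by a second application of (1). The shifts are $a=\alpha e^{-K(T-t)}$ and the solution of $a'=a^2/\eta_\star+\theta_{\max}a$, which is exactly the paper's $B$.

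\textbf{Sign error in the auxiliary state.} The state equation you wrote has the wrong signs, and with it the cancellations you claim do not occur.
Since $\dd Y=-\bigl(R+F(Y,C)+G(Y_-,\Psi,H)\bigr)\dt+\cdots$, the drift of $Y$ already contributes $-X^\top(C^\top Y+YC)X$. Your $\dd X=-CX\,\dt$ contributes a second copy of the same term, so the two add rather than cancel.
Likewise, your jump compensator (which you computed correctly for your $X$) contains $-2X_-^\top\mathscr M^\top HX_-$, and $-G$ contributes another $-2X_-^\top\mathscr M^\top HX_-$.
Already for $n=1$ with deterministic coefficients (so that $Z=\Psi=0$), your $X$ gives
\[
\frac{\dd}{\ds}\bigl(\E[X_s^2]\,Y_s\bigr)=-\E[X_s^2]\bigl(R+\Sigma C^2+\Gamma H^2+4(C+\theta H)Y_s\bigr),
\]
so the stated identity is false for it. The correct auxiliary state is
\[
X_s=x+\int_t^sC_rX_r\,\dr+\int_t^s(H_rX_{r-})\odot\dd N_r,
\]
that is, the inventory controlled by $\xi=-CX$ and $\beta=-HX_-$. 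This is the paper's matrix SDE $\dd X=(C+\diag(\boldsymbol\theta)H)X\,\ds+\sum_i\diag(e_i)HX_-\,\dd\widetilde N_i$ written with uncompensated jumps. With these signs, a jump of $N_i$ contributes $+2X_-^\top H^\top e_ie_i^\top(Y_-+\Psi_i)X_-$. Every cancellation in your list then goes through, and you obtain exactly the paper's representation. Your degenerate example becomes $(H_sX_{s-})_i=-X_{i,s-}$.

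\textbf{Two smaller points.} First, you propose to get boundedness of $Y$ from the representation, whereas the paper uses It\^o's formula for $|Y|^2$ plus a conditional Gr\"onwall argument. If you do this, you must verify the true-martingale property without using boundedness of $\Psi$, which is only derived afterwards. BDG and Cauchy--Schwarz suffice, using $Y\in S^2$, $Z,\Psi\in L^2_{\mathcal P}$ and fourth moments of $\sup_s|X_s|$. The conditional bound $\E_t[|X_s|^2]\le K|x|^2$ with deterministic $K$ then gives $|x^\top Y_tx|\le K'|x|^2$.

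Second, for (3) you complete the square with $\Sigma$ and use $\Sigma^{-1}\le\eta_\star^{-1}\mathbf I_n$. The paper instead splits off $\eta_\star\mathbf I_n$ and re-applies (1) with $\Sigma-\eta_\star\mathbf I_n$ in place of $\Sigma$. The two are equivalent. Your verifications of (2) and (3) are correct as written, since they only use the BSDE and not the auxiliary SDE.
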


\subsection{The penalized liquidation problem}\label{SS:3.3}

We first study the penalized liquidation problem introduced in
Subsection~\ref{SS:3.1}. Besides yielding a liquidation problem without
a terminal state constraint, the penalized problems provide the
approximations used in Subsection~\ref{SS:3.4} to construct the solution
of the BSDE with singular terminal condition.

For each $L>0$, the penalized control problem is associated with the
matrix-valued BSDE
\begin{equation}\label{BSDE-L}
	\left\{
	\begin{split}
		-\dd Y_t
		={}&
		\left(
		\Lambda_t
		-Y_t\Sigma_t^{-1}Y_t
		-\sM(Y_{t-},\Psi_t)^\top
		\sN_t(Y_{t-},\Psi_t)^{-1}
		\sM(Y_{t-},\Psi_t)
		\right)\dt
		-Z_t\dw_t
		-\sum_{i=1}^n\Psi_{i,t}\dd\widetilde N_{i,t},
		\\
		Y_T
		={}&
		L\mathbf I_n,
		\\
		Y
		\geq{}&
		0,
		\qquad
		Y_-+\Psi_i>0,
		\qquad i=1,\ldots,n.
	\end{split}
	\right.
\end{equation}

The following theorem establishes the well-posedness of
\eqref{BSDE-L} and simultaneously solves the penalized liquidation
problem.

\begin{theorem}\label{thm:BSDE-L}
Suppose Assumption~\ref{ass1} holds. Then the BSDE~\eqref{BSDE-L}
admits a unique solution
\[
(Y^L,Z^L,\Psi^L)
\in
S^2(\mathbb S^n)
\times L^2_{\mathcal P}(\mathbb S^n)
\times L^2_{\mathcal P}((\mathbb S^n)^n).
\]
Moreover,
\[
(Y^L,Z^L,\Psi^L)
\in
S^\infty(\mathbb S^n)
\times L^2_{\mathcal P}(\mathbb S^n)
\times L^\infty_{\mathcal P}((\mathbb S^n)^n),
\]
and
\begin{align}\label{eq:260708.1}
\frac{\eta_\star\theta_{\max}}
{(1+\eta_\star\theta_{\max}L^{-1})
e^{\theta_{\max}(T-t)}-1}
\,\mathbf I_n
\leq
Y^L_t
\leq
\left(
\frac{\eta^\star}{T-t}
+\frac{\lambda^\star}{3}(T-t)
\right)\mathbf I_n, \qquad t \in [0,T).
\end{align}

Let $\widehat X^L$ denote the state process corresponding to the
feedback control
\begin{equation}\label{control-L}
\widehat\xi^L
=
\Sigma^{-1}Y^L_-\widehat X^L_-,
\qquad
\widehat\beta^L
=
\sN(Y^L_-,\Psi^L)^{-1}
\sM(Y^L_-,\Psi^L)\widehat X^L_-.
\end{equation}
Then $(\widehat\xi^L,\widehat\beta^L)$ is optimal for the penalized
control problem, and its value function is
\begin{equation}\label{value-L}
V^L(x)=x^\top Y^L x.
\end{equation}
\end{theorem}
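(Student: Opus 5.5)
The plan is to construct the solution by truncation, with the linear BSDE of Proposition~\ref{P:260705} as the engine for every comparison. We then identify the solution with the value function by a completion-of-squares verification argument, which also yields uniqueness.

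\textbf{Step 1 (quadratic generator as an infimum).}
The driver of \eqref{BSDE-L} is a pointwise infimum of linear expressions. For $y\ge0$ with $y+\psi_i>0$, the matrix $\sN(y,\psi)$ is positive definite, and
\[
-y\Sigma^{-1}y=\min_{c}F(y,c),\qquad -\sM^\top\sN^{-1}\sM=\min_h G(y,\psi,h),
\]
in the Loewner order, with minimizers $c=-\Sigma^{-1}y$ and $h=-\sN^{-1}\sM$. The term $h^\top\Gamma h$ has been absorbed into $\sN$ in the definition of $G$.

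\textbf{Step 2 (truncated equation).}
Replace $Y$ in the nonlinear terms by a truncation $\pi_K(Y)$, projected onto $\{\epsilon\mathbf I_n\le y\le K\mathbf I_n\}$ in a Lipschitz way. The truncated generator is then Lipschitz in $(Y,\Psi)$ for bounded $\Psi$, and standard results for Lipschitz BSDEs with jumps (\cite{Barles1997,Becherer2006}) give a unique solution.

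\textbf{Step 3 (bounds via comparison).}
To remove the truncation, show that the solution satisfies \eqref{eq:260708.1}.

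\emph{Upper bound.} Choose the deterministic controls $c_t=-(T-t)^{-1}$ acting like the deterministic TWAP strategy. Concretely, compare with $\bar Y$ solving the linear equation with $C=-\frac1{T-t}\mathbf I_n$ (regularized near $T$) and $H=0$, started from $L\mathbf I_n$. The difference $\bar Y-Y$ solves a linear BSDE of type \eqref{BSDE-Linear} with nonnegative $R$, by the infimum property of Step~1. Part~\eqref{Lemma-L.1} of Proposition~\ref{P:260705} then gives $Y\le\bar Y$. An explicit computation of the deterministic TWAP cost gives
\[
\bar Y\le \Bigl(\frac{\eta^\star}{T-t}+\frac{\lambda^\star}{3}(T-t)\Bigr)\mathbf I_n .
\]
More precisely, $\bar Y$ is dominated by the cost of the strategy $\xi=x/(T-t)$ with $\beta=0$, which is at most $\eta^\star/(T-t)+\lambda^\star(T-t)/3$. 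The penalty $L$ does not enter because $X_T=0$ under this strategy.

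\emph{Lower bound.} Linearize with $C=-\Sigma^{-1}Y$ and $H=-\sN^{-1}\sM$ at the solution itself. This writes $Y$ as the solution of \eqref{BSDE-Linear} with $R=\Lambda\ge0$, and
\[
R+C^\top\Sigma C+H^\top\Gamma H\ge0 .
\]
Part~\eqref{Lemma-L.3} with $\alpha=L$ then gives the explicit lower bound. The jump positivity $Y_-+\Psi_i>0$ follows from Remark~\ref{R:260709}.

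\emph{Removing the truncation.} Choose $K$ above the upper bound on a small interval and $\epsilon$ below the lower bound. Then the truncation is inactive, so the truncated solution solves \eqref{BSDE-L}. Boundedness of $\Psi$ also follows from Remark~\ref{R:260709}.

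\textbf{Step 4 (verification and uniqueness).}
For any admissible $(\xi,\beta)$, apply It\^o's formula to $X^\top Y X$ for an arbitrary solution $Y$, and complete the squares using Step~1. This gives
\[
J^L_t(\xi,\beta;x)=x^\top Y_tx+\E_t\!\int_t^T\Bigl[(\xi-\Sigma^{-1}Y_-X_-)^\top\Sigma(\cdots)+(\beta-\sN^{-1}\sM X_-)^\top\sN(\cdots)\Bigr]\ds ,
\]
where each $(\cdots)$ repeats the preceding bracket. The jump contributions arise from
\[
(X_--\beta\odot e_i)^\top(Y_-+\Psi_i)(X_--\beta\odot e_i)
\]
at jumps of $N_i$, since only coordinate $i$ of $\beta$ is used at such a jump. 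Collecting these terms produces the combination $\sN,\sM$.

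The local martingale terms are true martingales because $Y$ and $\Psi$ are bounded and the controls are square integrable. The feedback \eqref{control-L} is well defined and square integrable, because it is a linear SDE with bounded coefficients. Optimality and $V^L=x^\top Y^Lx$ follow. Uniqueness follows as well: any solution in $S^2$ satisfying the sign constraints has $x^\top Yx=V^L(x)$ for all $x$, so $Y$ is determined, and $\Psi$ is then determined by Remark~\ref{R:260709}. One caveat is that a priori an $S^2$ solution need not be bounded. For this, first apply the comparison of Step~3 to an arbitrary solution; the upper bound uses only $Y\ge0$ and the lower bound uses only the linearization, so both extend.

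\textbf{Main obstacle.} The main obstacle is Step~3 in the jump setting. Justifying the comparison requires that the linearized coefficients $C,H$ are bounded, so that Proposition~\ref{P:260705} applies. This in turn needs a priori bounds on $\Psi$ and on $\sN^{-1}$, which is circular. I would break the circularity by working first with the truncated equation, where these bounds hold by construction.
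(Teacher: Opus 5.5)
\paragraph{The gap: the uniform positive-definite lower bound.}
This bound is the gap in your argument, and it matters most in the uniqueness step. There you say the lower bound ``uses only the linearization'' and so extends to an arbitrary solution $(Y',Z',\Psi')\in S^2\times L^2_{\mathcal P}\times L^2_{\mathcal P}$. It does not extend. Proposition~\ref{P:260705} needs $H\in L^\infty_{\mathcal P}$. But $H'=-\sN(Y'_-,\Psi')^{-1}\sM(Y'_-,\Psi')$ is bounded only if $Y'_-+\Psi'_i$ is uniformly positive definite, and \eqref{BSDE-L} only imposes $Y'_-+\Psi'_i>0$ pointwise. For instance, for $\Gamma=0$ the $(i,j)$ entry of $\sN^{-1}\sM$ is $(A_i)_{ij}/(A_i)_{ii}$ with $A_i=Y'_-+\Psi'_i$, and this can blow up.

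The same unboundedness blocks your verification route, since you cannot show that the feedback control of $Y'$ is square integrable and attains equality. It also blocks both directions of Corollary~\ref{C:260707}. This is exactly the circularity you flag as the main obstacle, but truncation cannot be applied to a solution that is already given.

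The missing idea is an a priori lower bound that does not require bounded linearization coefficients; this is the paper's Proposition~\ref{P:260706}:
\begin{itemize}
\item The inequality $pe_ie_i^\top p\le p_{ii}p$ gives $\sM^\top\sN^{-1}\sM\le\sum_i\theta_i(Y_-+\Psi_i)$, with no control of $\sN^{-1}$ needed.
\item It\^o's formula for $|(Y-B)^-|^2$, with the scalar $B$ solving $-\dot B=-\eta_\star^{-1}B^2-\Theta B$ and $B_T=L\mathbf I_n$, then yields $Y'\ge B$ by a backward Gr\"onwall argument.
\item Together with $Y'_t\le\E_t[L\mathbf I_n+\int_t^T\Lambda_s\,\dd s]$ and Remark~\ref{R:260709}, this makes all feedback coefficients bounded.
\end{itemize}
After that, either the paper's two-sided comparison or your verification identity $x^\top Y'_tx=V^L_t(x)$ closes the argument.

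\paragraph{The same defect in the existence part.}
The linearizations in your Step~3 are identities only for the untruncated generator. This applies both to $R=\Lambda$ with $C=-\Sigma^{-1}Y$ and $H=-\sN^{-1}\sM$, and to the infimum property behind your upper comparison. For the truncated generator built from $P=\pi_K(Y)$, condition \eqref{ass:R} can fail for every choice of $C$ and $H$. Take a point where $\Psi=0$ and $Y\ge0$ has a unit kernel vector $v$. Eigenvalue clipping gives $Pv=\epsilon v$, and then
$v^\top(R+C^\top\Sigma C+H^\top\Gamma H)v\le v^\top\Lambda v-\epsilon^2v^\top\Sigma^{-1}v$,
which is negative when $\Lambda=0$. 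So Proposition~\ref{P:260705}\ref{Lemma-L.3} cannot show that the lower truncation is inactive without assuming it. There is also no general comparison theorem for matrix-valued BSDEs with jumps to fall back on.

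A smaller point: the bound in \eqref{eq:260708.1} blows up at $T$, so it cannot serve as the cap $K$. The uniform cap comes from the conditional-expectation bound above.

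The paper avoids truncation altogether with the Newton-type iteration of Lemma~\ref{L:penalized-approximation}:
\begin{itemize}
\item Each iterate solves a linear BSDE whose coefficients are frozen at the previous iterate and are therefore bounded.
\item Each iterate is uniformly positive definite by Proposition~\ref{P:260705}\ref{Lemma-L.2}.
\item The iterates decrease by Corollary~\ref{C:260707}.
\item Proposition~\ref{P:260705}\ref{Lemma-L.3} gives a lower bound uniform in $k$.
\end{itemize}

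Your TWAP argument for the upper estimate is sound once a genuine bounded solution is available: apply Lemma~\ref{L:260707} on $[t,T]$ with the bounded rate $x/(T-t)$ and $\beta=0$. It is simpler than the paper's ODE comparison. Your regularized linear comparison, however, would reintroduce the penalty $L$, because then $X_T\neq0$.
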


\begin{remark}
When all coefficients are constant and $\Gamma=0$,
Theorem~\ref{thm:BSDE-L} reduces to
\cite[Theorem~3.5]{KS-2015}. In contrast to that result, we do not
require the penalization parameter $L$ to be sufficiently large. This
improvement is a consequence of the comparison principle based on
Proposition~\ref{P:260705}, which is stronger than the comparison
principle used in \cite[Theorem~B.1]{KS-2015}.
\end{remark}

\begin{remark}
The restriction to the terminal value \(L\mathbf I_n\) in Theorem~\ref{thm:BSDE-L} is imposed only
because the corresponding solutions are used to approximate the singular
terminal condition. More generally, if
\(\zeta\in L^\infty_{\mathcal F_T}(\mathbb S^n)\) satisfies
$
\alpha\mathbf I_n\leq\zeta\leq C\mathbf I_n
$
for some constants \(0<\alpha\leq C\), the same proof yields existence
and uniqueness for the BSDE with terminal condition \(Y_T=\zeta\), as
well as the corresponding verification result when \(L|X_T|^2\) is
replaced by \(X_T^\top\zeta X_T\). The lower bound in
\eqref{eq:260708.1} then holds with \(L\) replaced by \(\alpha\), while
the upper bound remains unchanged.
\end{remark}

\subsection{The constrained liquidation problem}\label{SS:3.4}

We now turn to the original constrained liquidation problem introduced
in Subsection~\ref{SS:3.1}. The terminal liquidation constraint gives
rise to the matrix-valued BSDE with singular terminal condition
\begin{equation}\label{BSDE}
	\left\{
	\begin{split}
		-\dd Y_t
		={}&
		\left(
		\Lambda_t
		-Y_t\Sigma_t^{-1}Y_t
		-\sM(Y_{t-},\Psi_t)^\top
		\sN_t(Y_{t-},\Psi_t)^{-1}
		\sM(Y_{t-},\Psi_t)
		\right)\dt
		-Z_t\dw_t
		-\sum_{i=1}^n\Psi_{i,t}\dd\widetilde N_{i,t},
		\\
		\lim_{t\nearrow T}\sigma_{\min}(Y_t)
		={}&
		\infty,
		\\
		Y
		\geq{}&
		0,
		\qquad
		Y_-+\Psi_i>0,
		\qquad i=1,\ldots,n.
	\end{split}
	\right.
\end{equation}
Because of the singular terminal condition, the equation is understood
locally on $[0,T)$ as follows.

\begin{definition}\label{def:solution}
A triple of processes $(Y,Z,\Psi)$ is called a solution to the BSDE
\eqref{BSDE} if the following conditions hold:
\begin{itemize}
\item
For each $u\in(0,T)$,
\[
(Y,Z,\Psi)
\in
S^2(0,u;\mathbb S^n)
\times L^2_{\mathcal P}(0,u;\mathbb S^n)
\times L^2_{\mathcal P}(0,u;(\mathbb S^n)^n).
\]

\item
For each $0\leq t\leq u<T$,
\begin{equation}\label{Riccatilocal}
	\begin{split}
	Y_t
	={}&
	Y_u
	+\int_t^u
	\left(
	\Lambda_s
	-Y_s\Sigma_s^{-1}Y_s
	-\sM(Y_{s-},\Psi_s)^\top
	\sN_s(Y_{s-},\Psi_s)^{-1}
	\sM(Y_{s-},\Psi_s)
	\right)\dd s
	\\
	&\quad
	-\int_t^u Z_s\dw_s
	-\sum_{i=1}^n\int_t^u
	\Psi_{i,s}\dd\widetilde N_{i,s}.
	\end{split}
\end{equation}
\item
$\displaystyle\lim_{t\nearrow T}\sigma_{\min}(Y_t)=\infty$.

\item
$Y\geq0$ and $Y_-+\Psi_i>0$ for each $i=1,\ldots,n$.
\end{itemize}
\end{definition}

The following theorem is our main result. It establishes existence and
uniqueness for the singular BSDE and, at the same time, characterizes
the optimal strategy and the value function of the constrained
liquidation problem.

\begin{theorem}\label{thm:BSDE}
Suppose Assumption~\ref{ass1} holds. Then the BSDE~\eqref{BSDE} admits
a unique solution $(Y,Z,\Psi)$ in the sense of
Definition~\ref{def:solution}. Moreover, for each $u \in (0,T)$,
\[
(Y,Z,\Psi)
\in
S^\infty(0,u;\mathbb S^n)
\times L^2_{\mathcal P}(0,u;\mathbb S^n)
\times L^\infty_{\mathcal P}(0,u;(\mathbb S^n)^n),
\]
and
\begin{equation}\label{estimate:Y}
\frac{\eta_\star\theta_{\max}}
{e^{\theta_{\max}(T-t)}-1}
\,\mathbf I_n
\leq
Y_t
\leq
\left(
\frac{\eta^\star}{T-t}
+\frac{\lambda^\star}{3}(T-t)
\right)\mathbf I_n, \qquad t \in [0,T).
\end{equation}

Let $\widehat X$ denote the state process corresponding to the feedback
control
\begin{equation}\label{eq:260708.2}
\widehat\xi
=
\Sigma^{-1}Y_-\widehat X_-,
\qquad
\widehat\beta
=
\sN(Y_-,\Psi)^{-1}
\sM(Y_-,\Psi)\widehat X_-.
\end{equation}
Then $(\widehat\xi,\widehat\beta)$ is admissible and optimal for the
constrained liquidation problem, and its value function is
\begin{equation*}
V(x)=x^\top Y x.
\end{equation*}
\end{theorem}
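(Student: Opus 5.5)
The proof splits into four stages: existence via monotone limits of the penalized solutions, the a priori bounds, minimality of the constructed solution, and uniqueness via the control problem.

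\textbf{Existence.} I would obtain the solution as the monotone limit of the penalized solutions $Y^L$ from Theorem~\ref{thm:BSDE-L}.

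First, I would show that $L\mapsto Y^L$ is nondecreasing in the Loewner order. Take $L<L'$ and consider the difference $\delta Y=Y^{L'}-Y^L$. This difference satisfies a linear BSDE of the form \eqref{BSDE-Linear}, obtained by Peng-type linearization of the generator around the two solutions. Concretely, I would use the identity
\[
-y\Sigma^{-1}y=\min_c\{c^\top\Sigma c+c^\top y+yc\}
\]
and the analogous identity for $-\sM^\top\sN^{-1}\sM=\min_h G(y,\psi,h)$. Evaluating at the minimizers for $Y^L$ gives one-sided inequalities. This produces a remainder $R\ge 0$ together with a structural condition of the type \eqref{ass:R}, which is admissible thanks to $Y^L_-+\Psi^L_i>0$ (this keeps $\sN$ positive definite). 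Proposition~\ref{P:260705}(1) then yields $\delta Y\ge 0$.

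The bounds \eqref{eq:260708.1} are uniform in $L$ on each $[0,u]$ with $u<T$. So $Y^L\nearrow Y$ pointwise, and $Y$ satisfies \eqref{estimate:Y}: the lower bound follows by letting $L\to\infty$, and it forces $\sigma_{\min}(Y_t)\to\infty$.

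Next, I would pass to the limit in the equation on $[0,u]$. Monotone bounded convergence gives $Y^L\to Y$ in $L^2$. Applying Itô's formula to $|Y^L-Y^{L'}|^2$ on $[0,u]$ shows that $(Z^L,\Psi^L)$ is Cauchy in $L^2_{\mathcal P}$. The generators converge because all quantities are uniformly bounded and $\sN$ is uniformly positive definite; the latter comes from Remark~\ref{R:260709} applied to the lower bound. The jump parts inherit the bounds and positivity $Y_-+\Psi_i>0$ through Remark~\ref{R:260709} and Lemma~\ref{L:260708}.

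\textbf{Verification.} For any admissible $(\xi,\beta)$, I would apply Itô's formula to $X^\top YX$ on $[t,u]$ and complete the squares using $F$ and $G$. This gives
\[
J_t\ge x^\top Y_tx-\E_t\bigl[X_u^\top Y_uX_u\bigr]+\E_t\Bigl[\int_t^u(\text{nonnegative squares})\Bigr].
\]
For the upper bound, I would use $Y\le Y^L$-type control of the growth: $Y_u\le(\eta^\star/(T-u)+c)\mathbf I_n$, combined with $X_T=0$ and Cauchy--Schwarz,
\[
|X_u|^2\le (T-u)\int_u^T|\xi|^2+\dots
\]
Together these show $\E[X_u^\top Y_uX_u]\to0$. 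The dark-pool part needs separate care: the jump contribution $\int_u^T\beta\odot dN$ has small second moment. This gives $V\ge x^\top Yx$.

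For optimality of the feedback \eqref{eq:260708.2}, $\widehat X$ solves a linear SDE. I would show that $\E_t[\widehat X_u^\top Y_u\widehat X_u]$ is nonincreasing, using $\Lambda\ge0$, and that $\widehat X_u\to0$. Admissibility follows from the lower bound $\sigma_{\min}(Y_u)\ge \eta_\star\theta_{\max}/(e^{\theta_{\max}(T-u)}-1)\to\infty$, since the quantity $\E[\widehat X_u^\top Y_u\widehat X_u]$ stays bounded. The integrability $\widehat\xi\in L^2$ follows from the cost bound. Hence $V=x^\top Yx$.

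\textbf{Uniqueness (main obstacle).} Let $\tilde Y$ be any solution in the sense of Definition~\ref{def:solution}.

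The verification argument above only uses the solution property and the upper bound. To apply it to $\tilde Y$, I therefore need the upper bound for an arbitrary solution. I would try to get it by comparing $\tilde Y$ on $[t,u]$ with an explicit deterministic supersolution $(\eta^\star/(T-t)+\lambda^\star(T-t)/3)\mathbf I_n$ blowing up at $T$, again via linearization and Proposition~\ref{P:260705}. Alternatively, I could bound $\tilde Y_t$ by the cost of a deterministic linear liquidation strategy, obtained from Itô's formula for $\tilde Y$ along that strategy. The second route avoids needing a terminal comparison.

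Given the upper bound, I would prove two inequalities. For minimality, I compare $Y^L$ with $\tilde Y$ on $[0,u]$ with terminal data $Y^L_u$ versus $\tilde Y_u$. Since $\sigma_{\min}(\tilde Y_u)\to\infty$ while $Y^L_u\le C_L$, we get $\tilde Y_u\ge Y^L_u$ for $u$ close to $T$; this needs the positive-definite divergence, uniform in $\omega$, which I expect to be the hardest point. A quantitative lower bound like \eqref{estimate:Y} for $\tilde Y$ should be derived first, using Proposition~\ref{P:260705}(3) on $[t,u]$ with $\alpha=\sigma_{\min}(\tilde Y_u)$ and letting $u\to T$. Comparison then gives $\tilde Y\ge Y^L$, hence $\tilde Y\ge Y$. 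Conversely, the verification inequality for $\tilde Y$ gives $x^\top\tilde Yx\le J(\widehat\xi,\widehat\beta)=x^\top Yx$. Hence $\tilde Y=Y$, and $(Z,\Psi)$ coincide by uniqueness of the martingale representation and Remark~\ref{R:260709}.
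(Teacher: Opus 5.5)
\textbf{Main gap: the lower bound for an arbitrary solution.} Your overall architecture matches the paper's: a monotone limit of $Y^L$; an upper bound for an arbitrary solution $\tilde Y$ via the cost of a linear liquidation strategy (exactly the paper's Step~2); minimality by comparison with $Y^L$; and the reverse inequality by running the verification inequality for $\tilde Y$ along the optimal feedback of $Y$. The gap is in the step you flag as hardest, namely the a priori lower bound for $\tilde Y$.

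Proposition~\ref{P:260705}\ref{Lemma-L.3} cannot be used with $\alpha=\sigma_{\min}(\tilde Y_u)$, for two reasons.
\begin{itemize}
\item It needs a deterministic constant $\alpha>0$. The singular terminal condition in Definition~\ref{def:solution} holds only pathwise, so a priori $\essinf_\omega\sigma_{\min}(\tilde Y_u)$ may vanish for every deterministic $u<T$. Letting $u\nearrow T$ then gives nothing.
\item More fundamentally, to cast the equation for $\tilde Y$ in the form \eqref{BSDE-Linear} you must take $C=\widehat C(\tilde Y)$ and $H=\widehat H(\tilde Y_-,\tilde\Psi)$, and Proposition~\ref{P:260705} requires $H\in L^\infty_{\mathcal P}$. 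Boundedness of $\sN(\tilde Y_-,\tilde\Psi)^{-1}\sM(\tilde Y_-,\tilde\Psi)$ requires $\tilde Y_-+\tilde\Psi_i$ to be uniformly positive definite. The definition only gives $\tilde Y_-+\tilde\Psi_i>0$ pointwise, and uniform positivity is precisely what you are trying to prove. The argument is circular.
\end{itemize}
The same requirement reappears when you compare $Y^L$ with $\tilde Y$, because Corollary~\ref{C:260707} needs bounded feedback coefficients for both solutions. So your minimality step is not justified either.

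\textbf{The missing ingredient.} What you need is the nonlinear lower-bound principle of Proposition~\ref{P:260706}, which uses no uniform positivity.
\begin{itemize}
\item From $\Gamma\ge0$ and $pe_ie_i^\top p\le p_{ii}p$ one gets $\sM^\top\sN^{-1}\sM\le\sum_i\theta_i(\tilde Y_-+\tilde\Psi_i)$.
\item Combined with the jump compensator, this bounds the raw-jump drift of $\tilde Y-B$ from below by $\Lambda-\tilde Y\Sigma^{-1}\tilde Y+\eta_\star^{-1}B^2-\Theta(\tilde Y-B)$. Here $B$ is the explicit scalar solution of $-\dot B=-\eta_\star^{-1}B^2-\Theta B$ with $B_T=L\mathbf I_n$.
\item Apply It\^o's formula to $\varphi(A)=|A^-|^2$ evaluated at $\tilde Y-B$, up to the stopping time $\tau=\inf\{s\ge u:\sigma_{\min}(\tilde Y_s)\ge L\}$, where $\varphi$ vanishes. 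A backward Gr\"onwall argument then gives $\tilde Y_t\ge \eta_\star\Theta\big/\bigl((1+\eta_\star\Theta L^{-1})e^{\Theta(T-t)}-1\bigr)\mathbf I_n$ on $[0,u]$.
\item Letting $L\nearrow\infty$ yields the \emph{deterministic} bound $\eta_\star\Theta/(e^{\Theta(T-t)}-1)\,\mathbf I_n$. Note that it has $\Theta$, not $\theta_{\max}$; \eqref{estimate:Y} transfers to $\tilde Y$ only after uniqueness.
\end{itemize}
This deterministic blow-up lets you pick a deterministic $u'$ with $Y^L_{u'}\le(L+\lambda^\star(T-u'))\mathbf I_n\le\tilde Y_{u'}$. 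Via Remark~\ref{R:260709}, it also makes the feedback coefficients of $\tilde Y$ bounded on $[0,u']$, so Corollary~\ref{C:260707} applies.

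\textbf{Two smaller points.}
\begin{itemize}
\item In your verification step, a ``small second moment'' of $\int_u^T\beta\odot\dd N$ is not enough; you need $\E_t|X_u|^2=o(T-u)$. This follows from $X_u=\E_u\bigl[\int_u^T(\xi+\boldsymbol\theta\odot\beta)\ds\bigr]$. Alternatively, avoid it as the paper does via $V\ge V^L=x^\top Y^Lx$.
\item When $\Gamma$ may be degenerate, $\widehat\beta\in L^2$ does not follow from the cost bound. It requires the boundedness of $\sN^{-1}\sM$ from Lemma~\ref{lem:bound-NM}.
\end{itemize}
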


\subsection{The constant-coefficient case}
\label{SS:3.5}

We conclude this section by considering the case of constant
coefficients. In this setting, the unique solution of the singular BSDE
is deterministic and its martingale integrands vanish, so that the BSDE
reduces to a matrix-valued ODE with singular terminal condition. The
following result also shows that positive definiteness is automatic for
every solution of this ODE. This reduction and the terminal asymptotics recorded below will be used
in Section~\ref{sec:financial-analysis}.

For \(y\in\mathbb S^n\), set
\[
\mathcal M(y)
=
\sM(y,0)
=
\diag(\boldsymbol{\theta})y,
\qquad
\mathcal N(y)
=
\sN(y,0)
=
\Gamma
+
\diag\bigl(
\theta_1y_{11},
\ldots,
\theta_ny_{nn}
\bigr),
\]
and define
\begin{equation}\label{def:Q}
\mathcal Q(y)
=
\begin{cases}
\mathcal M(y)^\top
\mathcal N(y)^{-1}
\mathcal M(y),
&\text{if }y>0,\\
0,
&\text{otherwise}.
\end{cases}
\end{equation}
In particular, \(\mathcal Q\) is defined on all of \(\mathbb S^n\),
without any a priori invertibility assumption on \(\mathcal N(y)\).

\begin{proposition}\label{P:deterministic-reduction}
Suppose that Assumption~\ref{ass1} holds and the coefficients
\(\Lambda\), \(\Sigma\), and \(\Gamma\) are constant. Then the unique
solution \((Y,Z,\Psi)\) of the BSDE~\eqref{BSDE} satisfies
\[
Y\in C^1([0,T);\mathbb S^n),
\qquad
Y\text{ is deterministic},
\qquad
Z=0,
\qquad
\Psi=0.
\]
Moreover, \(Y\) is the unique function
\(\overline Y\in C^1([0,T);\mathbb S^n)\) satisfying
\begin{equation}\label{ODE}
\left\{
\begin{aligned}
-\dot{\overline Y}_t
={}&
\Lambda
-\overline Y_t\Sigma^{-1}\overline Y_t
-\mathcal Q(\overline Y_t),
\qquad t\in[0,T),\\
\lim_{t\nearrow T}
\sigma_{\min}(\overline Y_t)
={}&
\infty.
\end{aligned}
\right.
\end{equation}
Every function \(\overline Y\) satisfying \eqref{ODE} is positive
definite, i.e.,
\[
\overline Y_t>0,
\qquad
t\in[0,T).
\]
Consequently,
\[
\mathcal Q(\overline Y_t)
=
\mathcal M(\overline Y_t)^\top
\mathcal N(\overline Y_t)^{-1}
\mathcal M(\overline Y_t),
\qquad
t\in[0,T),
\]
and hence \eqref{ODE} may equivalently be written as
\[
\left\{
\begin{aligned}
-\dot{\overline Y}_t
={}&
\Lambda
-\overline Y_t\Sigma^{-1}\overline Y_t
-\mathcal M(\overline Y_t)^\top
\mathcal N(\overline Y_t)^{-1}
\mathcal M(\overline Y_t),
\qquad t\in[0,T),\\
\lim_{t\nearrow T}
\sigma_{\min}(\overline Y_t)
={}&
\infty.
\end{aligned}
\right.
\]
In particular, no separate nonnegativity condition needs to be imposed.
\end{proposition}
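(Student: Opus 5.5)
The plan is to reduce everything to the uniqueness statement in Theorem~\ref{thm:BSDE}. I would first build a deterministic solution of \eqref{ODE}, then show that every solution of \eqref{ODE} is automatically positive definite, so that it defines a solution of the BSDE~\eqref{BSDE} in the sense of Definition~\ref{def:solution} with \(Z=0\) and \(\Psi=0\). Uniqueness for \eqref{BSDE} then gives both conclusions at once.

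\emph{Step 1 (existence of a deterministic solution).} For \(L>0\), I would solve the terminal-value matrix ODE
\[
-\dot y^L=\Lambda-y^L\Sigma^{-1}y^L-\mathcal M(y^L)^\top\mathcal N(y^L)^{-1}\mathcal M(y^L),\qquad y^L_T=L\mathbf I_n.
\]
The right-hand side is locally Lipschitz on \(\{y>0\}\), so a solution exists on a maximal interval \((t_L,T]\).
\begin{itemize}
\item On that interval, \((y^L,0,0)\) solves \eqref{BSDE-L}, because with \(\Psi=0\) one has \(\sM(y,0)=\mathcal M(y)\) and \(\sN(y,0)=\mathcal N(y)\).
\item Since every quantity is deterministic, the time-shifted versions of Proposition~\ref{P:260705}\ref{Lemma-L.3} and of the upper bound in \eqref{eq:260708.1} show that \(y^L\) stays between two positive multiples of \(\mathbf I_n\) that are uniform on compacts. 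This rules out \(t_L\ge0\).
\item By uniqueness in Theorem~\ref{thm:BSDE-L}, \(Y^L=y^L\) is deterministic. The comparison principle gives that \(y^L\) is increasing in \(L\) (equivalently, \(x^\top y^Lx=V^L(x)\) is increasing in \(L\)).
\item By \eqref{eq:260708.1}, the limit \(\bar y=\lim_{L\to\infty}y^L\) exists and satisfies the bounds \eqref{estimate:Y}. In particular \(\sigma_{\min}(\bar y_t)\to\infty\) as \(t\nearrow T\), and \(\bar y\) is locally bounded away from \(0\) on \([0,T)\).
\item Passing to the limit in the integral form of the ODE on \([0,u]\), \(u<T\), by dominated convergence shows that \(\bar y\) is \(C^1\) and solves \eqref{ODE}.
\end{itemize}
Hence \((\bar y,0,0)\) satisfies Definition~\ref{def:solution}, and Theorem~\ref{thm:BSDE} yields \(Y=\bar y\), \(Z=0\), \(\Psi=0\).

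\emph{Step 2 (positivity of an arbitrary solution of \eqref{ODE}).} Let \(\overline Y\in C^1([0,T);\mathbb S^n)\) solve \eqref{ODE}, with \(\mathcal Q\) as in \eqref{def:Q}. Since \(\sigma_{\min}(\overline Y_t)\to\infty\), we have \(\overline Y>0\) near \(T\). Let \(t^*=\inf\{t: \overline Y_s>0 \text{ for } s\in(t,T)\}\) and suppose \(t^*>0\), or \(t^*=0\) with \(\overline Y_0\) singular. On \((t^*,T)\), \(\mathcal Q(\overline Y)=\mathcal M^\top\mathcal N^{-1}\mathcal M\).

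Fix \(t^*<s<u<T\). On \([s,u]\) I would linearize as in Proposition~\ref{P:260705}, taking
\[
C=-\Sigma^{-1}\overline Y,\qquad H=-\mathcal N(\overline Y)^{-1}\mathcal M(\overline Y),\qquad R=\Lambda .
\]
Direct computation gives \(F(\overline Y,C)=-\overline Y\Sigma^{-1}\overline Y\) and \(G(\overline Y,0,H)=-\mathcal M^\top\mathcal N^{-1}\mathcal M\), so \(\overline Y\) solves the linear equation \eqref{BSDE-Linear} on \([s,u]\) with terminal value \(\overline Y_u\ge\alpha\mathbf I_n\). Moreover \(R\ge0\), \(\Gamma\ge0\) and \(\Sigma\ge\eta_\star\mathbf I_n\), so \eqref{ass:R} holds. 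The coefficients \(C,H\) are bounded on \([s,u]\), which follows from continuity and \(\overline Y>0\) there.

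Proposition~\ref{P:260705}\ref{Lemma-L.3}, applied with horizon \(u\), gives
\[
\overline Y_t\ge \frac{\eta_\star\theta_{\max}}{(1+\eta_\star\theta_{\max}\alpha^{-1})e^{\theta_{\max}(u-t)}-1}\,\mathbf I_n,\qquad t\in[s,u].
\]
This bound is independent of \(s\). Letting \(s\downarrow t^*\) and using continuity yields \(\overline Y_{t^*}>0\), hence \(\overline Y>0\) on a neighbourhood of \(t^*\). This contradicts the definition of \(t^*\), so \(\overline Y>0\) on \([0,T)\), and the equivalent form of the ODE follows.

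\emph{Step 3 (uniqueness for \eqref{ODE}).} By Step 2, \((\overline Y,0,0)\) satisfies every item of Definition~\ref{def:solution}: in particular \(Y_-+\Psi_i=\overline Y>0\), and the local integrability conditions are trivial. Theorem~\ref{thm:BSDE} then gives \(\overline Y=Y\).

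\emph{Main obstacle.} The delicate point is Step 2. The lower bound of Proposition~\ref{P:260705}\ref{Lemma-L.3} is stated on \([0,T]\) with bounded coefficients, so it must be applied on the shifted horizon \([s,u]\), where \(C\) and \(H\) are bounded only because \(\overline Y>0\) there. The resulting bound then has to be checked to be uniform in \(s\), so that it survives the limit \(s\downarrow t^*\).
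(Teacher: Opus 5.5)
Your proof is correct. Steps~2 and~3 are essentially the paper's argument: linearize with $C=\widehat C(\overline Y)$ and $H=\widehat H(\overline Y,0)$ on a compact subinterval where $\overline Y>0$, apply Proposition~\ref{P:260705}\ref{Lemma-L.3} to get a lower bound that does not depend on the left endpoint, pass to the limit by continuity, and conclude with the uniqueness in Theorem~\ref{thm:BSDE}.

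Step~1 takes a different route.

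\emph{The paper's route.} It proves that $Y^L$ is deterministic by induction along the Picard scheme of Lemma~\ref{L:penalized-approximation}. Each linear iterate has deterministic data, so a deterministic triple $(y,0,0)$ solves it, and uniqueness for linear BSDEs identifies the iterate with it. Determinism then passes to $Y^L$ through \eqref{eq:penalized-iteration-convergence}. Finally, Lemma~\ref{L:singular-approximation} and \eqref{convergence-Z-Psi} give directly that $Y$ is deterministic and $Z=\Psi=0$.

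\emph{Your route.} You solve the nonlinear terminal-value Riccati ODE on its maximal interval inside $\{y>0\}$. You rule out an exit from the positive-definite cone using a priori bounds: the shifted Proposition~\ref{P:260705}\ref{Lemma-L.3} from below, and from above the trivial bound $y^L_t\le (L+\lambda^\star(T-t))\mathbf I_n$, which suffices. You then identify $y^L$ with $Y^L$ via uniqueness in Theorem~\ref{thm:BSDE-L}, and identify the singular BSDE solution with $(\bar y,0,0)$ via uniqueness in Theorem~\ref{thm:BSDE}.

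\emph{Comparison.} The paper's approach avoids any nonlinear ODE continuation argument and simply recycles constructions already proved. Yours is closer to classical Riccati-ODE theory and produces a classical solution of the singular ODE essentially at the deterministic level. The costs are the local Lipschitz and continuation argument, together with two appeals to the BSDE uniqueness theorems.
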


\begin{remark}
The uniqueness result in
Proposition~\ref{P:deterministic-reduction} is novel already in the
special case \(\Gamma=0\). In particular, \cite{KS-2015} only
establishes that the solution of the corresponding matrix-valued ODE
with singular terminal condition is a principal solution.
Proposition~\ref{P:deterministic-reduction} shows that this principal
solution is in fact unique.
\end{remark}

We next record the terminal asymptotics of the solution.

\begin{proposition}\label{P:terminal-asymptotics}
Suppose that Assumption~\ref{ass1} holds and the coefficients
\(\Lambda\), \(\Sigma\), and \(\Gamma\) are constant. Set
\[
K
=
\diag\left(
\frac{\theta_1}{\Sigma_{11}},
\ldots,
\frac{\theta_n}{\Sigma_{nn}}
\right).
\]
Then
\[
\lim_{t\nearrow T}(T-t)Y_t=\Sigma
\]
and
\[
\lim_{t\nearrow T}
\left(
Y_t-\frac{\Sigma}{T-t}
\right)
=
-\frac12\Sigma K\Sigma.
\]
If, in addition,
\(\Sigma=\diag(\eta_1,\ldots,\eta_n)\), then, for every \(i\neq j\),
\begin{align} \label{260807.6}
\lim_{t\nearrow T}\frac{Y_{ij,t}}{T-t}
=
\frac{\Lambda_{ij}+\Gamma_{ij}}{3}.
\end{align}
\end{proposition}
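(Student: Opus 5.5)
The plan is to work with the inverse $P_t=Y_t^{-1}$, which is well defined because $Y_t>0$ by Proposition~\ref{P:deterministic-reduction}. I would first extract the two leading orders of $P$ and then invert. Differentiating $P=Y^{-1}$ and using \eqref{ODE} gives
\[
\dot P_t=-P_t\dot Y_tP_t=P_t\Lambda P_t-\Sigma^{-1}-P_t\mathcal Q(Y_t)P_t .
\]
Since $\mathcal M(Y)=\diag(\boldsymbol\theta)Y$, the last term simplifies to $P\mathcal Q(Y)P=\diag(\boldsymbol\theta)\mathcal N(Y)^{-1}\diag(\boldsymbol\theta)$. The bounds \eqref{estimate:Y} give $c(T-t)^{-1}\mathbf I_n\le Y_t\le C(T-t)^{-1}\mathbf I_n$ near $T$. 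Consequently $|P_t|=O(T-t)$ and $P_t\to0$.

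Moreover $\mathcal N(Y)\ge\diag(\theta_iY_{ii})\ge c'(T-t)^{-1}\mathbf I_n$, because $\Gamma\ge0$. Hence $P\mathcal QP=O(T-t)$ and $P\Lambda P=O((T-t)^2)$. Integrating from $t$ to $T$ with $P_T=0$ yields the first-order expansion
\[
P_t=\Sigma^{-1}(T-t)+O((T-t)^2).
\]
This already gives $(T-t)Y_t\to\Sigma$.

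The second step is a bootstrap. From the first step, $Y_{ii,t}=\Sigma_{ii}/(T-t)+o((T-t)^{-1})$. Therefore
\[
\mathcal N(Y_t)=(T-t)^{-1}\diag(\theta_i\Sigma_{ii})\bigl(\mathbf I_n+o(1)\bigr)
\quad\text{and}\quad
P\mathcal QP=(T-t)K+o(T-t).
\]
Integrating again gives
\[
P_t=(T-t)\Sigma^{-1}+\tfrac12(T-t)^2K+o((T-t)^2)=(T-t)\Sigma^{-1}\bigl[\mathbf I_n+\tfrac12(T-t)\Sigma K+o(T-t)\bigr].
\]
Inverting by a Neumann series, $Y_t=(T-t)^{-1}\bigl[\mathbf I_n-\tfrac12(T-t)\Sigma K+o(T-t)\bigr]\Sigma$. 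This gives $Y_t-\Sigma/(T-t)\to-\frac12\Sigma K\Sigma$.

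For \eqref{260807.6}, take $\Sigma=\diag(\eta_i)$. Then $\Sigma/(T-t)$ and $\Sigma K\Sigma$ are diagonal, so the off-diagonal part $E$ of $Y$ satisfies $E_t\to0$ by the previous step. I would write $s=T-t$ and $Y=D+E$ with $D$ diagonal, $D_{ii}=\eta_i/s+O(1)$, and read off the $(i,j)$ entry of \eqref{ODE}.

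The quadratic term contributes $(Y\Sigma^{-1}Y)_{ij}=E_{ij}(D_{ii}/\eta_i+D_{jj}/\eta_j)+O(|E|^2)=2E_{ij}/s+O(|E|)$. For $\mathcal Q$, I expand
\[
\mathcal N^{-1}=\mathcal D^{-1}-\mathcal D^{-1}\Gamma\mathcal D^{-1}+O(s^3),\qquad \mathcal D=\diag(\theta_iY_{ii}).
\]
The first piece gives $\bigl(Y\diag(\theta_i/Y_{ii})Y\bigr)_{ij}=(\theta_i+\theta_j)E_{ij}+O(|E|^2s)$, which is $O(|E|)$. The second piece gives $-(\diag(Y_{ii}^{-1})\ldots)$, which after conjugation by $Y\diag(\boldsymbol\theta)$ equals $-\Gamma_{ij}+O(s)+O(|E|)$. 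Altogether,
\[
\frac{\dd E_{ij}}{\dd s}=\Lambda_{ij}+\Gamma_{ij}-\frac{2E_{ij}}{s}+o(1).
\]
Multiplying by $s^2$ gives $(s^2E_{ij})'=s^2(\Lambda_{ij}+\Gamma_{ij}+o(1))$. Since $s^2E_{ij}\to0$, integrating yields $E_{ij}=\frac{s}{3}(\Lambda_{ij}+\Gamma_{ij})+o(s)$, which is \eqref{260807.6}.

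The main obstacle is the error bookkeeping in the expansion of $\mathcal N(Y)^{-1}$ and of $\mathcal Q_{ij}$. One must check that every cross term is either $o(1)$ or $O(|E|)=o(1)$, and in particular that the $\Gamma$-correction produces exactly $-\Gamma_{ij}$ at order one. I would handle this by first recording $Y=D+E$ with explicit orders $D_{ii}^{-1}=s/\eta_i+O(s^2)$ and $|E|=o(1)$, and then expanding each product entrywise.
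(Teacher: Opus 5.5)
Your proposal is correct. For the first two limits you follow the paper's argument: pass to $P=Y^{-1}$, use $P\mathcal Q(Y)P=\diag(\boldsymbol\theta)\mathcal N(Y)^{-1}\diag(\boldsymbol\theta)$, integrate $\dot P$ twice backward from $T$, and invert. The only difference there is that you import the quantitative bounds \eqref{estimate:Y} to get $O$-rates, whereas the paper needs only $R_t\to0$ and $\mathcal N(Y_t)^{-1}\to0$, which follow from the singular terminal condition.

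For the off-diagonal limit \eqref{260807.6} your route is genuinely different. The paper stays with $R=Y^{-1}$. Since $(\Sigma^{-1})_{ij}=0$, the $(i,j)$ entry of the $R$-equation has no singular coefficient. The paper extracts $(\mathcal N(Y_t)^{-1})_{ij}\sim-\Gamma_{ij}(T-t)^2/(\theta_i\theta_j\eta_i\eta_j)$ from $\mathcal N\mathcal N^{-1}=\mathbf I_n$. It then integrates $\dot R_{ij}$ to obtain $R_{ij}$ at order $(T-t)^3$, and returns to $Y_{ij}$ via the row identity $YR=\mathbf I_n$.

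You instead work on the $(i,j)$ entry of the Riccati equation for $Y$ itself. There the term $Y\Sigma^{-1}Y$ produces the singular damping $-2E_{ij}/s$, which you remove with the integrating factor $s^2$. This costs more error bookkeeping: the Neumann expansion of $\mathcal N^{-1}$, and the check that all cross terms are $O(s)+O(|E|)$, which you carry out correctly. It also uses $E\to0$, which you rightly take from the second limit. In return, it avoids the final inversion step and shows the mechanism directly: a linear equation with forcing $\Lambda_{ij}+\Gamma_{ij}$ and damping $2/s$, whose integrating factor produces the constant $1/3$. It is essentially the $n$-asset version of the off-diagonal equation \eqref{eq:Y12} used in Section~\ref{sec:financial-analysis}, where $g\sim 2/(T-t)$ and $S\to1$.
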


\begin{remark}
The terminal asymptotics in Proposition~\ref{P:terminal-asymptotics}
have a direct financial interpretation. Since
\(V_t(x)=x^\top Y_t x\), the leading-order liquidation cost satisfies
\[
\lim_{t\nearrow T}(T-t)V_t(x)=x^\top\Sigma x.
\]
Thus, sufficiently close to the liquidation horizon, the dominant cost
is determined solely by temporary price impact: the remaining inventory
has to be liquidated increasingly rapidly through the exchange. The
next-order correction is
\[
\lim_{t\nearrow T}
\left(
V_t(x)-\frac{x^\top\Sigma x}{T-t}
\right)
=
-\frac12x^\top\Sigma K\Sigma x.
\]
It depends only on temporary price impact and the dark-pool execution
intensities. Since \(\Sigma K\Sigma>0\), this correction is
strictly negative for every nonzero portfolio, reflecting the reduction in liquidation costs generated by
the possibility of dark-pool execution. In particular, neither market risk nor adverse-selection costs enter
the first two orders.

If \(\Sigma\) is not diagonal, then \(\Sigma K\Sigma\) need not be
diagonal. Thus, cross-asset temporary impact affects not only the
leading singular term but also the next, order-one term in the
terminal expansion. By contrast, if
\(\Sigma=\diag(\eta_1,\ldots,\eta_n)\), this correction is diagonal,
and the first cross-asset contribution appears at the next order, as described in \eqref{260807.6}.
Hence cross-asset market risk and cross-asset spillover in
adverse-selection costs enter additively and at the same asymptotic
order near the terminal time. In the two-asset specification of
Section~\ref{sec:financial-analysis}, with $\eta_{12}=0$ and using the notation introduced there, this becomes
\[
\lim_{t\nearrow T}\frac{Y_{12,t}}{T-t}
=
\frac{\rho\sigma_1\sigma_2+\gamma_{12}}{3}.
\]
Consequently, whenever
\(\rho\sigma_1\sigma_2+\gamma_{12}\neq0\), the sign of \(Y_{12,t}\)
for $t$ sufficiently close to \(T\) is determined by the competition between
the correlation and spillover channels. In particular, when these
channels have opposite signs and
\(|\gamma_{12}|>|\rho|\sigma_1\sigma_2\), the spillover channel
dominates near the liquidation horizon. We refer to Section~\ref{sec:financial-analysis} for a more detailed analysis of the competition between the correlation and spillover channels and its implications for portfolio diversification.
\end{remark}

\section{Proofs of the main results}\label{S:main_proofs}

In this section, we prove the results stated in Section~\ref{S:main_results}. In Subsection~\ref{SS:4.1}, we prove the well-posedness and positivity result for linear matrix-valued BSDEs stated in Proposition~\ref{P:260705}. Subsection~\ref{SS:4.2} first derives a comparison principle for the nonlinear BSDEs and then establishes the key a priori lower-bound result in Proposition~\ref{P:260706}. This quantitative estimate applies both on the full finite horizon and to solutions stopped before the singular terminal time, and is crucial for the uniqueness arguments that follow. In Subsection~\ref{SS:4.3}, we prove Theorem~\ref{thm:BSDE-L} and solve the penalized liquidation problem. 
In Subsection~\ref{SS:4.4}, we prove
Theorem~\ref{thm:BSDE}, including existence and uniqueness for the BSDE
with singular terminal condition and the verification of the optimal
strategy for the constrained liquidation problem. Finally, in Subsection~\ref{SS:4.5}, we prove the constant-coefficient
reduction and terminal asymptotics stated in
Propositions~\ref{P:deterministic-reduction} and
\ref{P:terminal-asymptotics}.

\subsection{Proof of the linear BSDE result}\label{SS:4.1}

\begin{proof}[Proof of Proposition~\ref{P:260705}]
Thanks to \cite[Lemma~2.4]{TL} and the Burkholder--Davis--Gundy inequality, there exists a unique solution
$
(Y,Z,\Psi)
\in
S^2(\mathbb S^n)
\times L^2_{\mathcal P}(\mathbb S^n)
\times L^2_{\mathcal P}((\mathbb S^n)^n)$
to \eqref{BSDE-Linear}.

We first show that \(Y\), and hence \(\Psi\) by Remark~\ref{R:260709} as the difference of two uniformly bounded processes, is uniformly bounded. Applying It\^o's formula to \(|Y_t|^2\) yields, for each \(t\in[0,T]\),
\begin{align*}
|Y_t|^2
+\int_t^T|Z_s|^2\,\ds
+\sum_{i=1}^n\int_t^T|\Psi_{i,s}|^2\,\dd N_{i,s}
=&~
|\zeta|^2
+2\int_t^T
\left\langle
Y_s,
R_s+F_s(Y_s,C_s)+G_s(Y_s,\Psi_s,H_s)
\right\rangle\ds
\\
&~
-2\int_t^T\langle Y_s,Z_s\rangle\,\dd W_s
-2\sum_{i=1}^n\int_t^T
\langle Y_{s-},\Psi_{i,s}\rangle\,\dd\widetilde N_{i,s}.
\end{align*}
By the Burkholder--Davis--Gundy inequality and the square-integrability of \((Y,Z,\Psi)\), the last two stochastic integrals are true martingales. Taking conditional expectations, we find a constant \(\kappa>0\) such that
\begin{align*}
|Y_t|^2
&+\E_t\left[
\int_t^T
\left(
|Z_s|^2+\sum_{i=1}^n|\Psi_{i,s}|^2\theta_i
\right)\ds
\right]
\\
&=
\E_t[|\zeta|^2]
+2\E_t\left[
\int_t^T
\left\langle
Y_s,
R_s+F_s(Y_s,C_s)+G_s(Y_s,\Psi_s,H_s)
\right\rangle\ds
\right]
\\
&\leq
\E_t[|\zeta|^2]
+\E_t\left[
\int_t^T
\left(
\kappa+\kappa|Y_s|^2
+\frac{1}{2}\sum_{i=1}^n|\Psi_{i,s}|^2\theta_i
\right)\ds
\right],
\end{align*}
where we used the inequality
\[
\kappa|a||b|
\leq
2\kappa^2|a|^2+\frac{1}{2}|b|^2.
\]
After rearranging, we obtain, with
$
\kappa_1=\esssup|\zeta|^2+\kappa T$,
that
\[
|Y_t|^2
\leq
\kappa_1+\kappa\E_t\left[\int_t^T|Y_s|^2\,\ds\right].
\]
The conditional Gr\"onwall inequality, obtained by iterating the preceding estimate, therefore gives
\[
|Y_t|^2
\leq
\kappa_1e^{\kappa(T-t)}
\leq
\kappa_1e^{\kappa T}.
\]
Thus,
\[
(Y,Z,\Psi)
\in
S^\infty(\mathbb S^n)
\times L^2_{\mathcal P}(\mathbb S^n)
\times L^\infty_{\mathcal P}((\mathbb S^n)^n).
\]

Suppose now that \eqref{ass:R} holds and that \(\zeta\geq0\). Let \(X\) be the unique solution of the linear SDE
\[
\left\{
\begin{split}
\dd X_s
&=
\bigl(C_s+\diag(\boldsymbol{\theta})H_s\bigr)X_s\,\ds
+\sum_{i=1}^n
\diag(e_i)H_sX_{s-}\,\dd\widetilde N_{i,s},
\\
X_t&=\mathbf I_n.
\end{split}
\right.
\]
 The process \(X\) is square-integrable. Applying It\^o's formula to \(X^\top YX\) from \(t\) to \(T\) gives
\[
Y_t
=
\E_t\left[
X_T^\top\zeta X_T
+
\int_t^T
X_s^\top
\left(
R_s+C_s^\top\Sigma_sC_s+H_s^\top\Gamma_sH_s
\right)
X_s\,\ds
\right].
\]
It follows from \eqref{ass:R} and \(\zeta\geq0\) that \(Y_t\geq0\), proving \ref{Lemma-L.1}.

Suppose next that in addition
$
\zeta\geq\alpha\mathbf I_n
$
for some constant \(\alpha>0\). Since \(C\) and \(H\) are bounded, there exists a constant \(\kappa>0\) such that
\begin{align}\label{Ubound}
\left|
C+\diag(\boldsymbol{\theta})H
+C^\top+H^\top\diag(\boldsymbol{\theta})
+H^\top\diag(\boldsymbol{\theta})H
\right|
\leq\kappa.
\end{align}
Define now 
$
\underline Y_t
=
\alpha e^{-\kappa(T-t)}\mathbf I_n$ and note that
\[
\underline Y_t
\geq
\alpha e^{-\kappa T}\mathbf I_n,
\qquad
\dd\underline Y_t
=
\kappa\underline Y_t\,\dt.
\]
Since \(\underline Y\) is a scalar matrix, the difference \(Y-\underline Y\) satisfies
\[
\left\{
\begin{split}
\dd(Y_t-\underline Y_t)
&=
-\left(
R^{\underline Y}_t
+F_t(Y_t-\underline Y_t,C_t)
+G_t(Y_t-\underline Y_t,\Psi_t,H_t)
\right)\dt
+Z_t\dw_t
+\sum_{i=1}^n\Psi_{i,t}\,\dd\widetilde N_{i,t},
\\
Y_T-\underline Y_T
&=
\zeta-\alpha\mathbf I_n,
\end{split}
\right.
\]
where
\[
R^{\underline Y}
=
R
+
\underline Y
\left(
C+C^\top+\diag(\boldsymbol{\theta})H
+H^\top\diag(\boldsymbol{\theta})
+H^\top\diag(\boldsymbol{\theta})H
+\kappa\mathbf I_n
\right) 
\geq R,
\]
recalling \eqref{Ubound}.
Applying \ref{Lemma-L.1} to the BSDE for \(Y-\underline Y\) proves \ref{Lemma-L.2}.

Finally, suppose in addition that
$
\Sigma\geq\eta_\star\mathbf I_n$, $\Gamma \geq 0$, and $
R\geq0$.
Observe that
\begin{align*}
B_t=\frac{\eta_\star\theta_{\max}}
{(1+\eta_\star\theta_{\max}\alpha^{-1})e^{\theta_{\max}(T-t)}-1} \mathbf{I}_n> 0,  \qquad Z^B_t=0,  \qquad \Psi^B_{i,t}=0, \quad i=1,...,n,
\end{align*}
is a solution to the BSDE  
\begin{equation*}
	\left\{\begin{split}
		-\dd B_t
		= &~ \left(-\eta_\star^{-1}B_t^2-\theta_{\max}B_t \right) \dt  - Z^B_t\dw_t - \sum_{i=1}^n\Psi^B_{i,t}\dd \widetilde N_{i,t}, \\
		B_T=&~\alpha\mathbf{I}_n.
	\end{split}\right.
\end{equation*}
It follows that
\begin{align*}
-\dd(Y_t-B_t)
&=
\left(
R_t+F_t(Y_t,C_t)+\eta_\star^{-1}B_t^2
+G_t(Y_t,\Psi_t,H_t)+\theta_{\max}B_t
\right)\dt
-Z_t\dw_t
-\sum_{i=1}^n\Psi_{i,t}\,\dd\widetilde N_{i,t}.
\end{align*}
We then have
\begin{align*}
F(Y,C)+\eta_\star^{-1}B^2
&=
C^\top(\Sigma-\eta_\star\mathbf I_n)C
+C^\top(Y-B)+(Y-B)C
+C^\top\eta_\star C+C^\top B+BC+\eta_\star^{-1}B^2
\\
&=
F^B(Y-B,C)
+
\bigl(C+\eta_\star^{-1}B\bigr)^\top
\eta_\star
\bigl(C+\eta_\star^{-1}B\bigr),
\end{align*}
where \(F^B\) is defined as \(F\) in \eqref{def:MNFG}, with \(\Sigma\) replaced by \(\Sigma-\eta_\star\mathbf I_n\). Moreover,
\begin{align*}
G(Y,\Psi,H)+\theta_{\max}B
&=
G(Y-B,\Psi,H)
+B\left(
H^\top\diag(\boldsymbol{\theta})H
+H^\top\diag(\boldsymbol{\theta})
+\diag(\boldsymbol{\theta})H
+\theta_{\max}\mathbf I_n
\right).
\end{align*}
Therefore,
\[
-\dd(Y_t-B_t)
=
\left(
R^B_t
+F^B_t(Y_t-B_t,C_t)
+G_t(Y_t-B_t,\Psi_t,H_t)
\right)\dt
-Z_t\dw_t
-\sum_{i=1}^n\Psi_{i,t}\dd\widetilde N_{i,t},
\]
where
\begin{align*}
R^B
={}&
R
+
\bigl(C+\eta_\star^{-1}B\bigr)^\top
\eta_\star
\bigl(C+\eta_\star^{-1}B\bigr) +
B\left(
H^\top\diag(\boldsymbol{\theta})H
+H^\top\diag(\boldsymbol{\theta})
+\diag(\boldsymbol{\theta})H
+\theta_{\max}\mathbf I_n
\right)
\\
\geq{}&
R
+
\bigl(C+\eta_\star^{-1}B\bigr)^\top
\eta_\star
\bigl(C+\eta_\star^{-1}B\bigr)
+
B(H+\mathbf I_n)^\top
\diag(\boldsymbol{\theta})
(H+\mathbf I_n)
\\
\geq{}&
R
\geq0.
\end{align*}
Since \(Y_T-B_T=\zeta-\alpha\mathbf I_n\geq0\), another application of \ref{Lemma-L.1} gives
$
Y-B\geq0$, which proves \ref{Lemma-L.3}. 
\end{proof}

\subsection{Comparison and a key a priori lower bound}\label{SS:4.2}
We next develop two tools for the analysis of the nonlinear matrix-valued BSDEs. First, we derive a comparison principle by expressing the quadratic terms through the pointwise minimizers of the functions $F$ and $G$. The central result of this subsection is a new a priori lower-bound principle for solutions of the nonlinear BSDE. It applies both to bounded solutions on the full time interval and to locally bounded solutions stopped strictly before the singular terminal time. The resulting quantitative uniformly positive definite estimate is one of the key technical ingredients of the paper and is crucial for the uniqueness proofs of both the finite-terminal BSDE~\eqref{BSDE-L} and the singular-terminal BSDE~\eqref{BSDE}.

Whenever \(\Sigma>0\), define the predictable functions
\begin{align*}
\widehat C(y)
=
-\Sigma^{-1}y
\end{align*}
and
\begin{align*}
\widehat H(y,\psi)
=
\begin{cases}
-\mathscr N(y,\psi)^{-1}\mathscr M(y,\psi),
&\text{if }y+\psi_i>0,\quad i=1,\ldots,n,\\
0,
&\text{otherwise}.
\end{cases}
\end{align*}
Whenever \(y+\psi_i>0\) for each \(i=1,\ldots,n\), 
\(\mathscr N(y,\psi)\) is positive definite. Completing squares shows that, for every \(c,h\in\R^{n\times n}\),
\begin{align}
F\bigl(y,\widehat C(y)\bigr)
&\leq
F(y,c),
&
G\bigl(y,\psi,\widehat H(y,\psi)\bigr)
&\leq
G(y,\psi,h).
\label{eq:minimizing-property}
\end{align}
Combining the minimizing property in \eqref{eq:minimizing-property} with the positivity result in Proposition~\ref{P:260705} yields the following comparison principle for the nonlinear BSDEs. 

\begin{corollary}\label{C:260707}
Fix \(0\leq t<u\leq T\). Assume that
$
\Sigma,\Gamma
\in
L^\infty_{\mathcal P}(t,u;\mathbb S^n)$ with $\Sigma > 0$ and $\Gamma \geq 0$.
Let
$
\zeta^1,\zeta^2
\in
L^\infty_{\mathcal F_u}(\mathbb S^n)$ with $\zeta^1\geq\zeta^2$,
and
$
R^1,R^2
\in
L^\infty_{\mathcal P}(t,u;\mathbb S^n)$ with $R^1\geq R^2$.
Furthermore, let
$
C^1,C^2,H^1,H^2
\in
L^\infty_{\mathcal P}(t,u;\R^{n\times n})$,
and suppose that
\[
(Y^j,Z^j,\Psi^j)
\in
S^\infty(t,u;\mathbb S^n)
\times L^2_{\mathcal P}(t,u;\mathbb S^n)
\times L^\infty_{\mathcal P}(t,u;(\mathbb S^n)^n),
\qquad
j=1,2,
\]
solve
\[
\left\{
\begin{aligned}
-\dd Y^j_s
&=
\left(
R^j_s
+F_s(Y^j_s,C^j_s)
+G_s(Y^j_{s-},\Psi^j_s,H^j_s)
\right)\dd s
-Z^j_s\,\dd W_s
-\sum_{i=1}^n
\Psi^j_{i,s}\,\dd\widetilde N_{i,s},
\\
Y^j_u&=\zeta^j,
\\
Y^j&\geq0,
\qquad
Y^j_-+\Psi^j_i>0,
\qquad i=1,\ldots,n.
\end{aligned}
\right.
\]
Suppose that either
\[
C^2=\widehat C(Y^1)
\in
L^\infty_{\mathcal P}(t,u;\R^{n\times n}),
\qquad
H^2=\widehat H(Y^1_-,\Psi^1)
\in
L^\infty_{\mathcal P}(t,u;\R^{n\times n}),
\]
or
\[
C^2=\widehat C(Y^2)
\in
L^\infty_{\mathcal P}(t,u;\R^{n\times n}),
\qquad
H^2=\widehat H(Y^2_-,\Psi^2)
\in
L^\infty_{\mathcal P}(t,u;\R^{n\times n}).
\]
Then
$
Y^1\geq Y^2$.
\end{corollary}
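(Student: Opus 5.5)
The plan is to write the difference $\Delta Y=Y^1-Y^2$, $\Delta Z=Z^1-Z^2$, $\Delta\Psi=\Psi^1-\Psi^2$ as the solution of a linear BSDE of the form \eqref{BSDE-Linear} on $[t,u]$. We then apply Proposition~\ref{P:260705}\ref{Lemma-L.1} (on the interval $[t,u]$ instead of $[0,T]$, which changes nothing in its proof) with terminal value $\zeta^1-\zeta^2\geq0$. The key observation is that $F(y,c)$ is affine in $y$ for fixed $c$, and $G(y,\psi,h)$ is affine in $(y,\psi)$ for fixed $h$, since $\sM$ and $\sN$ are affine. Concretely,
\[
F(y^1,c)-F(y^2,c)=c^\top(y^1-y^2)+(y^1-y^2)c,
\]
which is $F(y^1-y^2,c)$ with the $c^\top\Sigma c$ term removed. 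Similarly, $G(y^1,\psi^1,h)-G(y^2,\psi^2,h)$ equals $G(y^1-y^2,\psi^1-\psi^2,h)$ with $\Gamma$ replaced by $0$ in $\sN$.

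\textbf{Case 1:} $C^2=\widehat C(Y^1)$ and $H^2=\widehat H(Y^1_-,\Psi^1)$. We insert the controls of $Y^2$ at both solutions and write the generator difference as
\begin{align*}
&R^1-R^2+\bigl[F(Y^1,C^1)-F(Y^1,C^2)\bigr]+\bigl[G(Y^1_-,\Psi^1,H^1)-G(Y^1_-,\Psi^1,H^2)\bigr]\\
&\quad+\bigl[F(Y^1,C^2)-F(Y^2,C^2)\bigr]+\bigl[G(Y^1_-,\Psi^1,H^2)-G(Y^2_-,\Psi^2,H^2)\bigr].
\end{align*}
By \eqref{eq:minimizing-property}, with $C^2,H^2$ the minimizers at $(Y^1_-,\Psi^1)$, the two brackets in the first line are nonnegative. (The condition $Y^1_-+\Psi^1_i>0$ ensures that $\widehat H$ is the genuine minimizer.) The last two brackets are linear in $(\Delta Y,\Delta\Psi)$ with coefficients $C^2,H^2$, and these are bounded by hypothesis.

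To fit \eqref{BSDE-Linear}, we restore the missing quadratic terms by setting
\[
R=R^1-R^2+[\cdots]+[\cdots]-(C^2)^\top\Sigma C^2-(H^2)^\top\Gamma H^2,
\]
with $C=C^2$ and $H=H^2$. Condition \eqref{ass:R} then reads
\[
R+C^\top\Sigma C+H^\top\Gamma H=R^1-R^2+(\text{nonnegative brackets})\geq0.
\]
Boundedness of $R$ follows because all processes involved lie in $L^\infty$; the solutions are in $S^\infty\times L^2\times L^\infty$. Uniqueness in Proposition~\ref{P:260705} identifies $(\Delta Y,\Delta Z,\Delta\Psi)$ with its solution, so $\Delta Y\geq0$.

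\textbf{Case 2:} $C^2=\widehat C(Y^2)$ and $H^2=\widehat H(Y^2_-,\Psi^2)$. Here we instead use the controls $C^1,H^1$ on both sides and split the generator difference as
\[
R^1-R^2+\bigl[F(Y^1,C^1)-F(Y^2,C^1)\bigr]+\bigl[G(Y^1_-,\Psi^1,H^1)-G(Y^2_-,\Psi^2,H^1)\bigr]+\bigl[F(Y^2,C^1)-F(Y^2,C^2)\bigr]+\bigl[G(Y^2_-,\Psi^2,H^1)-G(Y^2_-,\Psi^2,H^2)\bigr].
\]
The last two brackets are nonnegative by minimality at $(Y^2_-,\Psi^2)$. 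The linear part now has bounded coefficients $C=C^1$ and $H=H^1$, and we conclude as in Case 1.

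The main obstacle is bookkeeping rather than conceptual. The jump terms involve $Y_{s-}$, whereas Proposition~\ref{P:260705} uses $G(Y_{t-},\Psi,H)$ in its statement. This must be matched consistently, and $\sN$'s dependence on $(y,\psi)$ through $\theta_i(y_{ii}+\psi_{i,ii})$ must be verified to be affine, so that the $\Gamma$-free part of $G$ is exactly linear in $(\Delta Y_-,\Delta\Psi)$. One must also check that the restored quadratic terms keep $R$ bounded, which uses the $L^\infty$ hypotheses on $C^2,H^2$ stated in the corollary.
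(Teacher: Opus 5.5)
Your proposal is correct and follows the paper's proof. In each of the two cases you use the same add-and-subtract decomposition and exploit the affinity of $F$ in $y$ and of $G$ in $(y,\psi)$. This yields a linear BSDE for the difference, with coefficients $(C^2,H^2)$ in the first case and $(C^1,H^1)$ in the second. You then absorb $R^1-R^2$, the nonnegative minimization gaps from \eqref{eq:minimizing-property}, and the compensating terms $-C^\top\Sigma C-H^\top\Gamma H$ into $R$, so that \eqref{ass:R} holds and Proposition~\ref{P:260705}\ref{Lemma-L.1} applied on $[t,u]$ gives $Y^1\geq Y^2$.
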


\begin{proof}
The minimizing property in \eqref{eq:minimizing-property} yields either
\begin{align}
R^1-R^2
&+F(Y^1,C^1)-F(Y^1,C^2)
+G(Y^1_-,\Psi^1,H^1)
-G(Y^1_-,\Psi^1,H^2)
\geq0
\label{eq:comparison-condition-2}
\end{align}
or
\begin{align}
R^1-R^2
&+F(Y^2,C^1)-F(Y^2,C^2)
+G(Y^2_-,\Psi^2,H^1)
-G(Y^2_-,\Psi^2,H^2)
\geq0.
\label{eq:comparison-condition-1}
\end{align}
Set
\[
\overline Y=Y^1-Y^2,
\qquad
\overline Z=Z^1-Z^2,
\qquad
\overline\Psi_i=\Psi^1_i-\Psi^2_i.
\]

Suppose first that \eqref{eq:comparison-condition-2} holds. Using the definitions of \(F\) and \(G\), we obtain
\begin{align}\label{eq:260705.2}
\left\{
\begin{aligned}
-\dd\overline Y_s
&=
\left(
\overline R_s
+F_s(\overline Y_s,C^2_s)
+G_s(\overline Y_{s-},\overline\Psi_s,H^2_s)
\right)\dd s
-\overline Z_s\,\dd W_s
-\sum_{i=1}^n
\overline\Psi_{i,s}\,\dd\widetilde N_{i,s},
\\
\overline Y_u
&=
\zeta^1-\zeta^2
\geq0,
\end{aligned}
\right.
\end{align}
where
\begin{align*}
\overline R
=
R^1-R^2
+F(Y^1,C^1)-F(Y^1,C^2)
+
G(Y^1_-,\Psi^1,H^1)
-G(Y^1_-,\Psi^1,H^2)
-
(C^2)^\top\Sigma C^2
-(H^2)^\top\Gamma H^2.
\end{align*}
Consequently,
\[
\overline R
+(C^2)^\top\Sigma C^2
+(H^2)^\top\Gamma H^2
\geq0.
\]
Proposition~\ref{P:260705}\ref{Lemma-L.1}, applied on \([t,u]\), gives
$Y^1 \geq Y^2$. 

If instead \eqref{eq:comparison-condition-1} holds, we obtain
\eqref{eq:260705.2} with \((C^2,H^2)\) replaced by
\((C^1,H^1)\), and with
\begin{align*}
\overline R
={}&
R^1-R^2
+F(Y^2,C^1)-F(Y^2,C^2)
+
G(Y^2_-,\Psi^2,H^1)
-G(Y^2_-,\Psi^2,H^2)
-
(C^1)^\top\Sigma C^1
-(H^1)^\top\Gamma H^1.
\end{align*}
We conclude as above.
\end{proof}

The following proposition provides the key quantitative lower bound needed in the subsequent uniqueness arguments, both for the finite-terminal BSDE and for solutions stopped before the singular terminal time.

\begin{proposition}\label{P:260706}
Let \(L>0\). Let \(\Lambda,\Sigma,\Gamma\) be predictable
\(\mathbb S^n\)-valued processes such that
\[
\Lambda\geq0,
\qquad
\Gamma\geq0,
\qquad
\Sigma\geq\eta_\star\mathbf I_n
\]
for some constant \(\eta_\star>0\).
Let \(u\in[0,T]\), and let \(\tau\) be a stopping time such that either
\(u=T\) and \(\tau=T\), or \(u<T\) and
$
u\leq\tau<T$.
In the first case, assume that
\[
\E\left[\int_0^T|\Lambda_s|\,\ds\right]<\infty, \qquad
(Y,Z,\Psi)
\in
S^2(\mathbb S^n)
\times L^2_{\mathcal P}(\mathbb S^n)
\times L^2_{\mathcal P}((\mathbb S^n)^n).
\]
In the second case, assume that, for each \(v\in(0,T)\),
\[
\E\left[\int_0^{v\wedge\tau}|\Lambda_s|\,\ds\right]<\infty, \qquad 
\left(
Y_{\cdot\wedge\tau},
Z\mathbf 1_{[0,\tau]},
\Psi\mathbf 1_{[0,\tau]}
\right)
\in
S^2(0,v;\mathbb S^n)
\times L^2_{\mathcal P}(0,v;\mathbb S^n)
\times L^2_{\mathcal P}(0,v;(\mathbb S^n)^n).
\]
Suppose moreover that, on \([0,\tau]\),
\[
\left\{
\begin{aligned}
-\dd Y_s
&=
\left(
\Lambda_s-Y_s\Sigma_s^{-1}Y_s
-\mathscr M(Y_{s-},\Psi_s)^\top
\mathscr N_s(Y_{s-},\Psi_s)^{-1}
\mathscr M(Y_{s-},\Psi_s)
\right)\dd s
-Z_s\,\dd W_s
-\sum_{i=1}^n
\Psi_{i,s}\,\dd\widetilde N_{i,s},
\\
Y_\tau&\geq L\mathbf I_n,
\\
Y&\geq0,
\qquad
Y_-+\Psi_i>0,
\qquad i=1,\ldots,n.
\end{aligned}
\right.
\]
Then
\[
Y_t
\geq
\frac{\eta_\star\Theta}
{
\bigl(1+\eta_\star\Theta L^{-1}\bigr)
e^{\Theta(T-t)}-1
}
\,\mathbf I_n,
\qquad
t\in[0,u],
\]
where $\Theta$ is defined in \eqref{eq:260730.1}.
\end{proposition}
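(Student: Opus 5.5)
The plan is to compare \(Y\) with the deterministic scalar Riccati solution
\[
B_t=\frac{\eta_\star\Theta}{(1+\eta_\star\Theta L^{-1})e^{\Theta(T-t)}-1}\,\mathbf I_n,
\qquad
-\dot B_t=-\eta_\star^{-1}B_t^2-\Theta B_t,
\qquad B_T=L\mathbf I_n .
\]
Since \(B\) is decreasing in \(t\) and \(\tau\le T\), we have \(B_\tau\le L\mathbf I_n\le Y_\tau\). It therefore suffices to show \(Y-B\ge0\) on \([0,u]\).

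\textbf{Step 1 (drop \(\Gamma\)).} I would first bound the nonlinear jump term from above in a way that no longer involves \(\Gamma\) or the inverse of \(\mathscr N\). This explains why \(\Theta\) appears rather than \(\theta_{\max}\). Write \(A_i=Y_-+\Psi_i>0\) and \(D=\diag(\theta_1A_{1,11},\ldots,\theta_nA_{n,nn})\). Since \(\Gamma\ge0\), we have \(\mathscr N\ge D>0\), hence \(\mathscr N^{-1}\le D^{-1}\). This gives
\[
\mathscr M^\top\mathscr N^{-1}\mathscr M\le \mathscr M^\top D^{-1}\mathscr M=\sum_{i=1}^n\theta_i\,\frac{A_ie_ie_i^\top A_i}{e_i^\top A_ie_i}\le\sum_{i=1}^n\theta_i A_i .
\]
The last inequality is Cauchy--Schwarz for the inner product induced by \(A_i>0\): \((x^\top A_ie_i)^2\le (x^\top A_ix)(e_i^\top A_ie_i)\). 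Since \(\sum_i\theta_iA_i=\Theta Y_-+\sum_i\theta_i\Psi_i\), the \(\Psi\)-part recombines with the compensators. Using \(\Lambda\ge0\) and \(\Sigma^{-1}\le\eta_\star^{-1}\mathbf I_n\), we obtain on \([0,\tau]\) the differential inequality
\[
-\dd Y_s\ge\bigl(-Y_s\Sigma_s^{-1}Y_s-\Theta Y_{s}\bigr)\ds-Z_s\,\dd W_s-\sum_{i=1}^n\Psi_{i,s}\,\dd N_{i,s}.
\]
Here \(Y_{s-}=Y_s\) for a.e.\ \(s\), and the \(\dd N_i\)-integrals have no integrand depending on the unknown beyond the martingale part.

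\textbf{Step 2 (linearize).} Next I would linearize \(\overline Y=Y-B\) around \(Y\), following the first alternative in Corollary~\ref{C:260707}, where the minimizer is evaluated at \(Y^1=Y\). Since \(B^2/\eta_\star\ge B\Sigma^{-1}B\), completing the square gives
\[
-Y\Sigma^{-1}Y+\eta_\star^{-1}B^2\ge -\overline Y\Sigma^{-1}Y-Y\Sigma^{-1}\overline Y+\overline Y\Sigma^{-1}\overline Y\ge -\overline Y\Sigma^{-1}Y-Y\Sigma^{-1}\overline Y .
\]
Hence \(-\dd\overline Y\ge(C^\top\overline Y+\overline YC-\Theta\overline Y)\dd s-\text{(local martingale)}\), with \(C=-\Sigma^{-1}Y\).

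Let \(X\) be the continuous matrix flow \(\dd X_s=(C_s-\tfrac\Theta2\mathbf I_n)X_s\ds\), \(X_t=\mathbf I_n\). Because \(X\) has no jumps, It\^o's formula applied to \(X^\top\overline YX\) produces only local martingale terms and a nonnegative drift. The representation argument of Proposition~\ref{P:260705}\ref{Lemma-L.1} then gives
\[
\overline Y_t\ge\E_t\bigl[X_{\rho}^\top\overline Y_{\rho}X_{\rho}\bigr]
\]
for suitable stopping times \(\rho\).

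\textbf{Main obstacle: integrability.} The hard part is that \(C=-\Sigma^{-1}Y\) is not bounded: in the first case \(Y\) is only in \(S^2\), and in the second case it is only locally controlled. So Corollary~\ref{C:260707} cannot be quoted directly.

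I would first localize with \(\rho_k=\inf\{s\ge t:|Y_s|\ge k\}\wedge\tau\) (and \(\wedge v\) in the second case). On \([t,\rho_k]\), \(C\) is bounded, so \(X\) is bounded and the stochastic integrals are true martingales.

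The difficulty is that \(\overline Y_{\rho_k}\) need not be nonnegative when \(\rho_k<\tau\). I would handle this by splitting \(\overline Y_{\rho_k}=\overline Y_{\rho_k}^+-\overline Y_{\rho_k}^-\) and using \(\overline Y^-\le B\le\) const, since \(Y\ge0\). Then I would show that the negative contribution \(\E_t[\mathbf 1_{\{\rho_k<\tau\}}X_{\rho_k}^\top B X_{\rho_k}]\) vanishes as \(k\to\infty\). For this I would use \(\mathbb P(\rho_k<\tau)\to0\), which follows from the \(S^2\) bound on \(Y\) on \([0,\tau]\), together with a bound on \(|X|\) on \(\{\rho_k<\tau\}\).

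For the latter bound, I would try to exploit that \(Y\ge0\) makes the flow contractive in the \(\Sigma\)-weighted sense. Concretely, \(\frac{\dd}{\ds}(X^\top\Sigma X)\) is bounded above by \(-2X^\top YX\) plus terms coming from \(\dd\Sigma\). If this weighted control is not available because \(\Sigma\) is merely predictable, the fallback is to run the same argument with the backward flow on the terminal side, i.e.\ for \(Y^{-1}\). The drift of \(Y^{-1}\) is bounded by constants involving \(\eta_\star^{-1}\) and \(\Theta\) alone, so upper-bounding \(Y^{-1}\) by \(B^{-1}\) becomes a comparison with bounded coefficients. In either route, letting \(k\to\infty\) with Fatou's lemma and the positive part yields \(\overline Y_t\ge0\) for \(t\in[0,u]\), which is the claimed bound.
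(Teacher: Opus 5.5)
Your Step~1 and the comparison function $B$ are exactly the paper's; the bound $\sM^\top\sN^{-1}\sM\le\sum_i\theta_iA_i$ is \eqref{eq:260706.1}. One small correction: $B$ is increasing, not decreasing, in $t$, and that is what gives $B_\tau\le B_T=L\mathbf I_n$.

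Step~2 mishandles the jumps. After Step~1 you are in raw-jump form, so the term you call a local martingale is $Z\,\dd W+\sum_i\Psi_i\,\dd N_i$. The part $\sum_i\Psi_i\,\dd N_i$ is not a local martingale. Its compensator $\sum_i\theta_i\Psi_i\,\ds$ is precisely what you absorbed into $-\Theta Y$. With your continuous flow $X$, It\^o's formula gives only
\[
\overline Y_t\geq\E_t\bigl[X_\rho^\top\overline Y_\rho X_\rho\bigr]-\E_t\Bigl[\sum_{i=1}^n\theta_i\int_t^\rho X_s^\top\Psi_{i,s}X_s\,\ds\Bigr].
\]
Here $\Psi_i$ has no sign, so positivity does not follow. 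The deterministic damping $e^{-\Theta(s-t)/2}$ reproduces the rate $\Theta$ only in expectation. What the raw-form term $-\Theta\overline Y$ actually encodes is killing. Let $\sigma$ be the first jump time of $N$ after $t$, let $\dd X^\circ=CX^\circ\,\ds$, and set $X=X^\circ\mathbf 1_{[t,\sigma)}$. Then the jump $-X_{\sigma-}^\top\overline Y_{\sigma-}X_{\sigma-}$ at $\sigma$ has a compensator that cancels $\Theta X^\top\overline YX$. Up to integrability, this gives $\overline Y_t\ge\E_t[\mathbf 1_{\{\sigma>\rho\}}(X^\circ_\rho)^\top\overline Y_\rho X^\circ_\rho]$.

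Even after this repair, the obstacle you identify is not overcome, and neither of your routes closes it. Naive localization fails quantitatively. On $[t,\rho_k]$ you only have an exponential bound $|X^\circ_{\rho_k}|\le\kappa e^{\kappa k}$, against $\mathbb P(\rho_k<\tau)=O(k^{-2})$ from the $S^2$ bound. In the stopped case, nothing controls $X^\circ_v$ on $\{\tau>v\}$ uniformly as $v\nearrow T$. The $\Sigma$-weighted contraction needs $\Sigma$ to be differentiable, or at least of finite variation, in time, which a merely predictable $\Sigma$ need not be. The $Y^{-1}$ route presupposes that $Y$ is invertible. Only $Y\ge0$ is assumed, and $Y>0$ is essentially the conclusion. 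Moreover $Y^{-1}$ may be unbounded, so $B^{-1}-Y^{-1}$ is not bounded below at localization times and the same limiting problem reappears; its unsigned jumps also need the same killing treatment.

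The paper avoids linearization altogether by applying the generalized It\^o formula to the convex function $\varphi(Y-B)=|(Y-B)^-|^2$.
\begin{itemize}
\item The gradient $-2(Y-B)^-$ is bounded, since $Y\ge0$ and $B\le L\mathbf I_n$.
\item The unbounded quadratic term is removed by a sign argument: $\langle(Y-B)^-,Y\Sigma^{-1}Y-\eta_\star^{-1}B^2\rangle\le\eta_\star^{-1}\langle(Y-B)^-,Y^2-B^2\rangle\le0$. This holds because $B$ is scalar, so both matrices diagonalize in the eigenbasis of $Y$, and $(b-y_j)^+(y_j^2-b^2)\le0$.
\item In raw-jump form, each jump contributes at most $\varphi(Y_{s-}-B_s)$, because $\varphi\ge0$.
\end{itemize}
This yields $\E[\varphi(Y_t-B_t)]\le\Theta\int_t^T\E[\varphi(Y_s-B_s)]\,\ds$, hence $Y\ge B$ by Gronwall. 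In the stopped case the boundary term is bounded by $nL^2\,\mathbb P[\tau>v]\to0$. No bound on $Y$ beyond the stated integrability is ever needed.
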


\begin{proof}
For each \(i=1,\ldots,n\), set
$
A_{i}=Y_{-}+\Psi_{i}$.
On \([0,\tau]\), we have \(A_{i}>0\). Hence
$
D
=
\diag
\left(
\theta_1(A_{1})_{11},
\ldots,
\theta_n(A_{n})_{nn}
\right)
$
is positive definite on \([0,\tau]\). Since \(\Gamma\geq0\), we have
$
\mathscr N(Y_{-},\Psi)\geq D$.
It follows that
\begin{align*}
\mathscr K =
\mathscr M(Y_{-},\Psi)^\top
\mathscr N(Y_{-},\Psi)^{-1}
\mathscr M(Y_{-},\Psi)
\leq
\mathscr M(Y_{-},\Psi)^\top
D^{-1}
\mathscr M(Y_{-},\Psi)
=
\sum_{i=1}^n
\theta_i
\frac{
A_{i}e_ie_i^\top A_{i}
}{
(A_{i})_{ii}
}.
\end{align*}
For every positive definite matrix \(p\in\mathbb S^n_+\), the Cauchy--Schwarz inequality for the positive semidefinite bilinear form induced by \(p\) gives
\[
(e_i^\top px)^2
\leq
p_{ii}x^\top px,
\]
or equivalently,
$
pe_ie_i^\top p
\leq
p_{ii}p$.
Applying this inequality with \(p=A_{i}\), we obtain
\begin{align}
\mathscr K
\leq
\sum_{i=1}^n\theta_iA_{i}
\qquad\text{on }[0,\tau].
\label{eq:260706.1}
\end{align}

Define now the deterministic process
\[
B_s
=
\frac{\eta_\star\Theta}
{
\bigl(1+\eta_\star\Theta L^{-1}\bigr)
e^{\Theta(T-s)}-1
}
\,\mathbf I_n,
\qquad
s \in [0,T].
\]
Then
$
B_T=L\mathbf I_n$,
$
0\leq B\leq L\mathbf I_n$,
and a direct computation yields
\[
-\dd B_s
=
\left(
-\eta_\star^{-1}B_s^2-\Theta B_s
\right)\dd s.
\]
Set
$
Y^B=Y-B$.
On \([0,\tau]\), the equation for \(Y^B\) can be written in raw-jump form as
\[
-\dd Y^B_s
=
R^B_s\,\dd s
-Z_s\,\dd W_s
-\sum_{i=1}^n
\Psi_{i,s}\,\dd N_{i,s},
\]
where
\[
R^B
=
\Lambda
-Y\Sigma^{-1}Y
-\mathscr K
+\eta_\star^{-1}B^2
+\Theta B
+\sum_{i=1}^n\theta_i\Psi_{i}.
\]
Using \eqref{eq:260706.1}, we obtain
\[
-\mathscr K
+\sum_{i=1}^n\theta_i\Psi_{i}
\geq
-\sum_{i=1}^n\theta_iY
=
-\Theta Y.
\]
Consequently, on \([0,\tau]\), 
$
R^B\geq\overline R^B$,
where
\begin{align}
\overline R^B
=
\Lambda
-Y\Sigma^{-1}Y
+\eta_\star^{-1}B^2
-\Theta Y^B.
\label{eq:270706.2}
\end{align}

Let \(\Pi\) denote the orthogonal projection from
\(\mathbb S^n\) onto \(\mathbb S^n_+\), and define
\[
\varphi(A)
=
|A-\Pi(A)|^2
=
|A^-|^2,
\qquad
A\in\mathbb S^n.
\]
The function \(\varphi\) is convex and continuously differentiable, with
$
\nabla\varphi(A)=-2A^-$.
Since
\[
R^B\geq\overline R^B
\quad\text{and}\quad
\nabla\varphi(Y^B)
=
-2(Y^B)^-
\leq0,
\]
we have
\[
\left\langle
\nabla\varphi(Y^B),
R^B
\right\rangle
\leq
\left\langle
\nabla\varphi(Y^B),
\overline R^B
\right\rangle.
\]
Using \eqref{eq:270706.2}, \(\Lambda\geq0\), \(Y\geq0\), and
$
\Sigma^{-1}
\leq
\eta_\star^{-1}\mathbf I_n,
$
we obtain
\begin{align*}
\left\langle
\nabla\varphi(Y^B),
\overline R^B
\right\rangle
&=
-2\left\langle
(Y^B)^-,
\Lambda
\right\rangle
+2\left\langle
(Y^B)^-,
Y\Sigma^{-1}Y
-\eta_\star^{-1}B^2
\right\rangle
+2\Theta
\left\langle
(Y^B)^-,
Y^B
\right\rangle
\\
&\leq
2\eta_\star^{-1}
\left\langle
(Y^B)^-,
Y^2-B^2
\right\rangle
+2\Theta
\left\langle
(Y^B)^-,
Y^B
\right\rangle.
\end{align*}
The second term satisfies
\[
2\Theta
\left\langle
(Y^B)^-,
Y^B
\right\rangle
=
-2\Theta|(Y^B)^-|^2
\leq0.
\]
The first term is also nonpositive. Indeed, let \(y_j\geq0\) denote the eigenvalues of \(Y\), and write
$
B=b\mathbf I_n$.
The corresponding eigenvalues of \((Y^B)^-\) are
$
(b-y_j)^+$,
and
$
(b-y_j)^+(y_j^2-b^2)\leq0$.
Hence
\[
\left\langle
(Y^B)^-,
Y^2-B^2
\right\rangle
\leq0.
\]
We conclude that
\begin{align}
\left\langle
\nabla\varphi(Y^B),
R^B
\right\rangle
\leq0
\qquad\text{on }[0,\tau].
\label{eq:270706.3}
\end{align}

Suppose first that \(u=T\) and fix some $t \in [0,T]$. Applying the generalized It\^o formula to
\(\varphi(Y^B)\) on \([t,T]\), and using
\(Y^B_T\geq0\) and \eqref{eq:270706.3}, gives
\[
\E[\varphi(Y^B_t)]
\leq
\E\left[
\sum_{i=1}^n
\int_t^T
\left(
\varphi(Y^B_{s-})
-\varphi(Y^B_{s-}+\Psi_{i,s})
\right)\dd N_{i,s}
\right].
\]
Here, the continuous second-order term is integrable and has the favourable sign, and can therefore be discarded. 
Since \(Y\geq0\), \(Y_-+\Psi_i>0\), and \(0\leq B\leq L\mathbf I_n\), we have
\(
|\nabla\varphi(Y^B)|\leq2\sqrt nL
\)
and
\(
0\leq\varphi(Y^B_-),\varphi(Y^B_-+\Psi_i)\leq nL^2
\).
Hence the Brownian stochastic integral is a true martingale, the jump term is integrable, and the finite-variation contribution is integrable by the assumptions on \(\Lambda\), \(\Sigma\), \(Y\), and \(\Psi\), together with \eqref{eq:260706.1}.
Taking compensators in the jump term and using \(\varphi\geq0\), we obtain
\[
\E[\varphi(Y^B_t)]
\leq
\Theta
\int_t^T
\E[\varphi(Y^B_s)]\,\dd s.
\]
Gr\"onwall's inequality in backward form gives
$
\E[\varphi(Y^B_t)]=0$.
Thus,
$
Y^B_t\in\mathbb S^n_+$,
and hence \(Y\geq B\).

It remains to consider the case \(u\in[0,T)\). 
Fix \(t\in[0,T)\) and \(v\in(t,T)\).
 Applying the generalized It\^o formula to
\(\varphi(Y^B)\) on
\([t\wedge\tau, v\wedge\tau]\) gives
\begin{align*}
\E[\varphi(Y^B_{t\wedge\tau})]
&\leq
\E[\varphi(Y^B_{v\wedge\tau})]
+\E\left[
\sum_{i=1}^n
\int_{t\wedge\tau}^{v\wedge\tau}
\left(
\varphi(Y^B_{s-})
-\varphi(Y^B_{s-}+\Psi_{i,s})
\right)\dd N_{i,s}
\right].
\end{align*}
Again, the continuous second-order term has the favourable sign and can be discarded, and the same integrability argument applies on \([0,v\wedge\tau]\) by the local assumptions.
Since
$
Y_\tau\geq L\mathbf I_n$ and $B_\tau\leq L\mathbf I_n$,
we have
$
\varphi(Y^B_\tau)=0$.
On \(\{\tau>v\}\), using \(Y_v\geq0\) and
\(0\leq B_v\leq L\mathbf I_n\), we obtain
$
\varphi(Y^B_v)\leq nL^2$.
Consequently,
\[
\E[\varphi(Y^B_{\tau\wedge v})]
\leq
nL^2\mathbb P[\tau>v].
\]
For \(s\in[0,T]\), set
$
f(s)=
\E[
\mathbf 1_{\{s\leq\tau\}}
\varphi(Y^B_s)
]
$.
Since \(\varphi(Y^B_\tau)=0\), we have
$
\E[\varphi(Y^B_{t\wedge\tau})]=f(t)$.
Taking compensators in the jump term and using
\(\varphi\geq0\), we find
\[
f(t)
\leq
nL^2\mathbb P[\tau>v]
+
\Theta\int_t^v f(s)\,\dd s.
\]
Letting \(v\nearrow T\) and using \(\tau<T\), we obtain, for
every \(t\in[0,T)\),
\[
f(t)
\leq
\Theta\int_t^T f(s)\,\dd s.
\]
Since \(Y\geq0\) on \([0,\tau]\) and
\(0\leq B\leq L\mathbf I_n\), the function \(f\) is bounded.
Gr\"onwall's inequality in backward form therefore gives
\(f(t)=0\) for every \(t\in[0,T)\).
For \(t\in[0,u]\), we have \(t\leq\tau\), and hence
$
\E[\varphi(Y^B_t)]=0$.
It follows that
$
Y^B_t\in\mathbb S^n_+$.
Thus,
$
Y_t\geq B_t$,
which proves the claim.
\end{proof}

\begin{remark}
Proposition~\ref{P:260706} gives an a priori uniformly positive definite lower bound for any solution of the relevant nonlinear BSDE. The case \(u=T\) is used in the uniqueness proof for the finite-terminal BSDE~\eqref{BSDE-L}. The case \(u<T\) is used to obtain a corresponding lower bound for an arbitrary solution of the singular-terminal BSDE~\eqref{BSDE}. In both cases, the estimate is crucial for the subsequent uniqueness argument.
\end{remark}

\subsection{Proof of the penalized result}\label{SS:4.3}

We first isolate the monotone construction of the solution to the finite-terminal BSDE. Besides being used in the proof of Theorem~\ref{thm:BSDE-L}, this construction is also useful in proving Proposition~\ref{P:deterministic-reduction}. We then record the verification argument for the associated control problem.

\begin{lemma}\label{L:penalized-approximation}
Suppose Assumption~\ref{ass1} holds. Fix $L>0$ and set
$Y^{(0)}=0$ and $\Psi^{(0)}=0$. Then, recursively, for every $k\geq0$, the linear BSDE
\begin{equation}\label{BSDE-Lk}
	\left\{\begin{split}
		-\dd Y^{(k+1)}_t
		={}&
		\Big(
		\Lambda_t
		+F_t\Big(Y^{(k+1)}_t,\widehat C_t\big(Y^{(k)}_t\big)\Big)
        +G_t\Big(Y^{(k+1)}_{t-},\Psi^{(k+1)}_t,
		\widehat H_t\big(Y^{(k)}_{t-},\Psi^{(k)}_t\big)\Big)
		\Big)\dt
		\\
		&\qquad
		-Z^{(k+1)}_t\dw_t
		-\sum_{i=1}^n\Psi^{(k+1)}_{i,t}\dd\widetilde N_{i,t},
		\\
		Y^{(k+1)}_T
		={}&L\mathbf{I}_n
	\end{split}\right.
\end{equation}
admits a unique solution
\[
(Y^{(k+1)},Z^{(k+1)},\Psi^{(k+1)})
\in
S^\infty(\mathbb S^n)
\times L^2_{\mathcal P}(\mathbb S^n)
\times L^\infty_{\mathcal P}((\mathbb S^n)^n).
\]
Moreover, $Y^{(k)}$ is uniformly positive definite for every $k\in \mathbb{N}$, and
\begin{equation}\label{lower-bound:Y-Psi-k}
Y^{(k)}_t\geq Y^{(k+1)}_t
\geq
\frac{\eta_\star\theta_{\max}}
{(1+\eta_\star\theta_{\max}L^{-1})
e^{\theta_{\max}(T-t)}-1}
\,\mathbf{I}_n,
\qquad t\in[0,T],\quad k\in\mathbb{N}.
\end{equation}
Furthermore, there exists a solution $(Y^L,Z^L,\Psi^L)$ of \eqref{BSDE-L} such that 
\begin{align}\label{eq:penalized-iteration-convergence}
\lim_{k \nearrow \infty} \E\left[\sup_{0\leq t\leq T}|Y^{(k)}_t-Y^L_t|^2 + \int_0^T|Z^{(k)}_t-Z^L_t|^2\dt 
+\sum_{i=1}^n
\int_0^T
|\Psi^{(k)}_{i,t}-\Psi^L_{i,t}|^2\theta_i\dt\right]
 = 0.
\end{align}
\end{lemma}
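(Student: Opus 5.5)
The plan is an induction on $k$ built on Proposition~\ref{P:260705} and Corollary~\ref{C:260707}, followed by a monotone-convergence and stability argument. For $k=0$, we have $Y^{(0)}=0$ and $\Psi^{(0)}=0$, so $\widehat C(0)=0$ and $\widehat H(0,0)=0$ by definition. Hence \eqref{BSDE-Lk} for $k=0$ is linear with bounded coefficients $C=H=0$ and $R=\Lambda\ge0$. Proposition~\ref{P:260705} then gives a unique bounded solution $(Y^{(1)},Z^{(1)},\Psi^{(1)})$. Part~\ref{Lemma-L.3} gives the stated lower bound, so $Y^{(1)}$ is uniformly positive definite. By Remark~\ref{R:260709}, $Y^{(1)}_-+\Psi^{(1)}_i$ inherits this lower bound, so it is uniformly positive definite as well. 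The inequality $Y^{(0)}\ge Y^{(1)}$ is false as written for $k=0$, since $Y^{(0)}=0$. I would therefore read \eqref{lower-bound:Y-Psi-k} as a statement for $k\ge1$, with the uniform positivity likewise for $k\ge1$.

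Inductive step. Assume $Y^{(k)}$ is bounded and satisfies the lower bound, and that $Y^{(k)}_-+\Psi^{(k)}_i\ge \varepsilon_k\mathbf I_n$ for some $\varepsilon_k>0$. Then $\widehat C(Y^{(k)})=-\Sigma^{-1}Y^{(k)}$ is bounded, because $\Sigma\ge\eta_\star\mathbf I_n$. Also $\mathscr N(Y^{(k)}_-,\Psi^{(k)})\ge \theta_{\min}\varepsilon_k\mathbf I_n$, so $\widehat H(Y^{(k)}_-,\Psi^{(k)})$ is bounded too. Condition \eqref{ass:R} holds with $R=\Lambda$, so Proposition~\ref{P:260705} gives existence, uniqueness, boundedness and the lower bound of part~\ref{Lemma-L.3} for $Y^{(k+1)}$. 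The lower bound transfers to $Y^{(k+1)}_-+\Psi^{(k+1)}_i$ by Remark~\ref{R:260709}, which closes the induction.

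Monotonicity for $k\ge1$. Apply Corollary~\ref{C:260707} with $Y^1=Y^{(k)}$, written with $(C^1,H^1)=(\widehat C(Y^{(k-1)}),\widehat H(Y^{(k-1)}_-,\Psi^{(k-1)}))$, and with $Y^2=Y^{(k+1)}$, written with $(C^2,H^2)=(\widehat C(Y^{(k)}),\widehat H(Y^{(k)}_-,\Psi^{(k)}))$. Both have $R=\Lambda$ and the same terminal value $L\mathbf I_n$. This is exactly the first alternative of the corollary, $C^2=\widehat C(Y^1)$ and $H^2=\widehat H(Y^1_-,\Psi^1)$, so $Y^{(k)}\ge Y^{(k+1)}$.

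Convergence. The sequence $Y^{(k)}$ is decreasing and bounded below by the deterministic bound $B^L$, so it has a pointwise limit. I also need uniform upper bounds. Monotonicity gives $Y^{(k)}\le Y^{(1)}$, and $Y^{(1)}$ is bounded, so I would rather use that bound than the explicit one in \eqref{eq:260708.1}. With $B^L\le Y^{(k)}\le \|Y^{(1)}\|_\infty\mathbf I_n$ and the same bounds passed to $Y^{(k)}_-+\Psi^{(k)}_i$ via Remark~\ref{R:260709}, the $\Psi^{(k)}$ are uniformly bounded. The feedback coefficients $\widehat C(Y^{(k)})$ and $\widehat H(Y^{(k)}_-,\Psi^{(k)})$ are then uniformly bounded and uniformly Lipschitz in $(y,\psi)$ on the relevant compact set. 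The latter holds because $\mathscr N\ge \theta_{\min}B^L_T\mathbf I_n$ there.

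The main obstacle is upgrading pointwise monotone convergence of $Y$ to $L^2$ convergence of $(Z^{(k)},\Psi^{(k)})$, and identifying the limit equation, which involves $Y_-$ and $\Psi$ nonlinearly. I would apply Itô's formula to $|Y^{(k)}-Y^{(m)}|^2$ and estimate the generator difference using the uniform Lipschitz bounds above. The difference $F(Y^{(k)},\widehat C(Y^{(k-1)}))-F(Y^{(m)},\widehat C(Y^{(m-1)}))$ is controlled by $|Y^{(k)}-Y^{(m)}|+|Y^{(k-1)}-Y^{(m-1)}|$. The $G$-difference is controlled similarly, with additional terms in $|\Psi^{(k)}-\Psi^{(m)}|$ and $|\Psi^{(k-1)}-\Psi^{(m-1)}|$. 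The terms with the current indices are absorbed by Young's inequality into the left-hand side. This leaves the Cauchy-type estimate
\begin{equation*}
\E\left[\int_0^T\left(|Z^{(k)}-Z^{(m)}|^2+\sum_{i=1}^n\theta_i|\Psi^{(k)}_i-\Psi^{(m)}_i|^2\right)\dt\right]\le \kappa\,\E\left[\int_0^T|Y^{(k)}-Y^{(m)}|^2\dt\right]+\delta\,(\text{same quantity at indices }k-1,m-1),
\end{equation*}
with $\delta<1$ after suitable weighting. Since $Y$ converges boundedly, dominated convergence makes the first term vanish. Iterating the recursion in the previous indices then makes the left side Cauchy.

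If choosing $\delta<1$ fails, I would fall back on exponential time weights $e^{\beta t}$, which shrink the cross terms in previous indices. With that in place, BDG gives convergence in the norm \eqref{eq:penalized-iteration-convergence}. Passing to the limit in \eqref{BSDE-Lk} then shows that the limit solves \eqref{BSDE-L}. This uses $F(y,\widehat C(y))=-y\Sigma^{-1}y$ and $G(y,\psi,\widehat H(y,\psi))=-\mathscr M^\top\mathscr N^{-1}\mathscr M$, together with continuity of the coefficients. Positivity $Y^L\ge B^L>0$ and $Y^L_-+\Psi^L_i>0$ carry over to the limit.
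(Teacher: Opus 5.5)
Your proposal is correct. Its induction and monotonicity steps coincide with the paper's: Proposition~\ref{P:260705} gives well-posedness and the lower bound, Remark~\ref{R:260709} transfers bounds to $Y^{(k)}_-+\Psi^{(k)}_i$, and the first alternative of Corollary~\ref{C:260707} gives $Y^{(k)}\ge Y^{(k+1)}$. Your reading of \eqref{lower-bound:Y-Psi-k} for $k\ge1$ matches the paper's implicit convention $\mathbb N=\{1,2,\dots\}$, since $Y^{(0)}=0$.

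The routes differ in the convergence step. The paper notes that $Y^{(k)}\ge Y^{(k+1)}$ also holds at jump times, so Remark~\ref{R:260709} makes $Y^{(k)}_-+\Psi^{(k)}_i$ monotone in $k$ as well. The drivers depend on $(y,\psi)$ only through $y$ and $y+\psi_i$. Hence every ingredient of $f^{(k)}$ converges pointwise and boundedly, and $f^{(k)}$ is Cauchy in $L^2$ by dominated convergence. The standard a priori estimate for BSDEs with given drivers then gives the full norm in \eqref{eq:penalized-iteration-convergence} at once. This needs only continuity of the driver, no Lipschitz bounds.

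You instead treat the drivers as Lipschitz functions of current and previous iterates and run a contraction-type recursion. This works, with three remarks:
\begin{itemize}
\item The cross term $2K|\Delta Y|\,|\Delta\Psi^{\mathrm{prev}}|$ can be split by Young's inequality with an arbitrarily small coefficient on $|\Delta\Psi^{\mathrm{prev}}|^2$. The inflated coefficient multiplies $|\Delta Y|^2$, which vanishes anyway, so $\delta<1$ is automatic and exponential weights are never needed.
\item To iterate, you should record the uniform $L^2$ bounds on $(Z^{(k)},\Psi^{(k)})$.
\item The price is quantitative Lipschitz control of $\widehat H$, which the paper avoids.
\end{itemize}

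Even the recursion is superfluous. All $f^{(k)}$ are uniformly bounded and the terminal values coincide, so It\^o's formula gives directly
\[
\E\int_0^T\Bigl(|\Delta Z|^2+\sum_i\theta_i|\Delta\Psi_i|^2\Bigr)\,dt\le C\,\E\int_0^T|Y^{(k)}-Y^{(m)}|\,dt\to0.
\]
This is exactly the argument the paper uses in Lemma~\ref{L:singular-approximation}.

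One small slip: your $B^L$ is increasing in $t$ with $B^L_T=L$. The uniform lower bound on $\mathscr N$ is therefore $\theta_{\min}B^L_0\mathbf I_n$, not $\theta_{\min}B^L_T\mathbf I_n$.
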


\begin{proof}
We first show by induction that the sequence is well defined. For $k=0$, we have
$\widehat C(Y^{(0)})=0$ and
$\widehat H(Y^{(0)},\Psi^{(0)})=0$, so Proposition~\ref{P:260705} gives a unique bounded solution of \eqref{BSDE-Lk}. It also shows that $Y^{(1)}$ is bounded and uniformly positive definite.
Suppose now that the solution at level $k$ has been constructed and that $Y^{(k)}$ is bounded and uniformly positive definite. By Remark~\ref{R:260709}, $Y^{(k)}_-+\Psi^{(k)}_i$ is uniformly positive definite for every $i = 1, \ldots, n$. Consequently,
$\widehat C(Y^{(k)})$ and
$\widehat H(Y^{(k)}_-,\Psi^{(k)})$ are bounded and predictable. Proposition~\ref{P:260705} therefore gives a unique bounded solution at level $k+1$ and shows that $Y^{(k+1)}$ is bounded and uniformly positive definite. This proves the induction claim.

Corollary~\ref{C:260707}, applied successively to the equations at levels $k$ and $k+1$, yields
$Y^{(k)}\geq Y^{(k+1)}$ for every $k\in \mathbb N$. Proposition~\ref{P:260705}\ref{Lemma-L.3} gives the lower bound in \eqref{lower-bound:Y-Psi-k}. Hence $Y^{(k)}$ converges pointwise to a bounded process $Y^L$, and dominated convergence gives
\begin{equation*}
\lim_{k \nearrow \infty}  \E\left[\int_0^T|Y^{(k)}_t-Y^L_t|^2\dt\right]=0.
\end{equation*}
For $k\in \mathbb{N}$, let $f^{(k)}$ denote the driver of \eqref{BSDE-Lk}. By Remark~\ref{R:260709}, the processes
$Y^{(k)}_-+\Psi^{(k)}_i$ are monotone in $k$ and remain in a common compact subset of the positive-definite region. Together with \eqref{lower-bound:Y-Psi-k}, this implies
\[
\lim_{k, j \nearrow \infty} 
\E\left[\int_0^T|f^{(k)}_t-f^{(j)}_t|^2\dt\right]
=0.
\]
Subtracting the equations at levels $k$ and $j$ and applying the Burkholder--Davis--Gundy inequality now yields
\begin{align*}
\E\left[\sup_{0\leq t\leq T}|Y^{(k)}_t-Y^{(j)}_t|^2 + 
\int_0^T|Z^{(k)}_t-Z^{(j)}_t|^2\dt
+\sum_{i=1}^n
\int_0^T
|\Psi^{(k)}_{i,t}-\Psi^{(j)}_{i,t}|^2\theta_i\dt\right]
\leq
\kappa
\E\left[\int_0^T|f^{(k)}_t-f^{(j)}_t|^2\dt\right]
\end{align*}
for some constant $\kappa>0$. Thus, $Y^{(k)}$, $Z^{(k)}$, and $\Psi^{(k)}$ converge in the spaces appearing in \eqref{eq:penalized-iteration-convergence}; denote their limits by $Y^L$, $Z^L$, and $\Psi^L$. Passing to the limit in \eqref{BSDE-Lk} and recalling \eqref{lower-bound:Y-Psi-k} show that $(Y^L,Z^L,\Psi^L)$ solves \eqref{BSDE-L}.
\end{proof}

\begin{lemma}\label{L:260707}
Fix $0\leq t<u\leq T$ and suppose Assumption~\ref{ass1} holds. Let
\[
(Y,Z,\Psi)
\in
S^2(t,u;\mathbb S^n)
\times L^2_{\mathcal P}(t,u;\mathbb S^n)
\times L^2_{\mathcal P}(t,u;(\mathbb S^n)^n)
\]
satisfy the dynamics and positivity conditions in \eqref{BSDE-L} on $[t,u]$, without imposing a terminal condition. Let $x\in\R^n$, let
$\xi,\beta\in L^2_{\mathcal P}(t,u;\R^n)$, and let
\[
X_s=x-\int_t^s\xi_r\,\dd r-\int_t^s\beta_r\odot\dd N_r,
\qquad s\in[t,u].
\]
If either $Y\in S^\infty(t,u;\mathbb S^n)$ or both
$\xi\in L^\infty_{\mathcal P}(t,u;\R^n)$ and $\beta=0$, then
\begin{align}\label{eq:260707.2}
x^\top Y_t x
\leq
\E_t\left[
X_u^\top Y_uX_u
+\int_t^u
\left(
\xi_s^\top\Sigma_s\xi_s
+X_s^\top\Lambda_sX_s
+\beta_s^\top\Gamma_s\beta_s
\right)\dd s
\right].
\end{align}
If $Y$ is bounded and uniformly positive definite, the feedback equation
\begin{align}\label{eq:260707.4}
\widehat X_s
=
x-\int_t^s\Sigma_r^{-1}Y_{r-}\widehat X_{r-}\,\dd r
-\int_t^s
\sN_r(Y_{r-},\Psi_r)^{-1}
\sM(Y_{r-},\Psi_r)\widehat X_{r-}\odot\dd N_r, \qquad s \in [t,u],
\end{align}
admits a unique strong solution, and
\begin{align*}
\widehat\xi
=
\Sigma^{-1}Y_-\widehat X_-,
\qquad
\widehat\beta
=
\sN(Y_-,\Psi)^{-1}\sM(Y_-,\Psi)\widehat X_-
\end{align*}
both belong to $L^2_{\mathcal P}(t,u;\R^n)$.  Moreover, equality holds in
\eqref{eq:260707.2} if and only if
$\xi=\widehat\xi$ and $\beta=\widehat\beta$.
\end{lemma}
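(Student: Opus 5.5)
The plan is a completion-of-squares verification argument based on It\^o's formula for $X^\top Y X$. Apply the product rule to $X_s^\top Y_s X_s$ on $[t,u]$. The continuous part yields, from $\dd X=-\xi\,\ds$ and the driver of \eqref{BSDE-L},
\[
\bigl(-X^\top\Lambda X+X^\top Y\Sigma^{-1}Y X+X_-^\top\sM^\top\sN^{-1}\sM X_- -2\xi^\top YX\bigr)\ds .
\]
At a jump of $N_i$, $X$ jumps by $-\beta_{i}e_i$ and $Y$ jumps by $\Psi_i$. The jump of $X^\top YX$ is therefore
\[
(X_--\beta_ie_i)^\top(Y_-+\Psi_i)(X_--\beta_ie_i)-X_-^\top Y_-X_- .
\]
Compensating with intensity $\theta_i$ and summing over $i$ gives
\[
\beta^\top\diag(\theta_j(Y_-+\Psi_j)_{jj})\beta-2\beta^\top\sM(Y_-,\Psi)X_-+X_-^\top\Bigl(\sum_i\theta_i\Psi_i\Bigr)X_- .
\]
The last term cancels against the compensator correction in the $\widetilde N$-form of the driver. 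Adding the running cost $\xi^\top\Sigma\xi+X^\top\Lambda X+\beta^\top\Gamma\beta$, the $\Lambda$ terms cancel, and the integrand of the drift becomes the sum of two squares:
\[
(\xi-\Sigma^{-1}YX)^\top\Sigma(\xi-\Sigma^{-1}YX)
+(\beta-\sN^{-1}\sM X_-)^\top\sN(\beta-\sN^{-1}\sM X_-),
\]
with $\sN=\sN(Y_-,\Psi)>0$ by $Y_-+\Psi_i>0$ and $\Gamma\ge0$. Hence, modulo local martingales,
\[
X_u^\top Y_uX_u+\int_t^u(\text{cost})\,\ds-x^\top Y_tx=\int_t^u(\text{squares})\,\ds+\text{local martingale}\ge \text{local martingale}.
\]

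The main obstacle is removing the local martingale under each of the two hypotheses. The stochastic integrals are $\int X^\top Z X\,\dd W$ and the $\widetilde N$-integrals of the jump increments. I localize with stopping times $\tau_k\uparrow u$ and take $\E_t$, which gives \eqref{eq:260707.2} with $u$ replaced by $\tau_k$. Then I pass to the limit.
\begin{itemize}
\item If $Y$ is bounded, $|X_{\tau_k}^\top Y_{\tau_k}X_{\tau_k}|\le \|Y\|_\infty\sup|X|^2$. This is integrable since $\xi,\beta\in L^2$ and $\sup|X|^2$ has a finite moment by BDG/Doob for the compensated Poisson integral. Dominated convergence then handles the terminal term, and monotone convergence handles the nonnegative cost.
\item If $\beta=0$ and $\xi$ is bounded, $X$ is bounded and continuous. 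Since $Y\in S^2$, the term $X^\top YX$ is dominated by $c\sup|Y|\in L^2$, and the cost terms are integrable. The same limit argument applies.
\end{itemize}
In neither case does the discarded sum of squares need integrability, because it is nonnegative and monotone convergence applies.

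For the feedback part, assume $Y$ is bounded and uniformly positive definite. By Remark~\ref{R:260709}, $Y_-+\Psi_i$ is then uniformly positive definite, so $\sN^{-1}$ is bounded. Also $\Psi$ is bounded as a difference of bounded processes, so the feedback coefficients $\Sigma^{-1}Y_-$ and $\sN^{-1}\sM$ are bounded predictable matrices. The linear SDE \eqref{eq:260707.4} therefore has a unique strong solution with $\E\sup|\widehat X|^2<\infty$ (standard Lipschitz SDE theory with jumps), and $\widehat\xi,\widehat\beta\in L^2_{\mathcal P}$.

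For the equality characterization, note that with $Y$ bounded the local martingales are true martingales. Indeed, $Z$ enters only through $\int X^\top ZX\,\dd W$, which is handled by localization together with the uniform integrability shown above. We then get the exact identity
\[
\text{RHS of }\eqref{eq:260707.2}-x^\top Y_tx=\E_t\Bigl[\int_t^u(\text{squares})\,\ds\Bigr].
\]
This vanishes if and only if both squares vanish $\dd s\otimes\dd\mathbb P$-a.e., because $\Sigma>0$ and $\sN>0$. That means $\xi=\Sigma^{-1}Y_-X_-$ and $\beta=\sN^{-1}\sM X_-$ a.e. Predictability lets $Y$ be replaced by $Y_-$ a.e. in time.

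By uniqueness of the feedback SDE, this forces $X=\widehat X$, and hence $(\xi,\beta)=(\widehat\xi,\widehat\beta)$. Conversely, plugging $(\widehat\xi,\widehat\beta)$ in makes the squares vanish, so equality holds.
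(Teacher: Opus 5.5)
Your proposal is correct and follows essentially the same route as the paper: It\^o's formula for $X^\top YX$, completing squares into the $\Sigma$- and $\sN(Y_-,\Psi)$-weighted quadratic forms, localization and a limit passage, with the equality case deduced from the resulting exact identity and uniqueness of the linear feedback SDE with bounded coefficients. The one difference is presentational. The paper localizes explicitly with $\tau_m=\inf\{s\geq t:|X_s|\geq m\}\wedge u$ and passes to the limit in one line, whereas you carry out the dominated/monotone convergence argument separately under the two hypotheses, which fills in detail the paper leaves implicit.
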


\begin{proof}
For $m\in\mathbb N$, set
$\tau_m=\inf\{s\geq t:|X_s|\geq m\}\wedge u$.
Applying It\^o's formula to $X^\top YX$ on $[t,\tau_m]$ and completing squares yields
\begin{align}\label{eq:260707.5}
&~\E_t\left[
X_{\tau_m}^\top Y_{\tau_m}X_{\tau_m}
+\int_t^{\tau_m}
\left(
\xi_s^\top\Sigma_s\xi_s
+X_s^\top\Lambda_sX_s
+\beta_s^\top\Gamma_s\beta_s
\right)\dd s
\right]
\nn\\
=&~
x^\top Y_tx
+\E_t\left[
\int_t^{\tau_m}
(\xi_s-\Sigma_s^{-1}Y_sX_s)^\top
\Sigma_s
(\xi_s-\Sigma_s^{-1}Y_sX_s)\,\dd s
\right]
\nn\\
&~
+\E_t\left[
\int_t^{\tau_m}
\bigl(
\beta_s-\sN_s^{-1}\sM_sX_s
\bigr)^\top
\sN_s
\bigl(
\beta_s-\sN_s^{-1}\sM_sX_s
\bigr)\,\dd s
\right]\nn\\
\geq&~ x^\top Y_tx,
\end{align}
where $\sM=\sM(Y_{-},\Psi)$ and
$\sN=\sN(Y_{-},\Psi)$. Letting $m\nearrow\infty$ gives
\eqref{eq:260707.2}.

If $Y$ is bounded and uniformly positive definite, then
$\Sigma^{-1}Y_-$ and
$\sN(Y_-,\Psi)^{-1}\sM(Y_-,\Psi)$ are bounded and predictable.
Hence \eqref{eq:260707.4} has a unique square-integrable strong
solution. The equality statement follows from \eqref{eq:260707.5}
and uniqueness of this linear SDE.
\end{proof}

We are now ready to prove Theorem~\ref{thm:BSDE-L}.

\begin{proof}[Proof of Theorem~\ref{thm:BSDE-L}]
Lemma~\ref{L:penalized-approximation} gives a bounded solution
$(Y^L,Z^L,\Psi^L)$ of \eqref{BSDE-L} and the lower bound in
\eqref{eq:260708.1}.

We next establish the upper bound in \eqref{eq:260708.1}. Choose $\kappa>0$ such that
$Y^L\leq\kappa\mathbf I_n$. For $\varepsilon\in(0,T)$, set
$u=T-\varepsilon$ and let
$\overline Y^\varepsilon=\overline y^\varepsilon\mathbf I_n$, where $\overline y^\varepsilon$ solves
\[
-\dd\overline y^\varepsilon_s
=
\left(
\lambda^\star
-\frac{2}{T-s}\overline y^\varepsilon_s
+\frac{\eta^\star}{(T-s)^2}
\right)\dd s,
\qquad
\overline y^\varepsilon_u=\kappa.
\]
The unique solution for this ODE is  
\[
\overline Y^{\varepsilon}_s
=
\frac{1}{q_s}
\left(
q_u\kappa
+
\int_s^u
q_r
\left(
\lambda^\star+\frac{\eta^\star}{(T-r)^2}
\right)\dd r
\right)\mathbf I_n,
\qquad s\in[0,u],
\]
where
$
q_s={(T-s)^2}/{T^2}$.
We now apply Corollary~\ref{C:260707} on
\([0,u]\), with $Y^1 = \overline Y^\varepsilon$,  $Y^2 = Y^L$, $C^1 = \widehat C(\overline Y^\varepsilon)$, $H^1 = 0$, $C^2 = \widehat C(Y^L)$, $H^2 = \widehat H(Y^L_-, \Psi^L)$,  
\[
R^1_s
=
\lambda^\star\mathbf{I}_n
-\frac{2}{T-s}\overline Y^\varepsilon_s
+\frac{\eta^\star}{(T-s)^2}\mathbf{I}_n
-
F_s(\overline Y^\varepsilon_s,C^1_s), \qquad s \in [0,u],
\]
and $R^2 = \Lambda$. Recalling \(\Sigma^{-1}\ge(\eta^\star)^{-1}\mathbf{I}_n\), we have
\[
R^1_s-R^2_s
\ge
-\frac{2}{T-s}\overline Y^\varepsilon_s
+\frac{\eta^\star}{(T-s)^2}\mathbf{I}_n
+\overline Y^\varepsilon_s\Sigma_s^{-1}\overline Y^\varepsilon_s \ge
\left(
\frac{\overline y^\varepsilon_s}{\sqrt{\eta^\star}}
-\frac{\sqrt{\eta^\star}}{T-s}
\right)^2
\mathbf{I}_n
\ge0 , \qquad s \in [0,u].
\]
Hence we get
$
Y^L\le \overline Y^\varepsilon$ on $[0,u]$.
Letting now $\varepsilon\searrow0$ yields
the upper bound in
\eqref{eq:260708.1}.

It remains to prove uniqueness. Let $(Y',Z',\Psi')$ be another solution
of \eqref{BSDE-L}. Taking conditional expectations in the BSDE and using
the positive semidefiniteness of
\(Y'\Sigma^{-1}Y'\) and
$
\sM(Y'_{-},\Psi')^\top
\sN(Y'_{-},\Psi')^{-1}
\sM(Y'_{-},\Psi')
$
yields 
\[
Y'_t
\leq
\E_t\left[
L\mathbf I_n+\int_t^T\Lambda_s\,\dd s
\right]
\leq
\bigl(L+\lambda^\star(T-t)\bigr)\mathbf I_n,
\qquad t\in[0,T].
\]
Hence \(Y'\in S^\infty(\mathbb S^n)\), and, by
Remark~\ref{R:260709},
\(\Psi'\in L^\infty_{\mathcal P}((\mathbb S^n)^n)\).
 Proposition~\ref{P:260706} with $u=\tau=T$
and Lemma~\ref{L:260708} now imply that $Y'_-+\Psi'_i$ is uniformly
positive definite for each $i = 1, \ldots, n$.
The feedback coefficients associated with both
solutions are therefore bounded, and applying
Corollary~\ref{C:260707} in both directions gives $Y^L=Y'$, hence also $\Psi^L=\Psi'$. It\^o's isometry
gives $Z^L=Z'$.

Finally, Lemma~\ref{L:260707}, applied with $u=T$ and
$Y_T^L=L\mathbf I_n$, proves that the control in \eqref{control-L} is
optimal and that its value is given by \eqref{value-L}.
\end{proof}

\subsection{Proof of the constrained result}\label{SS:4.4}

We first isolate the convergence of the finite-terminal solutions as
$L\nearrow\infty$. This result will also be used in the proof of Proposition~\ref{P:deterministic-reduction}. We then establish a bound for the
dark-pool feedback coefficient before proving Theorem~\ref{thm:BSDE}.

\begin{lemma}\label{L:singular-approximation}
Suppose Assumption~\ref{ass1} holds. For $L>0$, let
$(Y^L,Z^L,\Psi^L)$ be the solution of \eqref{BSDE-L}. Then
$Y^L\leq Y^N$ whenever $0<L\leq N$. Moreover, there exist
processes $(Y,Z,\Psi)$ on $[0,T)$ such that
\[
(Y,Z,\Psi)\in
S^\infty(0,t;\mathbb S^n)
\times L^2_{\mathcal P}(0,t;\mathbb S^n)
\times L^\infty_{\mathcal P}(0,t;(\mathbb S^n)^n), \qquad  t\in(0,T),
\]
and
\[
Y_t=\lim_{L\nearrow\infty}Y^L_t, \qquad  t\in[0,T).
\]
Moreover, for every $t \in [0, T)$,
\begin{equation}\label{convergence-Z-Psi}
\lim_{L\nearrow\infty}
\E\left[\sup_{s\in[0,t]}|Y^L_s-Y_s| + 
\int_0^t|Z^L_s-Z_s|^2\dd s
+\sum_{i=1}^n\int_0^t
|\Psi^L_{i,s}-\Psi_{i,s}|^2\theta_i\dd s
\right]=0.
\end{equation}
The triple $(Y,Z,\Psi)$ satisfies \eqref{Riccatilocal} and \eqref{estimate:Y}.
In particular, it is a solution of \eqref{BSDE} in the sense of
Definition~\ref{def:solution}.
\end{lemma}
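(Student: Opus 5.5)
Monotonicity comes first. For $0<L\le N$ I apply Corollary~\ref{C:260707} on $[0,T]$ with $Y^1=Y^N$ and $Y^2=Y^L$, using $\zeta^1=N\mathbf I_n\ge L\mathbf I_n=\zeta^2$ and $R^1=R^2=\Lambda$. The feedback coefficients are $C^j=\widehat C(Y^j)$ and $H^j=\widehat H(Y^j_-,\Psi^j)$; they are bounded by Theorem~\ref{thm:BSDE-L}, the lower bound \eqref{eq:260708.1}, and Remark~\ref{R:260709}. This gives $Y^L\le Y^N$. Since $Y^L$ is nondecreasing in $L$ in the Loewner order and is bounded above uniformly in $L$ by the upper bound in \eqref{eq:260708.1}, the limit $Y_t=\lim_{L\nearrow\infty}Y^L_t$ exists for every $t\in[0,T)$. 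Monotone convergence of quadratic forms gives this pointwise, and symmetric matrices are recovered by polarization. Passing to the limit in \eqref{eq:260708.1} with $L\to\infty$ gives \eqref{estimate:Y}, because the lower bound increases to $\eta_\star\theta_{\max}/(e^{\theta_{\max}(T-t)}-1)$. In particular, $\sigma_{\min}(Y_t)\to\infty$ as $t\nearrow T$, and $Y\ge0$.

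Next I fix $t<u<T$ and work on $[0,u]$. There the bounds \eqref{eq:260708.1} hold uniformly for $L\ge1$: the upper bound is $(\eta^\star/(T-u)+\lambda^\star T/3)$, and the lower bound is some constant $c_u>0$ independent of $L\ge1$. By Remark~\ref{R:260709}, the processes $Y^L_-+\Psi^L_i$ inherit the same two-sided bounds, so they lie in a fixed compact subset of the positive definite cone. The $\Psi^L$ are therefore uniformly bounded, and the drivers
\[
f^L=\Lambda-Y^L\Sigma^{-1}Y^L-\sM(Y^L_-,\Psi^L)^\top\sN(Y^L_-,\Psi^L)^{-1}\sM(Y^L_-,\Psi^L)
\]
are uniformly Lipschitz in $(Y,\Psi)$ on this compact set. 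I will then run a Cauchy argument on $[0,u]$. Subtracting the equations for $L$ and $N$ and applying Itô's formula to $|Y^L-Y^N|^2$ on $[s,u]$ bounds
\[
\E\Big[\sup_{[0,u]}|Y^L-Y^N|^2+\int_0^u|Z^L-Z^N|^2+\sum_i\int_0^u|\Psi^L_i-\Psi^N_i|^2\theta_i\Big]
\]
by $\kappa\,\E|Y^L_u-Y^N_u|^2$ plus $\kappa\,\E\int_0^u|Y^L-Y^N|^2$. The $\Psi$-differences in the driver are absorbed into the left side by Young's inequality, with a constant depending on $u$, and a Gronwall/BDG step finishes the estimate. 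Both terms on the right tend to zero by the monotone pointwise convergence together with dominated convergence, since everything is uniformly bounded on $[0,u]$. This yields limits $Z,\Psi$ on $[0,u]$. These limits are consistent as $u$ varies, so they define processes on $[0,T)$, and \eqref{convergence-Z-Psi} follows for every $t<T$ by taking $u\in(t,T)$. The boundedness of $\Psi$ survives the $L^2$-limit, so $(Y,Z,\Psi)$ lies in the stated spaces.

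Finally, I pass to the limit in the integral form of \eqref{BSDE-L} between $t$ and $u$. This is possible because $f^L\to f$ in $L^1$ on $[0,u]$, by the Lipschitz bound on the compact set and the convergence already proved. The limit shows that $(Y,Z,\Psi)$ satisfies \eqref{Riccatilocal}. The positivity condition $Y_-+\Psi_i>0$ follows because $Y^L_-+\Psi^L_i\ge c_u\mathbf I_n$ passes to the $L^2$-limit along an a.e.\ convergent subsequence. Together with \eqref{estimate:Y}, this shows that $(Y,Z,\Psi)$ is a solution in the sense of Definition~\ref{def:solution}.

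I expect the main obstacle to be the Cauchy estimate. The terminal values $Y^L_u$ differ, and the jump part of the driver depends nonlinearly on $\Psi$ through $\sN^{-1}$. This is handled by working on the fixed compact region of positive definite matrices provided by the $L$-uniform bounds on $[0,u]$. Convergence of the terminal term $\E|Y^L_u-Y^N_u|^2\to0$ is supplied by the monotone pointwise limit at the fixed time $u$.
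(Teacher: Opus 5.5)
Your proof is correct. It follows the same overall architecture as the paper: a monotone limit, $L$-uniform two-sided bounds on $[0,u]$ that keep $Y^L$ and $Y^L_-+\Psi^L_i$ in a fixed compact subset of the positive definite cone, an It\^o-based Cauchy estimate, and passage to the limit in the drivers. Two steps, however, are justified differently.

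For the monotonicity $Y^L\leq Y^N$, the paper does not invoke Corollary~\ref{C:260707}. It reads the ordering off the verification identity \eqref{value-L}: since $J^L_t\leq J^N_t$ for every control, one has $x^\top Y^L_tx=V^L_t(x)\leq V^N_t(x)=x^\top Y^N_tx$. Your comparison argument is equally valid; it is exactly how the paper proves uniqueness in Theorem~\ref{thm:BSDE-L}. It stays within BSDE theory, but you must verify boundedness of the feedback coefficients, which the control representation makes unnecessary.

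For the Cauchy estimate, the paper avoids Lipschitz bounds and Young absorption. Since the drivers $f^L$ are uniformly bounded on $[0,t]$ for $L\geq L_0$, it uses $2\langle Y^N-Y^L,f^N-f^L\rangle\leq\kappa|Y^N-Y^L|$. This controls the $Z$- and $\Psi$-differences by $\E|Y^N_t-Y^L_t|^2+\kappa\E\int_0^t|Y^N_s-Y^L_s|\,\dd s$. The paper then obtains $\E\sup|Y^L-Y|\to0$ by BDG only after identifying \eqref{Riccatilocal}. Your Lipschitz route takes a little more work, and the Gr\"onwall step you mention is not actually needed. In exchange, it gives the stronger $S^2$-Cauchy property of $Y^L$, and hence the c\`adl\`ag version of $Y$, directly.
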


\begin{proof}
The identity \eqref{value-L} gives $Y^L\leq Y^N$ whenever
$0<L\leq N$. The bounds in \eqref{eq:260708.1} show that
$(Y^L)_{L>0}$ is locally bounded from above. Hence its increasing
pointwise limit $Y$ is well defined, and passing to the limit in
\eqref{eq:260708.1} gives \eqref{estimate:Y}. In
particular,
$
\lim_{t\nearrow T}\sigma_{\min}(Y_t)=\infty$.
Moreover, local boundedness and dominated convergence imply that,
for every $t\in[0,T)$,
\begin{align}\label{eq:260730.2}
\lim_{N,L\nearrow\infty} \left(\E[|Y^N_t-Y^L_t|^2]
+\E\left[\int_0^t|Y^N_s-Y^L_s|\dd s\right]
\right) = 0.
\end{align}

Fix now $t \in [0, T)$ and $L_0>0$. The bounds in \eqref{eq:260708.1}, together
with Remark~\ref{R:260709}, imply that, uniformly for $L\geq L_0$,
the processes $Y^L$ and $Y^L_-+\Psi_i^L$, for every
$i=1,\ldots,n$, remain in a compact subset of the positive definite
region on $[0,t]$. Applying It\^o's formula to
$|Y^N-Y^L|^2$ therefore yields, for some constant $\kappa>0$,
\[
\E\left[
\int_0^t|Z^N_s-Z^L_s|^2\dd s
+\sum_{i=1}^n\int_0^t
|\Psi^N_{i,s}-\Psi^L_{i,s}|^2\theta_i\dd s
\right]
\leq
\E[|Y^N_t-Y^L_t|^2]
+\kappa\E\left[\int_0^t|Y^N_s-Y^L_s|\dd s\right].
\]
The right-hand side converges to zero as $N,L\nearrow\infty$ by
\eqref{eq:260730.2}. Thus
$(Z^L,\Psi^L)_{L>0}$ is locally Cauchy in $L^2$.
For every $t\in[0,T)$, completeness yields a predictable limit
$(Z^{(t)},\Psi^{(t)})$ on $[0,t]$. If
$0\leq t_1<t_2<T$, uniqueness of the $L^2$-limit implies that
$(Z^{(t_1)},\Psi^{(t_1)})$ agrees with the restriction of
$(Z^{(t_2)},\Psi^{(t_2)})$ to $[0,t_1]$. Hence these local limits
define predictable processes $Z$ and $\Psi$ on $[0,T)$. The last
two terms in \eqref{convergence-Z-Psi} therefore converge to zero.

Furthermore, the local bounds on $Y^L$ and
$Y^L_-+\Psi_i^L$, for every $i=1,\ldots,n$, imply that
$(\Psi^L)_{L\geq L_0}$ is uniformly bounded on $[0,t]$. Since $\Psi^L$ converges to $\Psi$ in $L^2$, a subsequence converges
to $\Psi$ almost everywhere. The uniform bound therefore passes to
the limit, and
$
\Psi\in
L^\infty_{\mathcal P}(0,t;(\mathbb S^n)^n)$.
The local bounds and the convergence established above imply that
the drivers converge in $L^1$ on $[0,t]$. Passing to the limit in
the integral form of \eqref{BSDE-L} therefore gives
\eqref{Riccatilocal}. In particular, $Y$ admits a c\`adl\`ag version,
which belongs to $S^\infty(0,t;\mathbb S^n)$. Applying the
Burkholder--Davis--Gundy inequality then yields
\[
\lim_{L\nearrow\infty}
\E\left[\sup_{s\in[0,t]}|Y^L_s-Y_s|\right]=0.
\]
This completes the proof of \eqref{convergence-Z-Psi} and concludes the argument.
\end{proof}

\begin{lemma}\label{lem:bound-NM}
Let $\eta_\star,\eta^\star,\lambda^\star>0$ and suppose that $\Gamma\geq0$. Let $Y$ be a c\`adl\`ag
$\mathbb S^n$-valued process and let
$\Psi=(\Psi_1,\ldots,\Psi_n)$ be a 
$(\mathbb S^n)^n$-valued process on $[0,T)$. Suppose that
\[
\frac{\eta_\star\theta_{\max}}
{e^{\theta_{\max}(T-t)}-1}\mathbf{I}_n
\leq
Y_{t-}+\Psi_{i,t}
\leq
\left(
\frac{\eta^\star}{T-t}
+\frac{\lambda^\star}{3}(T-t)
\right)\mathbf{I}_n
\]
for every $t\in[0,T)$ and every $i=1,\ldots,n$. Then 
$
\left|
\sN(Y_{-},\Psi)^{-1}\sM(Y_{-},\Psi)
\right|
$ is bounded.
\end{lemma}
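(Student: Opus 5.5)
The plan is to bound $\sN^{-1}\sM$ by conjugating with the diagonal part of $\sN$, so that the possibly non-diagonal $\Gamma$ only enters through a contraction. Throughout, write $A_i=Y_-+\Psi_i$, let
\[
\ell(t)=\frac{\eta_\star\theta_{\max}}{e^{\theta_{\max}(T-t)}-1},
\qquad
m(t)=\frac{\eta^\star}{T-t}+\frac{\lambda^\star}{3}(T-t)
\]
be the lower and upper bounds in the hypothesis, and set $D=\diag(\theta_1(A_1)_{11},\ldots,\theta_n(A_n)_{nn})$. Then $\sM(Y_-,\Psi)=\sum_i\theta_i e_ie_i^\top A_i$; in words, its $i$-th row is $\theta_i$ times the $i$-th row of $A_i$. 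Also $\sN(Y_-,\Psi)=\Gamma+D$, and $D\geq\theta_{\min}\ell(t)\mathbf I_n>0$ because $(A_i)_{ii}\geq\ell(t)$.

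\textbf{Step 1: factor out $\Gamma$.} We write
\[
\sN^{-1}=D^{-1/2}\bigl(\mathbf I_n+D^{-1/2}\Gamma D^{-1/2}\bigr)^{-1}D^{-1/2}.
\]
Since $\Gamma\geq0$, the middle factor $P=(\mathbf I_n+D^{-1/2}\Gamma D^{-1/2})^{-1}$ is symmetric with $0<P\leq\mathbf I_n$, so its operator norm is at most $1$. Hence
\[
|\sN^{-1}\sM|\leq\|D^{-1/2}\|_{\mathrm{op}}\,\|P\|_{\mathrm{op}}\,|D^{-1/2}\sM|
\leq(\theta_{\min}\ell(t))^{-1/2}\,|D^{-1/2}\sM|.
\]
This step is the only genuine obstacle. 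Off-diagonal entries of $\Gamma$ rule out a direct entrywise argument, and the naive bound $|\sN^{-1}|\leq|D^{-1}|$ fails in Frobenius norm. The congruence above sidesteps both problems.

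\textbf{Step 2: bound the rows of $D^{-1/2}\sM$.} The $i$-th row of $D^{-1/2}\sM$ is $\sqrt{\theta_i}\,e_i^\top A_i/\sqrt{(A_i)_{ii}}$. Its squared norm is
\[
\theta_i\,\frac{(A_i^2)_{ii}}{(A_i)_{ii}}.
\]
Apply Cauchy--Schwarz in the inner product induced by $A_i$, as in the proof of Proposition~\ref{P:260706}: this gives $(A_i^2)_{ii}=e_i^\top A_i\,A_i e_i\leq\lambda_{\max}(A_i)\,(A_i)_{ii}$. Summing over $i$ then yields
\[
|D^{-1/2}\sM|^2\leq\Theta\,m(t).
\]

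\textbf{Step 3: compare $m$ and $\ell$.} Combining Steps 1 and 2 gives
\[
|\sN^{-1}\sM|^2\leq\frac{\Theta}{\theta_{\min}}\,\frac{m(t)}{\ell(t)},
\]
so it remains to show that $m/\ell$ is bounded on $[0,T)$.
\begin{itemize}
\item Using $e^x-1\leq xe^x$, we get $\ell(t)\geq\eta_\star e^{-\theta_{\max}T}/(T-t)$.
\item Directly from its definition, $m(t)\leq(\eta^\star+\lambda^\star T^2/3)/(T-t)$.
\end{itemize}
Therefore
\[
\frac{m(t)}{\ell(t)}\leq\frac{(\eta^\star+\lambda^\star T^2/3)\,e^{\theta_{\max}T}}{\eta_\star},
\]
a constant independent of $t$, which proves the lemma.
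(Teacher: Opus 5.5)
Your proof is correct. Its skeleton matches the paper's: bound $|\sN^{-1}\sM|$ by a constant times (a power of) $m(t)/\ell(t)$, then use that both barriers blow up like $1/(T-t)$.

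Your matrix estimate, however, is more elaborate than necessary. The paper simply notes that $\Gamma\geq0$ and $(Y_-+\Psi_i)_{ii}\geq\ell(t)$ give $\sN\geq\theta_{\min}\ell(t)\mathbf{I}_n$. Hence $\sN^{-1}\leq(\theta_{\min}\ell(t))^{-1}\mathbf{I}_n$, which is an operator-norm bound. Combined with the crude estimate $|\sM|\leq\sqrt n\,\theta_{\max}m(t)$, this gives $|\sN^{-1}\sM|\leq(\theta_{\min}\ell(t))^{-1}|\sM|\leq(\sqrt n\,\theta_{\max}/\theta_{\min})\,m(t)/\ell(t)$. So the ``genuine obstacle'' in your Step~1 is not really there: off-diagonal entries of $\Gamma$ are harmless once you work with the Loewner order and the operator norm.

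Your side remark that $|\sN^{-1}|\leq|D^{-1}|$ fails in Frobenius norm is false. The relation $0<\sN^{-1}\leq D^{-1}$ orders the eigenvalues (Weyl monotonicity), and hence the Frobenius norms. What can fail is a comparison after multiplication, such as $|\sN^{-1}\sM|\leq|D^{-1}\sM|$, because $A\leq B$ does not imply $A^2\leq B^2$. None of your later steps relies on that remark, so the proof stands.

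What your congruence $\sN^{-1}=D^{-1/2}PD^{-1/2}$ and the row estimate $(A_i^2)_{ii}\leq\lambda_{\max}(A_i)(A_i)_{ii}$ buy is a sharper bound, $|\sN^{-1}\sM|^2\leq(\Theta/\theta_{\min})\,m/\ell$. This is the square root of the barrier ratio rather than the ratio itself, and Step~3 supplies an explicit constant. Step~4 of the proof of Theorem~\ref{thm:BSDE} only needs boundedness, so this extra precision is not required there.
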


\begin{proof}
Set
\[
a_t=
\frac{\eta_\star\theta_{\max}}
{e^{\theta_{\max}(T-t)}-1},
\qquad
b_t=
\frac{\eta^\star}{T-t}
+\frac{\lambda^\star}{3}(T-t), 
\qquad t \in [0, T).
\]
Since $\Gamma\geq0$ we have $
\sN(Y_{-},\Psi)\geq\theta_{\min}a\mathbf{I}_n$,
and therefore
\[
\sN(Y_{-},\Psi)^{-1}
\leq
\frac{1}{\theta_{\min}a}\mathbf{I}_n.
\]
Moreover, we have
\[
\left|\sM(Y_{-},\Psi)\right|
\leq\theta_{\max}\sqrt{n}\,b.
\]
Consequently,
\begin{align*}
\left|
\sN_t(Y_{t-},\Psi_t)^{-1}\sM(Y_{t-},\Psi_t)
\right|
&\leq
\frac{1}{\theta_{\min}a_t}
\left|\sM(Y_{t-},\Psi_t)\right|\\
&
\leq
\frac{\sqrt{n}}{\theta_{\min}\eta_\star}
\bigl(e^{\theta_{\max}(T-t)}-1\bigr)
\left(
\frac{\eta^\star}{T-t}
+\frac{\lambda^\star}{3}(T-t)
\right).
\end{align*}
The right-hand side is bounded for $t\in[0,T)$, since
$r\mapsto(e^{\theta_{\max}r}-1)/r$ and
$r\mapsto r(e^{\theta_{\max}r}-1)$ are bounded on $(0,T]$.
\end{proof}

With these tools in place we can now prove 
Theorem~\ref{thm:BSDE}.  

\begin{proof}[Proof of Theorem~\ref{thm:BSDE}]
\textbf{Step 1. Existence and estimates.}

Lemma~\ref{L:singular-approximation} yields a solution
$(Y,Z,\Psi)$ of \eqref{BSDE} in the sense of
Definition~\ref{def:solution}, with
\[
(Y,Z,\Psi)\in
S^\infty(0,t;\mathbb S^n)
\times L^2_{\mathcal P}(0,t;\mathbb S^n)
\times L^\infty_{\mathcal P}(0,t;(\mathbb S^n)^n), \qquad t\in(0,T).
\]
It also gives the bounds in
\eqref{estimate:Y}.

\medskip

\textbf{Step 2. Upper and lower bounds for an arbitrary solution.}

Let $(Y',Z',\Psi')$ be another solution of \eqref{BSDE}. We first
derive an upper bound for $Y'$. Fix $t\in[0,T)$, $u\in(t,T)$, and
$x\in\R^n$. Set
\[
\xi_r=\frac{x}{u-t}\mathbf 1_{[t,u]}(r),
\qquad
X_r=\frac{u-r}{u-t}x,
\qquad r\in[t,u].
\]
Then $X_t=x$, $X_u=0$, and $\dd X_r=-\xi_r\dd r$. Since $\xi$ is
bounded and $\beta=0$, Lemma~\ref{L:260707} yields
\[
x^\top Y'_t x
\leq
\E_t\left[
\int_t^u
\left(
\xi_r^\top\Sigma_r\xi_r
+X_r^\top\Lambda_rX_r
\right)\dd r
\right].
\]
Using $\Sigma\leq\eta^\star\mathbf{I}_n$ and
$\Lambda\leq\lambda^\star\mathbf{I}_n$, we obtain
\[
x^\top Y'_t x
\leq
\left(
\frac{\eta^\star}{u-t}
+\frac{\lambda^\star}{3}(u-t)
\right)|x|^2.
\]
Letting $u\nearrow T$ gives
\begin{equation}\label{eq:270707.2}
Y'_t
\leq
\left(
\frac{\eta^\star}{T-t}
+\frac{\lambda^\star}{3}(T-t)
\right)\mathbf{I}_n,
\qquad t\in[0,T).
\end{equation}

We next derive a lower bound for $Y'$. Fix $u\in[0,T)$ and $L>0$,
and set
\[
\tau
=
\inf\left\{
s\in[u,T):
\sigma_{\min}(Y'_s)\geq L
\right\}.
\]
Since
$\lim_{s\nearrow T}\sigma_{\min}(Y'_s)=\infty$, we have
$u\leq\tau<T$. By right-continuity,
$Y'_\tau\geq L\mathbf{I}_n$.

The upper bound in \eqref{eq:270707.2} and
Lemma~\ref{L:260708} imply that, on every compact subinterval of
$[0,T)$, the stopped triple
$
\left(
Y'_{\cdot\wedge\tau},
Z'\mathbf 1_{[0,\tau]},
\Psi'\mathbf 1_{[0,\tau]}
\right)
$
satisfies the assumptions of Proposition~\ref{P:260706}.
Consequently,
\[
Y'_t
\geq
\frac{\eta_\star\Theta}
{\bigl(1+\eta_\star\Theta L^{-1}\bigr)
e^{\Theta(T-t)}-1}
\mathbf{I}_n,
\qquad t\in[0,u].
\]
Letting $L\nearrow\infty$ and using that $u\in[0,T)$ was arbitrary
gives
\begin{equation}\label{eq:260707.1}
Y'_t
\geq
\frac{\eta_\star\Theta}
{e^{\Theta(T-t)}-1}\mathbf{I}_n,
\qquad t\in[0,T).
\end{equation}
Lemma~\ref{L:260708} yields the corresponding lower and upper bounds
for $Y'_-+\Psi'_i$, for every $i=1,\ldots,n$. In particular, these
processes are bounded and uniformly positive definite on every
compact subinterval of $[0,T)$.

The lower bound in \eqref{eq:260707.1} is weaker than the lower bound
for the constructed solution in \eqref{estimate:Y}, since
$\theta_{\max}\leq\Theta$ and
$a\mapsto a/(e^{a(T-t)}-1)$ is decreasing on $(0,\infty)$. After
uniqueness has been proved, the stronger bound transfers to every
solution.

\medskip

\textbf{Step 3. Minimality of the constructed solution.}

We prove that $Y\leq Y'$. Fix $L>0$. For brevity, write
\[
\mathcal M^L_s
=
\sM(Y^L_{s-},\Psi^L_s),
\qquad
\mathcal N^L_s
=
\sN_s(Y^L_{s-},\Psi^L_s).
\]
Since $Y^L$ solves the finite-terminal BSDE with terminal value
$L\mathbf{I}_n$, 
\[
Y^L_t
=
\E_t\left[
L\mathbf{I}_n
+\int_t^T
\left(
\Lambda_s
-Y^L_s\Sigma_s^{-1}Y^L_s
-(\mathcal M^L_s)^\top
(\mathcal N^L_s)^{-1}\mathcal M^L_s
\right)\dd s
\right]
\leq
\bigl(L+\lambda^\star(T-t)\bigr)\mathbf{I}_n.
\]
On the other hand, the lower bound in \eqref{eq:260707.1} tends to
infinity as the time variable tends to $T$. Hence, for every
$u\in[0,T)$, there exists a deterministic $u'\in(u,T)$ such that
\[
Y^L_{u'}
\leq
\bigl(L+\lambda^\star(T-u')\bigr)\mathbf{I}_n
\leq Y'_{u'}.
\]
Applying Corollary~\ref{C:260707} on $[0,u']$ gives
$Y^L\leq Y'$ on $[0,u]$. Letting $L\nearrow\infty$ and using that $u\in[0,T)$ was arbitrary yields
\begin{equation}\label{minimality:Y}
Y\leq Y'
\qquad \text{on } [0,T).
\end{equation}

\medskip

\textbf{Step 4. Admissibility, optimality, and the value function.}

Fix $t\in[0,T)$ and $x\in\R^n$. Let $\widehat X$ be the state
process generated by the feedback control in \eqref{eq:260708.2},
that is,
\[
\widehat X_s
=
x-\int_t^s\widehat\xi_r\dd r
-\int_t^s\widehat\beta_r\odot\dd N_r,
\qquad s\in[t,T).
\]
The coefficients of this linear SDE are locally bounded, so
$\widehat X$, $\widehat\xi$, and $\widehat\beta$ are well defined
on $[t,T)$.

We first show that the control extends admissibly to $T$. It suffices
to prove that, for some constant $\kappa>0$,
\begin{align}
\E_t[|\widehat\beta_s|^2]
&\leq
\kappa(T-s),
\qquad s\in[t,T),
\label{eq:270708.4}
\\
\E_t\left[
\int_t^T|\widehat\xi_s|^2\dd s
\right]
&<\infty,
\label{eq:270708.5}
\\
\widehat X_u
&=
\int_u^T\widehat\xi_s\dd s
+\int_u^T\widehat\beta_s\odot\dd N_s,
\qquad u\in(t,T).
\label{eq:270710.1}
\end{align}

Lemma~\ref{L:260707}, applied to the constructed solution and its
feedback control, yields
\begin{equation}\label{eq:270707.4}
x^\top Y_t x
=
\E_t\Bigg[
\widehat X_u^\top Y_u\widehat X_u
+\int_t^u
\left(
\widehat\xi_s^\top\Sigma_s\widehat\xi_s
+\widehat X_s^\top\Lambda_s\widehat X_s
+\widehat\beta_s^\top\Gamma_s\widehat\beta_s
\right)\dd s
\Bigg],
\qquad u\in(t,T).
\end{equation}
The lower bound on $Y$ in \eqref{estimate:Y} implies that, for the
fixed $t$, there exists a constant $\kappa_1>0$ such that
\begin{equation}\label{eq:270708.6}
\E_t[|\widehat X_s|^2]
\leq
\kappa_1(T-s),
\qquad s\in[t,T).
\end{equation}
Moreover, Lemma~\ref{lem:bound-NM} and the definition of
$\widehat\beta$ give \eqref{eq:270708.4}, while
$\Sigma\geq\eta_\star\mathbf{I}_n$ and
\eqref{eq:270707.4} give \eqref{eq:270708.5}.  Now fix $u \in (t, T)$ and note that
\[
\widehat X_u - \widehat X_v
    =       \int_u^v\widehat\xi_s\,\dd s
        +\int_u^v\widehat\beta_s\odot\dd N_s, \qquad v \in (u,T). 
\]
By \eqref{eq:270708.4}, \eqref{eq:270708.5}, and \eqref{eq:270708.6} all terms converge in
$L^2$ as $v\nearrow T$, and \eqref{eq:270710.1} follows. In
particular, $\widehat X_{T-}=0$. Setting $\widehat X_T=0$, we conclude
that $(\widehat\xi,\widehat\beta)\in\mathcal A_t$.

We shall also need a slightly stronger consequence. By
Cauchy--Schwarz, the isometry for compensated Poisson integrals, and
\eqref{eq:270708.4}, there exists a constant $\kappa_2>0$ such that
\[
\frac{1}{T-u}\E_t[|\widehat X_u|^2]
\leq
\kappa_2\E_t\left[
\int_u^T|\widehat\xi_s|^2\dd s
\right]
+
\frac{\kappa_2}{T-u}
\int_u^T(T-s)\dd s,
\qquad u\in(t,T).
\]
Letting $u\nearrow T$ and using \eqref{eq:270708.5} gives
\begin{equation}\label{eq:terminal-state-rate}
\lim_{u\nearrow T}
\E_t\left[
\frac{|\widehat X_u|^2}{T-u}
\right]
=0.
\end{equation}

We now prove optimality. For every $(\xi,\beta)\in\mathcal A_t$,
the terminal constraint gives
$J_t(\xi,\beta;x)=J^L_t(\xi,\beta;x)$. Hence, for every $L>0$,
\[
\begin{split}
V_t(x)
&=
\essinf_{(\xi,\beta)\in\mathcal A_t}
J_t(\xi,\beta;x)
=
\essinf_{(\xi,\beta)\in\mathcal A_t}
J^L_t(\xi,\beta;x)
\\
&\geq
\essinf_{(\xi,\beta)\in
L^2_{\mathcal P}(t, T; \R^n)\times L^2_{\mathcal P}(t, T; \R^n)}
J^L_t(\xi,\beta;x)
=
V^L_t(x)
=
x^\top Y^L_t x.
\end{split}
\]
Letting $L\nearrow\infty$ and using
Lemma~\ref{L:singular-approximation} gives
$
x^\top Y x\leq V(x)$.
On the other hand, \eqref{estimate:Y} and
\eqref{eq:terminal-state-rate} imply that
\[
\lim_{u\nearrow T}
\E_t[
\widehat X_u^\top Y_u\widehat X_u
]
=0.
\]
Letting $u\nearrow T$ in \eqref{eq:270707.4} therefore yields
$
x^\top Y_t x
=
J_t(\widehat\xi,\widehat\beta;x)
\geq V_t(x)$.
Consequently,
$
V(x)=x^\top Y x$,
and $(\widehat\xi,\widehat\beta)$ is optimal.

\medskip

\textbf{Step 5. Uniqueness.}

Let $(Y',Z',\Psi')$ be an arbitrary solution of \eqref{BSDE}.
By \eqref{eq:270707.2} and \eqref{eq:terminal-state-rate},
\begin{equation}\label{eq:270707.3}
\lim_{u\nearrow T}
\E_t[
\widehat X_u^\top Y'_u\widehat X_u
]
=0.
\end{equation}
Applying Lemma~\ref{L:260707} to $(Y',Z',\Psi')$ and the control
$(\widehat\xi,\widehat\beta)$ gives
\[
x^\top Y'_t x
\leq
\E_t\Bigg[
\widehat X_u^\top Y'_u\widehat X_u
+\int_t^u
\left(
\widehat\xi_s^\top\Sigma_s\widehat\xi_s
+\widehat X_s^\top\Lambda_s\widehat X_s
+\widehat\beta_s^\top\Gamma_s\widehat\beta_s
\right)\dd s
\Bigg].
\]
Letting $u\nearrow T$ and using \eqref{eq:270707.3} and the value
identity proved in Step 4 yields
\[
x^\top Y'_t x
\leq
\E_t\left[
\int_t^T
\left(
\widehat\xi_s^\top\Sigma_s\widehat\xi_s
+\widehat X_s^\top\Lambda_s\widehat X_s
+\widehat\beta_s^\top\Gamma_s\widehat\beta_s
\right)\dd s
\right]
=
x^\top Y_t x.
\]
Since $x\in\R^n$ was arbitrary, $Y'\leq Y$. Together with the
minimality established in \eqref{minimality:Y}, this gives $Y'=Y$
on $[0,T)$.
Lemma~\ref{L:260708} then yields $\Psi'=\Psi$. The drivers in the two BSDEs therefore coincide.
Subtracting their integral forms and applying It\^o's isometry gives
$Z'=Z$. This proves uniqueness.
\end{proof}

\subsection{Proofs of the constant-coefficient results}
\label{SS:4.5}

We first prove the deterministic reduction and then establish the
terminal asymptotics stated in Proposition~\ref{P:terminal-asymptotics}.

\begin{proof}[Proof of Proposition~\ref{P:deterministic-reduction}]
We first prove that the solution of \eqref{BSDE} is deterministic and
that its martingale integrands vanish. Fix \(L>0\), and let
\((Y^L,Z^L,\Psi^L)\) be the unique solution of the finite-terminal
BSDE~\eqref{BSDE-L}. Recall from
Lemma~\ref{L:penalized-approximation} that this solution is constructed
as the limit of the sequence
$
\bigl(Y^{(k)},Z^{(k)},\Psi^{(k)}\bigr)_{k\in \mathbb N}
$
defined by \eqref{BSDE-Lk}, starting from
\(Y^{(0)}=0\) and \(\Psi^{(0)}=0\).

We claim that, for every \(k\geq0\), \(Y^{(k)}\) is deterministic,
\(Z^{(k)}=0\), and \(\Psi^{(k)}=0\). The claim is clear for \(k=0\).
Suppose that it holds for some \(k\geq0\). Consider the deterministic
linear ODE
\[
\left\{
\begin{aligned}
-\dot y_t
={}&
\Lambda
+
F\left(
y_t,
\widehat C\bigl(Y^{(k)}_t\bigr)
\right)
+
G\left(
y_t,
0,
\widehat H\bigl(Y^{(k)}_t,0\bigr)
\right),
\qquad t\in[0,T],\\
y_T
={}&
L\mathbf I_n.
\end{aligned}
\right.
\]
Its coefficients are bounded and deterministic, and therefore it
admits a unique solution \(y\). The triple \((y,0,0)\) solves the
linear BSDE~\eqref{BSDE-Lk} defining
\(\bigl(Y^{(k+1)},Z^{(k+1)},\Psi^{(k+1)}\bigr)\). By the uniqueness
assertion of Proposition~\ref{P:260705},
\[
Y^{(k+1)}=y,
\qquad
Z^{(k+1)}=0,
\qquad
\Psi^{(k+1)}=0.
\]
The claim follows by induction. Passing to the limit in
\eqref{eq:penalized-iteration-convergence} gives
\[
Y^L\text{ is deterministic},
\qquad
Z^L=0,
\qquad
\Psi^L=0.
\]

Lemma~\ref{L:singular-approximation} yields, for every \(t<T\),
$
Y_t=\lim_{L\nearrow\infty}Y^L_t$.
Moreover, \eqref{convergence-Z-Psi} and \(Z^L=\Psi^L=0\) give
\(Z=0\) and \(\Psi=0\) on every compact subinterval of \([0,T)\).
Thus, \(Y\) is deterministic. Since \(Y>0\) by
\eqref{estimate:Y}, we have
\[
\widehat H(Y_-,0)
=
-\mathcal N(Y_-)^{-1}\mathcal M(Y_-).
\]
Substituting \(Z=0\) and \(\Psi=0\) into \eqref{Riccatilocal} shows
that \(Y\) is absolutely continuous. In particular, \(Y=Y_-\), and
the right-hand side of the resulting ODE is continuous. Hence
$
Y\in C^1([0,T);\mathbb S^n)$, 
and \(Y\) satisfies \eqref{ODE}.

We next prove that positivity is automatic for every solution of
\eqref{ODE}. Let
\(\overline Y\in C^1([0,T);\mathbb S^n)\) satisfy \eqref{ODE}. By the
singular terminal condition, there exists \(u<T\) such that
$
\overline Y\geq\mathbf I_n$ on $[u,T)$.
Define
\[
t_0
=
\inf
\left\{
r\in[0,u]
\ \middle|\
\overline Y_s>0
\text{ for every }s\in[r,u]
\right\}.
\]
The set in this definition is nonempty, since
\(\overline Y_u\geq\mathbf I_n\). Suppose, for contradiction, that
\(t_0>0\). Fix \(r\in(t_0,u)\). Then
\(\overline Y>0\) on \([r,u]\), and therefore
\[
\mathcal Q(\overline Y)
=
\mathcal M(\overline Y)^\top
\mathcal N(\overline Y)^{-1}
\mathcal M(\overline Y) \qquad \text{on } [r,u].
\]
 Set
\[
C
=
\widehat C(\overline Y)
=
-\Sigma^{-1}\overline Y,
\qquad
H
=
\widehat H(\overline Y,0)
=
-\mathcal N(\overline Y)^{-1}
\mathcal M(\overline Y).
\]
Since \(\overline Y\) is continuous and uniformly positive definite
on the compact interval \([r,u]\), the functions \(C\) and \(H\) are
bounded there. Moreover, \((\overline Y,0,0)\), restricted to
\([r,u]\), solves the linear BSDE
\[
\left\{
\begin{aligned}
-\dd\overline Y_s
&=
\left(
\Lambda
+
F(\overline Y_s,C_s)
+
G(\overline Y_s,0,H_s)
\right)\ds,
&& s\in[r,u],\\
\overline Y_u
&\geq
\mathbf I_n.
\end{aligned}
\right.
\]
Furthermore,
\[
\Lambda
+
C^\top\Sigma C
+
H^\top\Gamma H
\geq0
\]
by Assumption~\ref{ass1}. Applying
Proposition~\ref{P:260705}\ref{Lemma-L.3} on \([r,u]\), with
\(\alpha=1\), yields
\[
\overline Y_r
\geq
\frac{\eta_\star\theta_{\max}}
{
(1+\eta_\star\theta_{\max})
e^{\theta_{\max}(u-r)}
-1
}
\mathbf I_n.
\]
Letting \(r\searrow t_0\) and using the continuity of
\(\overline Y\), we obtain the same strictly positive lower bound for
\(\overline Y_{t_0}\). By continuity, \(\overline Y\) is then positive
definite on \([t_0-\varepsilon,u]\) for some \(\varepsilon>0\),
contradicting the definition of \(t_0\). Therefore \(t_0=0\).
It follows that \(\overline Y_t>0\) for every \(t\in(0,T)\), and
letting \(r\searrow0\) in the preceding lower bound also gives
\(\overline Y_0>0\). 

It follows from \eqref{def:Q} that
\[
\mathcal Q(\overline Y)
=
\mathcal M(\overline Y)^\top
\mathcal N(\overline Y)^{-1}
\mathcal M(\overline Y),
\qquad
\text{on } [0,T).
\]
Thus, \((\overline Y,0,0)\) satisfies \eqref{Riccatilocal} and is a
solution of \eqref{BSDE} in the sense of
Definition~\ref{def:solution}. By the uniqueness assertion of
Theorem~\ref{thm:BSDE}, we have \(\overline Y=Y\).
\end{proof}

\begin{proof}[Proof of Proposition~\ref{P:terminal-asymptotics}]
Set \(R=Y^{-1}\). Since \(Y>0\), differentiating \(R\) and using
\eqref{ODE} yields
\begin{align} \label{eq:260807.1}
\dot R
=
R\Lambda R
-
\Sigma^{-1}
-
\diag(\boldsymbol{\theta})
\mathcal N(Y)^{-1}
\diag(\boldsymbol{\theta}).
\end{align}
The singular terminal condition implies \(\lim_{t\nearrow T}R_t=0\).
Moreover, since
\[
\mathcal N(Y)
\geq
\diag(\theta_1Y_{11},\ldots,\theta_nY_{nn}),
\]
we have
\(\lim_{t\nearrow T} \mathcal N(Y_t)^{-1}=0\). Hence
\(\lim_{t\nearrow T} \dot R_t=-\Sigma^{-1}\), and therefore
\begin{align} \label{eq:260807.2}
\lim_{t\nearrow T}\frac{R_t}{T-t}
=
\Sigma^{-1}.
\end{align}
Taking inverses proves
$
\lim_{t\nearrow T}((T-t)Y_t)=\Sigma$.

The first part of the proposition implies
\begin{align} \label{eq:260807.3}
\lim_{t\nearrow T} 
\frac{1}{T-t}
\diag(\boldsymbol{\theta})
\mathcal N(Y_t)^{-1}
\diag(\boldsymbol{\theta})
=
K.
\end{align}
By \eqref{eq:260807.2}
we also have
\[
\lim_{t\nearrow T}
\frac{R_t\Lambda R_t}{T-t}
=
\lim_{t\nearrow T}
(T-t)
\left(\frac{R_t}{T-t}\right)
\Lambda
\left(\frac{R_t}{T-t}\right)
=0.
\]
Therefore,  \eqref{eq:260807.1} and \eqref{eq:260807.3} give
\[
\lim_{t\nearrow T}
\frac{\dot R_t+\Sigma^{-1}}{T-t}
=
-K.
\]
Integrating backward from \(T\), we obtain
\[
\lim_{t\nearrow T}
\frac{
R_t-(T-t)\Sigma^{-1}
}{(T-t)^2}
=
\frac12K.
\]
Using 
\[
R_t^{-1}-\frac{\Sigma}{T-t}
=
-R_t^{-1}
\left(
R_t-(T-t)\Sigma^{-1}
\right)
\frac{\Sigma}{T-t}=-(T-t)R^{-1}_t\frac{R_t-(T-t)\Sigma^{-1}}{(T-t)^2}\Sigma, \qquad t \in [0, T),
\]
together with \eqref{eq:260807.2}, we obtain
\[
\lim_{t\nearrow T}
\left(
Y_t-\frac{\Sigma}{T-t}
\right)
=
-\frac12\Sigma K\Sigma.
\]

Suppose now, in addition, that
\(\Sigma=\diag(\eta_1,\ldots,\eta_n)\) and fix $i\neq j$.
The first assertion yields
\[
\lim_{t\nearrow T}
(T-t)\mathcal N(Y_t)
=
\diag(
\theta_1\eta_1,
\ldots,
\theta_n\eta_n
),
\qquad \text{hence} \qquad
\lim_{t\nearrow T}
\frac{\mathcal N(Y_t)^{-1}}{T-t}
=
\diag\left(
\frac{1}{\theta_1\eta_1},
\ldots,
\frac{1}{\theta_n\eta_n}
\right).
\]
Since
\(\mathcal N(Y)\mathcal N(Y)^{-1}=\mathbf I_n\) we have
\[
\mathcal N(Y_t)_{ii}
\bigl(\mathcal N(Y_t)^{-1}\bigr)_{ij}
=
-\sum_{k\neq i}
\Gamma_{ik}
\bigl(\mathcal N(Y_t)^{-1}\bigr)_{kj}, \qquad t \in [0, T),
\]
yielding
\[
\lim_{t\nearrow T}
\frac{
\bigl(\mathcal N(Y_t)^{-1}\bigr)_{ij}
}{(T-t)^2}
=
-\frac{\Gamma_{ij}}
{\theta_i\theta_j\eta_i\eta_j},
\]
and hence
\[
\lim_{t\nearrow T}
\frac{
\left(
\diag(\boldsymbol\theta)
\mathcal N(Y_t)^{-1}
\diag(\boldsymbol\theta)
\right)_{ij}
}{(T-t)^2}
=
-\frac{\Gamma_{ij}}{\eta_i\eta_j}.
\]
Moreover, \eqref{eq:260807.2} gives
\[
\lim_{t\nearrow T}
\frac{(R_t\Lambda R_t)_{ij}}{(T-t)^2}
=
\frac{\Lambda_{ij}}{\eta_i\eta_j}.
\]
Since \((\Sigma^{-1})_{ij}=0\), \eqref{eq:260807.1} therefore yields
\[
\lim_{t\nearrow T}
\frac{\dot R_{ij,t}}{(T-t)^2}
=
\frac{\Lambda_{ij}+\Gamma_{ij}}{\eta_i\eta_j}.
\]
Now l'H\^opital's rule gives
\[
\lim_{t\nearrow T}
\frac{R_{ij,t}}{(T-t)^3}
=
-\frac{\Lambda_{ij}+\Gamma_{ij}}
{3\eta_i\eta_j}.
\]

Since \(YR=\mathbf I_n\), 
\[
0
=
Y_{ii,t}R_{ij,t}
+
Y_{ij,t}R_{jj,t}
+
\sum_{k\neq i,j}Y_{ik,t}R_{kj,t}, \qquad t \in [0, T).
\]
Hence
\[
\frac{Y_{ij,t}}{T-t}
=
-\frac{
(T-t)Y_{ii,t}\dfrac{R_{ij,t}}{(T-t)^3}
+
\sum_{k\neq i,j}
(T-t)Y_{ik,t}\dfrac{R_{kj,t}}{(T-t)^3}
}{
\dfrac{R_{jj,t}}{T-t}
}, \qquad t \in [0, T).
\]
By the preceding limits,
\[
\lim_{t\nearrow T}(T-t)Y_{ii,t}=\eta_i,
\qquad
\lim_{t\nearrow T}\frac{R_{jj,t}}{T-t}=\frac1{\eta_j},
\]
and, for \(k\neq i\),
$
\lim_{t\nearrow T}(T-t)Y_{ik,t}=0$.
Therefore
\[
\lim_{t\nearrow T}
\frac{Y_{ij,t}}{T-t}
=
\frac{\Lambda_{ij}+\Gamma_{ij}}{3},
\]
as claimed.
\end{proof}

\section{Financial analysis in the two-asset model}
\label{sec:financial-analysis}

In our model, adverse selection is represented in reduced form by the quadratic running cost \(\beta^\top\Gamma\beta\) associated with the outstanding dark-pool orders. This cost captures losses that may arise when dark-pool orders are executed shortly before price movements favorable to the trader, for instance because counterparties are better informed. The diagonal entries of \(\Gamma\) represent own-asset adverse selection, whereas its off-diagonal entries represent cross-asset spillover in adverse-selection costs. 
In this section, we restrict attention to a two-asset specification with constant coefficients and examine the financial implications of these costs, taking the multi-asset liquidation model without adverse-selection costs in \cite{KS-2015} as our benchmark.

Subsection~\ref{sec:two-asset-specification} introduces the specification
and records the corresponding value function and optimal feedback
strategy. Subsection~\ref{sec:interaction-rho-gamma12} studies the
interaction between asset correlation and cross-asset spillover in
adverse-selection costs. More specifically,
Subsection~\ref{sec:correlation-spillover} derives a representation of
the off-diagonal coefficient of the value function,
Subsection~\ref{sec:value-diversification} analyzes the relative
liquidation costs of well- and poorly diversified portfolios, and
Subsection~\ref{sec:diversification-preserved} determines when
diversification protection holds or fails. Finally,
Subsection~\ref{sec:own-adverse-selection} studies how own-asset adverse
selection affects optimal dark-pool orders and determines when a
dark-pool execution transforms a poorly diversified portfolio into a
well-diversified one.

\subsection{The two-asset specification}
\label{sec:two-asset-specification}

Throughout this section, all coefficients are constant and
\[
n=2,\qquad
\Sigma=
\begin{pmatrix}
\eta_1 & \eta_{12}\\
\eta_{12} & \eta_2
\end{pmatrix},
\qquad
\Lambda=
\begin{pmatrix}
\sigma_1^2 & \rho\sigma_1\sigma_2\\
\rho\sigma_1\sigma_2 & \sigma_2^2
\end{pmatrix},
\qquad
\Gamma=
\begin{pmatrix}
\gamma_1 & \gamma_{12}\\
\gamma_{12} & \gamma_2
\end{pmatrix}.
\]
In accordance with Assumption~\ref{ass1}, we assume that
\(\rho\in[-1,1]\), \(\sigma_1,\sigma_2\geq0\),
\(\eta_1,\eta_2>0\), \(\eta_1\eta_2>\eta_{12}^2\),
\(\gamma_1,\gamma_2\geq0\), and
\(\gamma_1\gamma_2\geq\gamma_{12}^2\). Whenever \(\sigma_1\sigma_2=0\), we adopt the convention \(\rho=0\).

By Proposition~\ref{P:deterministic-reduction}, the solution of the singular BSDE is deterministic, \(Y\in C^1([0,T);\mathbb S^2)\), and \(Z=\Psi=0\). We therefore identify \(Y\) throughout this section with the unique solution of the singular ODE~\eqref{ODE}, and use the functions \(\mathcal M\) and \(\mathcal N\) introduced in Subsection~\ref{SS:3.5}. By Theorem~\ref{thm:BSDE}, the optimal feedback control and the value function are
\begin{align}
\widehat\xi
&=
\Sigma^{-1}Y\widehat X_-,
\qquad
\widehat\beta
=
\mathcal N(Y)^{-1}\mathcal M(Y)\widehat X_-,
\nn\\
V(x)
&=
x^\top Yx.
\label{value:ODE}
\end{align}
Although \(Y\) is deterministic, the optimally controlled state process still has jumps, and the execution intensities enter the feedback coefficient through \(\mathcal M\) and \(\mathcal N\).

The adverse-selection term \(\beta^\top\Gamma\beta\) is specified in reduced form: we do not model explicitly the private information of dark-pool counterparties or derive the resulting execution losses from their trading decisions, but represent these losses directly through the quadratic running cost
\(\beta^\top\Gamma\beta
=
\gamma_1\beta_1^2
+
2\gamma_{12}\beta_1\beta_2
+
\gamma_2\beta_2^2\).
Thus, \(\gamma_i\) measures the own-asset adverse-selection cost associated with an order in asset \(i\), whereas \(\gamma_{12}\) couples the orders in the two assets. We interpret this cross-asset interaction in adverse-selection costs as an information-spillover effect in dark-pool trading.

\subsection{Cross-asset spillover and diversification}
\label{sec:interaction-rho-gamma12}

Several results in this subsection impose a diagonal temporary-impact matrix, that is, \(\eta_{12}=0\), in order to isolate the interaction between asset correlation, represented by \(\rho\), and cross-asset spillover in adverse-selection costs, represented by \(\gamma_{12}\). The precise assumptions are stated in each result.

For a portfolio \(x \in \mathbb R^2\), the inventory-risk penalty takes the covariance form
\[
x^\top\Lambda x
=
\sigma_1^2x_1^2
+
2\rho\sigma_1\sigma_2x_1x_2
+
\sigma_2^2x_2^2.
\]
If \(\rho>0\), the cross term reduces the portfolio-risk penalty when the two positions have opposite signs. If \(\rho<0\), it reduces the portfolio-risk penalty when the positions have the same sign. Following \cite{KS-2015}, we use the following terminology.

\begin{definition}\label{def:well-diversified}
A portfolio \(x\in\mathbb R^2\) is called well diversified if
\(\rho x_1x_2<0\) and poorly diversified if \(\rho x_1x_2 > 0\).  
\end{definition}

\subsubsection{The correlation--spillover decomposition}
\label{sec:correlation-spillover}
Throughout this subsection, we assume $\eta_{12} = 0$.
Set
\begin{align*}
D
=
(\gamma_1+\theta_1Y_{11})
(\gamma_2+\theta_2Y_{22})
-\gamma_{12}^2,
\qquad
S
=
\frac{\theta_1\theta_2Y_{11}Y_{22}}{D}.
\end{align*}
Since \(Y>0\) and \(\Gamma\geq0\), we have
\(D\geq\theta_1\theta_2Y_{11}Y_{22}>0\), and hence
\(S\in(0,1]\). Note that the off-diagonal component of the ODE~\eqref{ODE} satisfies
\begin{equation}\label{eq:Y12}
-\dot Y_{12}
=
\rho\sigma_1\sigma_2+\gamma_{12}S-gY_{12},
\end{equation}
where
\begin{align*}
g
&=
\frac{Y_{11}}{\eta_1}
+
\frac{Y_{22}}{\eta_2}
+
\frac{
\theta_1^2\gamma_2Y_{11}
+\theta_2^2\gamma_1Y_{22}
+\theta_1\theta_2(\theta_1+\theta_2)Y_{11}Y_{22}
-\theta_1\theta_2\gamma_{12}Y_{12}
}{D}.
\end{align*}
The positive semidefiniteness of \(Y\) and \(\Gamma\) implies that \(g\geq0\).

By Proposition~\ref{P:terminal-asymptotics}, if $\eta_{12}=0$,
\begin{align} \label{eq:260806'}
\lim_{t\nearrow T}
\frac{Y_{12,t}}{T-t}
=
\frac{\rho\sigma_1\sigma_2+\gamma_{12}}{3}.
\end{align}
In particular, \(\lim_{t\nearrow T}Y_{12,t}=0\), and solving \eqref{eq:Y12} backward from \(T\) therefore gives
\begin{equation}\label{representation:Y12}
Y_{12,t}
=
\int_t^T
\left(
\rho\sigma_1\sigma_2+\gamma_{12}S_s
\right)
\exp\left(
-\int_t^s g_r\,\dr
\right)\ds, \qquad t \in [0,T).
\end{equation}
The two terms in the integrand represent the correlation and spillover channels, respectively:
\begin{itemize}
    \item Correlation channel. The contribution \(\rho\,\sigma_1\sigma_2\) arises from the cross term in the inventory-risk penalty, which is lower for opposite-sign positions when \(\rho>0\) and for same-sign positions when \(\rho<0\).
	\item Spillover channel. The contribution proportional to \(\gamma_{12}\) arises from cross-asset spillover in adverse-selection costs. Since its multiplier in \eqref{representation:Y12} is positive, it has the sign of \(\gamma_{12}\) and reinforces the correlation contribution when \(\rho\) and \(\gamma_{12}\) have the same nonzero sign, while opposing it when they have opposite signs.
\end{itemize}

The following lemma records the resulting sign properties of \(Y_{12}\).

\begin{lemma}\label{lem:sign-Y12}
Suppose that \(\eta_{12}=0\). Then the following statements hold.
\begin{enumerate}[(1)]
\item\label{lem:sign-Y12.1}\label{lem:sign-Y12.1} If \(\gamma_{12}=0\), then
\(\mathrm{sign}(Y_{12})=\mathrm{sign}(\rho)\).
\item\label{lem:sign-Y12.2} If \(\rho =0\), then
\(\mathrm{sign}(Y_{12})=\mathrm{sign}(\gamma_{12})\).
\item\label{lem:sign-Y12.3} If
$
\rho \gamma_{12}>0$,
then
$
\mathrm{sign}(Y_{12})
=
\mathrm{sign}(\rho)
=
\mathrm{sign}(\gamma_{12})$.
\item\label{lem:sign-Y12.4}
If $|\rho|\sigma_1\sigma_2 \geq |\gamma_{12}|$,
then
$
\mathrm{sign}(Y_{12})
=
\mathrm{sign}(\rho)$.
\end{enumerate}
\end{lemma}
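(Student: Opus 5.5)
The plan is to read all four statements off the representation \eqref{representation:Y12}:
\[
Y_{12,t}=\int_t^T\bigl(\rho\sigma_1\sigma_2+\gamma_{12}S_s\bigr)\exp\Bigl(-\int_t^s g_r\,\dr\Bigr)\ds,\qquad t\in[0,T).
\]
Here the exponential weight is strictly positive and finite, and $S_s\in(0,1]$ for all $s\in[0,T)$. Since $Y$ is $C^1$ on $[0,T)$ and $Y>0$, the coefficients $g$ and $S$ are continuous on every $[t,s]$ with $s<T$. Hence the weight is continuous, and the integral in \eqref{representation:Y12} is finite and well defined.

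I will first confirm that \eqref{representation:Y12} is legitimate. Regard \eqref{eq:Y12} as a linear ODE in $Y_{12}$ with given continuous coefficients $g$ and $S$. Here $g$ depends on $Y_{12}$, but along the fixed solution it is simply a known continuous function. Variation of constants on $[t,u]$ gives
\[
Y_{12,t}=Y_{12,u}\,e^{-\int_t^u g}+\int_t^u(\rho\sigma_1\sigma_2+\gamma_{12}S_s)\,e^{-\int_t^s g}\,\ds .
\]
We let $u\nearrow T$. Since $g\ge0$, we have $e^{-\int_t^u g}\le1$, and \eqref{eq:260806'} gives $Y_{12,u}\to0$. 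So the boundary term vanishes. The integrand is bounded because $|S|\le1$ and the weight is at most $1$, so the integral converges. This is the only point needing care; I take $g\ge0$ as asserted in the text. Only the limit $Y_{12,u}\to0$ is really needed, since the weight is in any case positive and bounded on $[t,u]$.

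Next, the sign analysis. Write $f_s=\rho\sigma_1\sigma_2+\gamma_{12}S_s$. If $f$ has a constant strict sign on $[t,T)$, then $Y_{12,t}$ has that sign, because it integrates a strictly signed function against a strictly positive weight over an interval of positive length. If $f\equiv0$, then $Y_{12}\equiv0$.

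The four cases then go as follows.
\begin{enumerate}[(1)]
\item If $\gamma_{12}=0$, then $f\equiv\rho\sigma_1\sigma_2$. By the convention $\rho=0$ whenever $\sigma_1\sigma_2=0$, this quantity has the sign of $\rho$.
\item If $\rho=0$, then $f=\gamma_{12}S$, and $S>0$ gives $\operatorname{sign}f=\operatorname{sign}\gamma_{12}$.
\item If $\rho\gamma_{12}>0$, both terms of $f$ have the sign of $\rho$ (the first is nonzero or zero, the second strictly nonzero). Hence $f$ is strictly signed.
\item If $|\rho|\sigma_1\sigma_2\ge|\gamma_{12}|$ and $\rho\neq0$, then $S\le1$ gives $\operatorname{sign}(\rho)f\ge|\rho|\sigma_1\sigma_2-|\gamma_{12}|S\ge0$. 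Equality on a set of positive measure forces either $\gamma_{12}=0$, which puts us in case (1), or $S\equiv1$ with equality in the hypothesis.
\end{enumerate}

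The main obstacle is this boundary case of (4), namely $|\rho|\sigma_1\sigma_2=|\gamma_{12}|\neq0$ with opposite signs. There I must show that $S<1$ somewhere, or at least on a set of positive measure. Since $D=\theta_1\theta_2Y_{11}Y_{22}+\theta_1\gamma_1Y_{22}+\theta_2\gamma_2Y_{11}+\gamma_1\gamma_2-\gamma_{12}^2$, and $\gamma_{12}\neq0$ together with $\gamma_1\gamma_2\ge\gamma_{12}^2$ forces $\gamma_1,\gamma_2>0$, we get $D>\theta_1\theta_2Y_{11}Y_{22}$. Hence $S<1$ everywhere, and $f$ is strictly of the sign of $\rho$ on $[t,T)$. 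If $\rho=0$ in (4), then $\gamma_{12}=0$ and both sides vanish, so $Y_{12}\equiv0$, consistent with $\operatorname{sign}(0)=0$.
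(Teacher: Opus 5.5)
Your proposal is correct and follows the paper's argument. Items (1)--(3) are read off \eqref{representation:Y12} using strict positivity of the exponential weight and of $S$; for (4) one reduces to $\gamma_{12}\neq0$, where $S<1$ gives a strict sign (you usefully justify $S<1$ via $\gamma_1,\gamma_2>0$, which the paper leaves implicit), with one harmless slip: the cross terms in your expansion of $D$ should be $\theta_2\gamma_1Y_{22}+\theta_1\gamma_2Y_{11}$, which does not affect $D>\theta_1\theta_2Y_{11}Y_{22}$.
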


\begin{proof}
The assertions \ref{lem:sign-Y12.1}--\ref{lem:sign-Y12.3} follow directly from
\eqref{representation:Y12}, since the exponential factor and \(S\) are strictly positive. For \ref{lem:sign-Y12.4}, 
by \ref{lem:sign-Y12.1}, we may assume $\gamma_{12} \neq 0$, hence $S\in(0,1)$. Therefore
we have
$|\gamma_{12}|S
<
|\gamma_{12}|
\leq
|\rho|\sigma_1\sigma_2$.
Thus, 
\[
\mathrm{sign}
\left(
\rho\sigma_1\sigma_2+\gamma_{12}S
\right)
=
\mathrm{sign}(\rho),
\]
and the claim follows from \eqref{representation:Y12}.
\end{proof}
When \(\rho\gamma_{12}<0\), the correlation and spillover contributions in
\eqref{representation:Y12} have opposite signs. If
\(|\rho|\sigma_1\sigma_2 \geq |\gamma_{12}|\), then
Lemma~\ref{lem:sign-Y12}\ref{lem:sign-Y12.4} shows that the correlation
channel dominates throughout \([0,T)\). Conversely, if
\(|\gamma_{12}|>|\rho|\sigma_1\sigma_2\), then
Proposition~\ref{P:terminal-asymptotics} implies that \(Y_{12,t}\) has
the sign of \(\gamma_{12}\) for all \(t\) sufficiently close to \(T\).
This observation will be used in
Proposition~\ref{prop:loss-diversification}\ref{prop:loss-diversification.2}
below. Together, these results show how asset correlation and cross-asset
spillover in adverse-selection costs jointly determine the interaction
term \(Y_{12}\) in the value function.
Proposition~\ref{prop:value-diversification} below translates this
interaction into a comparison of the liquidation costs of portfolios
with same- and opposite-sign positions.

The preceding results underlie the next two subsections. We first examine
how the interaction between the correlation-induced hedging effect and
cross-asset spillover determines the relative liquidation costs of
well- and poorly diversified portfolios. We then study whether an
initially well-diversified portfolio remains well diversified under
optimal liquidation. In the absence of cross-asset spillover, the
conclusions of \cite{KS-2015} continue to hold; when spillover is present,
the cost ordering may reverse and diversification protection may fail.

\subsubsection{Does diversification lower costs?}
\label{sec:value-diversification}

Proposition~4.5 in \cite{KS-2015} shows that, in the absence of adverse-selection costs, the well-diversified portfolio has lower liquidation cost than its poorly diversified counterpart.
We first compare two portfolios with the same absolute positions but opposite signs in the second component. This makes precise what it means here for diversification to lower liquidation costs.

\begin{proposition}\label{prop:value-diversification}
Fix \(t\in[0,T)\) and \(x\in\mathbb R^2\), and suppose that
\(\eta_{12}=0\) and \(\rho\neq0\). If
\(\rho\gamma_{12}\geq0\), then
\[
V_t\bigl((x_1,x_2)^\top\bigr)
<
V_t\bigl((x_1,-x_2)^\top\bigr)
\]
if and only if $x$ is well diversified.
If \(\rho\gamma_{12}<0\), then
\[
\mathrm{sign}\left(
V_t\bigl((x_1,x_2)^\top\bigr)
-
V_t\bigl((x_1,-x_2)^\top\bigr)
\right)
=
\mathrm{sign}(Y_{12,t}x_1x_2).
\]
In this case, the ordering depends on the relative magnitudes of the correlation and spillover contributions in \eqref{representation:Y12}.
\end{proposition}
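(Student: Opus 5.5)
The plan is to reduce everything to the sign of \(Y_{12,t}x_1x_2\) through the quadratic form of the value function. By \eqref{value:ODE}, \(V_t(x)=x^\top Y_tx = Y_{11,t}x_1^2+2Y_{12,t}x_1x_2+Y_{22,t}x_2^2\). Flipping the sign of \(x_2\) leaves the diagonal terms unchanged, so
\[
V_t\bigl((x_1,x_2)^\top\bigr)-V_t\bigl((x_1,-x_2)^\top\bigr)=4Y_{12,t}x_1x_2 .
\]
This identity already gives the second assertion, the case \(\rho\gamma_{12}<0\), since the sign of the difference equals \(\mathrm{sign}(Y_{12,t}x_1x_2)\). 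The remark about relative magnitudes then just points back to \eqref{representation:Y12} and Lemma~\ref{lem:sign-Y12}\ref{lem:sign-Y12.4}.

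For the case \(\rho\gamma_{12}\geq0\) with \(\rho\neq0\), either \(\gamma_{12}=0\) or \(\rho\gamma_{12}>0\). In both subcases Lemma~\ref{lem:sign-Y12}\ref{lem:sign-Y12.1} and \ref{lem:sign-Y12.3} give \(\mathrm{sign}(Y_{12,t})=\mathrm{sign}(\rho)\neq0\) for every \(t\in[0,T)\).

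This strictness is the one point to check. In \eqref{representation:Y12} the integrand \(\rho\sigma_1\sigma_2+\gamma_{12}S_s\) has the strict sign of \(\rho\). This holds because \(\rho\neq0\) forces \(\sigma_1\sigma_2>0\) by the stated convention, and the spillover term \(\gamma_{12}S_s\) has the same sign or vanishes. The exponential weight is strictly positive and the interval \((t,T)\) has positive length, so \(Y_{12,t}\) is nonzero with the sign of \(\rho\).

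Hence the difference \(4Y_{12,t}x_1x_2\) is strictly negative if and only if \(\rho x_1x_2<0\), that is, if and only if \(x\) is well diversified in the sense of Definition~\ref{def:well-diversified}. When \(x_1x_2=0\) the difference is zero, so the strict inequality fails; this is consistent, because such an \(x\) is not well diversified. The only potential obstacle is verifying this strict nonvanishing of \(Y_{12,t}\), and it is handled by the positivity of \(S\) and of the exponential factor as above.
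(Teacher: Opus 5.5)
Your proposal is correct and matches the paper's proof: the identity \(V_t((x_1,x_2)^\top)-V_t((x_1,-x_2)^\top)=4Y_{12,t}x_1x_2\), combined with Lemma~\ref{lem:sign-Y12}\ref{lem:sign-Y12.1} and~\ref{lem:sign-Y12.3}, gives both assertions. Your extra check makes explicit two points the paper leaves implicit: \(\rho\neq0\) forces \(\sigma_1\sigma_2>0\) by the stated convention, so \(Y_{12,t}\neq0\), and the case \(x_1x_2=0\) is consistent with the equivalence.
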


\begin{proof}
By \eqref{value:ODE},
\[
V_t\bigl((x_1,x_2)^\top\bigr)
-
V_t\bigl((x_1,-x_2)^\top\bigr)
=
4Y_{12,t}x_1x_2.
\]
If \(\rho\gamma_{12}\geq0\), Lemma~\ref{lem:sign-Y12} yields
\(\mathrm{sign}(Y_{12,t})=\mathrm{sign}(\rho)\). Hence the difference is strictly negative if and only if
\(\rho x_1x_2<0\), which, by Definition~\ref{def:well-diversified}, is equivalent to \(x\) being well diversified. 
If \(\rho\gamma_{12}<0\), the displayed identity gives the asserted sign relation, while \eqref{representation:Y12} shows that the sign of \(Y_{12,t}\) depends on the relative magnitudes of the correlation and spillover contributions.
\end{proof}

Thus, the diversification ranking from \cite{KS-2015} remains valid when
$\rho\gamma_{12}\geq 0$. It also remains valid when
$\rho\gamma_{12}<0$ and
$|\rho|\sigma_1\sigma_2\geq|\gamma_{12}|$, by Lemma~\ref{lem:sign-Y12}\ref{lem:sign-Y12.4}.
By contrast, if $\rho\gamma_{12}<0$ and
$|\gamma_{12}|>|\rho|\sigma_1\sigma_2$, then Proposition~\ref{P:terminal-asymptotics}
shows that, sufficiently close to $T$, a well-diversified portfolio has
higher liquidation cost than its poorly diversified counterpart.

\subsubsection{Is diversification preserved under optimal liquidation?}
\label{sec:diversification-preserved}

We say that an initially well-diversified portfolio is protected if it remains well diversified under the optimal strategy before terminal liquidation. Proposition~4.8 in \cite{KS-2015} establishes this property in the absence of adverse-selection costs. We first show that it continues to hold in the absence of cross-asset spillover, even when own-asset adverse-selection costs are present.

\begin{proposition}\label{prop:sign-preservation}
Assume that \(\eta_{12}=0\) and
$\gamma_{12} = 0$. Fix \(t\in[0,T)\) and let
\(x\in\mathbb R^2\) be well diversified. Then
\[
\mathrm{sign}(\widehat X_{i,s})
=
\mathrm{sign}(x_i),
\qquad
s\in[t,T),
\quad
i=1,2.
\]
In particular, \(\widehat X_s\) remains well diversified for every
\(s\in[t,T)\).
\end{proposition}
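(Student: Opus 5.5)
The plan is to exploit the explicit form of the optimal feedback when both \(\Sigma\) and \(\Gamma\) are diagonal. First I reduce to \(\rho>0\). Well diversification means \(\rho x_1x_2<0\), which forces \(\rho\neq0\) and \(x_1x_2\neq0\). For \(\rho<0\) the argument below is identical after flipping the sign of \(Y_{12}\) and of one component.

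By Lemma~\ref{lem:sign-Y12}\ref{lem:sign-Y12.1}, \(\gamma_{12}=0\) gives \(\mathrm{sign}(Y_{12,s})=\mathrm{sign}(\rho)\). Hence \(Y_{12,s}>0\) for all \(s\in[t,T)\); strict positivity is read off from \eqref{representation:Y12}. Without loss of generality I take \(x_1>0>x_2\), the other sign pattern being symmetric. Since \(\eta_{12}=\gamma_{12}=0\), the matrix \(\mathcal N(Y)=\diag(\gamma_1+\theta_1Y_{11},\gamma_2+\theta_2Y_{22})\) is diagonal. With \(j\neq i\), the feedback in \eqref{value:ODE} then reads componentwise
\[
\widehat\xi_{i}=\frac{Y_{ii}\widehat X_{i,-}+Y_{12}\widehat X_{j,-}}{\eta_i},
\qquad
\widehat\beta_{i}=\frac{\theta_i\bigl(Y_{ii}\widehat X_{i,-}+Y_{12}\widehat X_{j,-}\bigr)}{\gamma_i+\theta_iY_{ii}}.
\]

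\textbf{Jumps.} The components of \(N\) have no common jumps. At a jump of \(N_i\), only \(\widehat X_i\) changes, and it moves to
\[
\widehat X_{i,-}-\widehat\beta_i=\frac{\gamma_i\widehat X_{i,-}-\theta_iY_{12}\widehat X_{j,-}}{\gamma_i+\theta_iY_{ii}}.
\]
If \(\widehat X_{1,-}>0>\widehat X_{2,-}\), then for \(i=1\) the numerator is \(\gamma_1\widehat X_{1,-}-\theta_1Y_{12}\widehat X_{2,-}\). This is strictly positive, because \(\gamma_1\geq0\) and \(-\theta_1Y_{12}\widehat X_{2,-}>0\). The case \(i=2\) is symmetric and gives a strictly negative value. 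So each jump preserves the strict sign pattern.

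\textbf{Between jumps.} On an interval between consecutive jump times inside \([t,s]\) with \(s<T\), \(\widehat X\) solves the linear ODE
\[
\dot X_1=-\frac{Y_{11}}{\eta_1}X_1-\frac{Y_{12}}{\eta_1}X_2,
\]
together with the analogous equation for \(X_2\). All coefficients are continuous and bounded on \([t,s]\). I will use a first-exit-time argument. Let \(\sigma\) be the first time in the interval at which some component vanishes. On \([\cdot,\sigma)\) the forcing term \(-Y_{12}X_2/\eta_1\) is nonnegative, so variation of constants gives \(X_1(r)\geq X_1(r_0)\exp\bigl(-\int_{r_0}^r Y_{11}/\eta_1\bigr)\). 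This bound is strictly positive up to and including \(\sigma\), by continuity and boundedness of \(Y_{11}\) on \([t,s]\). The same holds for \(-X_2\). This contradicts the definition of \(\sigma\), so the signs persist over the whole interval.

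\textbf{Induction.} On \([t,s]\) there are a.s.\ finitely many jumps. Induction over the jump times therefore gives \(\widehat X_{1}>0>\widehat X_{2}\) on \([t,s]\) for every \(s<T\), which is the claim.

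The main obstacle is only technical: \(Y\) blows up at \(T\), so the argument must be carried out on compact subintervals \([t,s]\subset[t,T)\), where Theorem~\ref{thm:BSDE} gives bounded coefficients. I must also make sure that the strict positivity of \(Y_{12}\) is not actually needed. In fact \(Y_{12}\geq0\) already suffices, because the lower bound in the between-jumps step comes from the \(Y_{ii}\) term alone. Likewise, at jumps the strict sign follows from \(-\theta_iY_{12}\widehat X_{j,-}>0\), using \(Y_{12}>0\) from Lemma~\ref{lem:sign-Y12}. If only \(Y_{12}\geq0\) were known and \(\gamma_i=0\), this step would fail, so the strict positivity of \(Y_{12}\) from \eqref{representation:Y12} is the point to verify carefully.
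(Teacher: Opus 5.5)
Your proof is correct and follows the paper's argument: the same reduction to $\rho>0$, $x_1>0>x_2$ with $Y_{12}>0$ from Lemma~\ref{lem:sign-Y12}\ref{lem:sign-Y12.1}, the same post-jump formula, and the same induction over the a.s.\ finitely many jumps on compact subintervals. The strictness of $Y_{12}>0$ is indeed available, since $\rho\neq0$ forces $\sigma_1\sigma_2>0$ under the paper's convention, and it is exactly what the jump step needs when $\gamma_i=0$. The only variation is between jumps: your integrating-factor bound $\widehat X_{1,r}\geq\widehat X_{1,r_0}\exp\bigl(-\int_{r_0}^{r}Y_{11}/\eta_1\bigr)$ rules out reaching the axes and the origin in one step and needs only $Y_{12}\geq0$ there. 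The paper instead uses the strict sign of the derivative at nonzero boundary points and backward ODE uniqueness to exclude the origin.
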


\begin{proof}
It suffices to consider the case \(\rho>0\), \(x_1>0\), and \(x_2<0\), where $Y_{12}>0$ by Lemma~\ref{lem:sign-Y12}\ref{lem:sign-Y12.1}. Indeed, simultaneous multiplication of both positions by \(-1\) leaves the problem invariant. Moreover, with \(\mathsf J=\diag(1,-1)\),  \(\eta_{12}=0\) and the diagonal form of \(\Gamma\) imply that \(\mathsf J\Sigma \mathsf J=\Sigma\) and \(\mathsf J\Gamma \mathsf J=\Gamma\), whereas \(\mathsf J\Lambda(\rho)\mathsf J=\Lambda(-\rho)\). 

Between jumps of \(N\), the optimal state satisfies
\[
\dot{\widehat X}_1
=
-\frac{Y_{11}}{\eta_1}\widehat X_1
-\frac{Y_{12}}{\eta_1}\widehat X_2,
\qquad
\dot{\widehat X}_2
=
-\frac{Y_{12}}{\eta_2}\widehat X_1
-\frac{Y_{22}}{\eta_2}\widehat X_2.
\]
At a point where \(\widehat X_1=0\) and \(\widehat X_2<0\), the first derivative is strictly positive. Similarly, at a point where
\(\widehat X_1>0\) and \(\widehat X_2=0\), the second derivative is strictly negative. Consequently, a first exit from
\((0,\infty)\times(-\infty,0)\) cannot occur at a nonzero point of its boundary. The process cannot reach the origin between jumps either. Indeed, if \(\widehat X_u=0\) for some \(u<T\), uniqueness of solutions to the linear ODE, applied backward from \(u\), would imply that \(\widehat X\) vanishes identically on the corresponding jump-free interval. This contradicts the strict signs of the state at the beginning of that interval. Hence, on every jump-free interval, the strict inequalities
\(\widehat X_1>0\) and \(\widehat X_2<0\) are preserved, provided they hold at the beginning of the interval.

Suppose next that \(N_1\) jumps at time \(\tau\). Since
\(\Gamma\) is diagonal,
\[
\widehat X_{1,\tau}
=
\frac{\gamma_1}{\gamma_1+\theta_1Y_{11,\tau}}
\widehat X_{1,\tau-}
-
\frac{\theta_1Y_{12,\tau}}{\gamma_1+\theta_1Y_{11,\tau}}
\widehat X_{2,\tau-}
>0,
\]
whereas \(\widehat X_{2,\tau}=\widehat X_{2,\tau-}<0\). The case of a jump of \(N_2\) is analogous. Since the Poisson processes have no common jumps and are locally finite, induction over the successive jump intervals proves the claim.
\end{proof}

As a corollary, under diagonal temporary impact and in the absence of cross-asset spillover, stronger correlation, in the sense of a larger absolute correlation with fixed sign, strictly reduces the liquidation cost of an initially well-diversified portfolio. When \(\Gamma=0\), this recovers \cite[Proposition~4.6]{KS-2015}.
We shall write $V_t^\rho$
 for the value function  corresponding to correlation parameter $\rho \in [-1,1]$ at time $t \in [0,T)$.

\begin{corollary}\label{coro:monotonicity}
Assume that \(\eta_{12}=0\), $\gamma_{12} = 0$, and \(\sigma_1,\sigma_2>0\), and fix \(t\in[0,T)\) and \(x \in\mathbb R^2\).
 If \(x_1x_2<0\), then \(\rho\mapsto V_t^\rho(x)\) is strictly decreasing on \((0,1)\). If \(x_1x_2>0\), then \(\rho\mapsto V_t^\rho(x)\) is strictly increasing on \((-1,0)\). Equivalently, for an initially well-diversified portfolio, the value is strictly decreasing in \(|\rho|\) on the corresponding half-interval.
\end{corollary}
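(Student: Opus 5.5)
The plan is an envelope (Danskin-type) argument. By the symmetry $\mathsf J=\diag(1,-1)$ from the proof of Proposition~\ref{prop:sign-preservation}, which maps $\Lambda(\rho)$ to $\Lambda(-\rho)$ and leaves $\Sigma,\Gamma$ unchanged, we have $V^\rho_t(x)=V^{-\rho}_t(\mathsf Jx)$. So it suffices to treat $x_1x_2<0$ and $\rho\in(0,1)$. For fixed admissible $(\xi,\beta)$, the cost $J_t(\xi,\beta;x)$ depends on $\rho$ only through $X^\top\Lambda(\rho)X$, and this dependence is affine:
\[
\partial_\rho J_t = 2\sigma_1\sigma_2\,\E_t\Big[\int_t^T X_{1,s}X_{2,s}\,\ds\Big].
\]

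Fix $0<\rho<\rho'<1$ and let $(\widehat\xi,\widehat\beta)$ be the optimal strategy at level $\rho$ from Theorem~\ref{thm:BSDE}, with state $\widehat X$. The same strategy is admissible at level $\rho'$, since the admissible set does not depend on $\rho$. Therefore
\[
V^{\rho'}_t(x)\le J^{\rho'}_t(\widehat\xi,\widehat\beta;x)=V^\rho_t(x)+2(\rho'-\rho)\sigma_1\sigma_2\,\E_t\Big[\int_t^T\widehat X_{1,s}\widehat X_{2,s}\,\ds\Big].
\]
The question then reduces to showing that the last expectation is strictly negative.

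By Proposition~\ref{prop:sign-preservation}, which applies since $\eta_{12}=\gamma_{12}=0$ and $x$ is well diversified for $\rho>0$, we have $\widehat X_{1,s}\widehat X_{2,s}<0$ for all $s\in[t,T)$. The integrand is therefore strictly negative on $[t,T)$, so its integral is strictly negative almost surely and its conditional expectation is strictly negative. Integrability holds because $\widehat X$ is square integrable (see \eqref{eq:270708.6}). This gives $V^{\rho'}_t(x)<V^\rho_t(x)$.

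The only delicate point is this strictness. It relies on the strict sign preservation in Proposition~\ref{prop:sign-preservation}, which holds for every $s<T$, not merely almost everywhere. It also relies on $\sigma_1\sigma_2>0$, which is assumed and avoids the convention $\rho=0$. Finally, for $x_1x_2>0$ and $\rho\in(-1,0)$, apply the above to $\mathsf Jx$ and $-\rho\in(0,1)$: as $\rho$ increases, $-\rho$ decreases, so $V^{\rho}_t(x)=V^{-\rho}_t(\mathsf Jx)$ is strictly increasing in $\rho$. This is the stated equivalence with monotonicity in $|\rho|$.
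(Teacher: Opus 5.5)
Your proposal is correct and matches the paper's proof: evaluate the cost at the larger correlation under the strategy that is optimal for the smaller one, use that the cost is affine in $\rho$ with slope $2\sigma_1\sigma_2\,\E_t[\int_t^T \widehat X_{1,s}\widehat X_{2,s}\,\ds]$, and obtain strictness from Proposition~\ref{prop:sign-preservation}. The only difference is cosmetic: you handle the case $x_1x_2>0$ on $(-1,0)$ through the symmetry $V^\rho_t(x)=V^{-\rho}_t(\mathsf Jx)$, whereas the paper repeats the comparison directly, evaluating the cost at $\rho_1$ under the strategy optimal for $\rho_2$.
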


\begin{proof}
Let \(0<\rho_1<\rho_2<1\) and \(x_1x_2<0\). Evaluating the cost for \(\rho_2\) under the strategy that is optimal for \(\rho_1\) yields
\[
V_t^{\rho_2}(x)-V_t^{\rho_1}(x)
\leq
2(\rho_2-\rho_1)\sigma_1\sigma_2
\mathbb E\left[
\int_t^T
\widehat X^{\rho_1}_{1,s}
\widehat X^{\rho_1}_{2,s}\,\mathrm ds
\right]
<0,
\]
where the last inequality follows from
Proposition~\ref{prop:sign-preservation}. Here $\widehat X^{\rho_1}$ denotes the optimally controlled state process corresponding to the correlation parameter $\rho_1$.
If \(-1<\rho_1<\rho_2<0\) and \(x_1x_2>0\), evaluating the cost for \(\rho_1\) under the strategy that is optimal for \(\rho_2\) yields \(V_t^{\rho_1}(x)<V_t^{\rho_2}(x)\).
\end{proof}

The proof of Proposition~\ref{prop:sign-preservation} relies on the diagonal forms of both the temporary-impact and adverse-selection matrices.
 If we retain $\eta_{12}=0$ but allow \(\gamma_{12}\neq0\), the post-jump map is no longer of the diagonal form used above. Moreover, between jumps, the product
\(P=\widehat X_1\widehat X_2\) satisfies
\begin{equation*}
\dot P
=
-
\left(
\frac{Y_{11}}{\eta_1}
+
\frac{Y_{22}}{\eta_2}
\right)P
-
Y_{12}
\left(
\frac{\widehat X_2^2}{\eta_1}
+
\frac{\widehat X_1^2}{\eta_2}
\right).
\end{equation*}
The following proposition shows that suitable
well-diversified portfolios actually cross this boundary and gives an
explicit parameter condition under which this occurs.

\begin{proposition}\label{prop:loss-diversification}
Assume that \(\eta_{12}=0\).
\begin{enumerate}[(1)]
\item\label{prop:loss-diversification.1}
Fix \(t\in[0,T)\) such that \(\rho Y_{12,t}<0\). Then there exist
\(\delta\in(0,T-t)\) and \(\varepsilon>0\) such that, for every
well-diversified \(x \in \R^2\) satisfying
\[
0<\left|\frac{x_1}{x_2}\right|<\varepsilon,
\]
the optimally controlled state \(\widehat X\) with
\(\widehat X_t=x\) satisfies, on the
event \(\{N_{t+\delta}=N_t\}\),
\[
\rho\widehat X_{1,s}\widehat X_{2,s}>0
\]
for some \(s\in(t,t+\delta)\). Consequently, the portfolio becomes
poorly diversified before \(t+\delta\) with probability at least
\(e^{-\Theta\delta}>0\).

\item\label{prop:loss-diversification.2}
If
$
\rho\gamma_{12}<0$ and 
$
|\gamma_{12}|>|\rho|\sigma_1\sigma_2$,
then there exists \(t_\star\in[0,T)\) such that
$
\rho Y_{12,t}<0$,
for all $
t\in[t_\star,T)$.
Consequently, the conclusion of
part~\ref{prop:loss-diversification.1} applies at every
\(t\in[t_\star,T)\).
\end{enumerate}
\end{proposition}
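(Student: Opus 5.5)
For part (1), I would work on the event \(\{N_{t+\delta}=N_t\}\), which has probability \(e^{-\Theta\delta}\). On this event \(\widehat X\) solves the linear ODE
\[
\dot{\widehat X}_1=-\frac{Y_{11}}{\eta_1}\widehat X_1-\frac{Y_{12}}{\eta_1}\widehat X_2,
\qquad
\dot{\widehat X}_2=-\frac{Y_{12}}{\eta_2}\widehat X_1-\frac{Y_{22}}{\eta_2}\widehat X_2
\]
on \([t,t+\delta]\), since \(\eta_{12}=0\). It is deterministic given \(x\) and linear in \(x\). By symmetry (multiply \(x\) by \(-1\)) I would normalize \(x_2=\pm1\) and write \(x_1=\epsilon x_2\cdot c\) with \(|x_1/x_2|<\varepsilon\), where the sign \(c\) is fixed by well-diversification \(\rho x_1x_2<0\).

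The key observation is that \(\rho\dot{\widehat X}_1\widehat X_2\) at time \(t\) equals \(-\rho Y_{12,t}x_2^2/\eta_1-\rho Y_{11,t}x_1x_2/\eta_1\). When \(x_1=0\), this is strictly positive because \(\rho Y_{12,t}<0\). By continuity of \(Y\) (it lies in \(C^1([0,T))\) by Proposition~\ref{P:deterministic-reduction}), I would choose \(\delta\) with \(\rho Y_{12,s}<-c_0<0\) on \([t,t+\delta]\), and bound \(Y\) there. For \(x=(0,x_2)\), the solution \(\widehat X^0\) satisfies \(\widehat X^0_{2,s}\) close to \(x_2\) (nonzero) and
\[
\rho\widehat X^0_{1,s}\widehat X^0_{2,s}\approx -\rho Y_{12,t}\,x_2^2(s-t)/\eta_1>0.
\]
More precisely, for \(s=t+\delta/2\) and \(\delta\) small, \(\rho\widehat X^0_{1,s}\widehat X^0_{2,s}\ge \kappa>0\) after normalizing \(|x_2|=1\). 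Linearity of the flow \(\Phi(s,t)\) then gives
\[
\widehat X_s=\Phi(s,t)(x_1,x_2)^\top=\widehat X^0_s+x_1\Phi(s,t)e_1,
\]
so \(\rho\widehat X_{1,s}\widehat X_{2,s}\ge\kappa-C|x_1|/|x_2|>0\) once \(\varepsilon<\kappa/C\), by homogeneity in \(x_2\). The point \(s=t+\delta/2\) lies in \((t,t+\delta)\), which gives the claim. The probability bound follows since \(\mathbb P[N_{t+\delta}=N_t]=e^{-\Theta\delta}\).

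For part (2), I would use the terminal asymptotics \eqref{eq:260806'}: \(Y_{12,t}/(T-t)\to(\rho\sigma_1\sigma_2+\gamma_{12})/3\). Under \(\rho\gamma_{12}<0\) and \(|\gamma_{12}|>|\rho|\sigma_1\sigma_2\), this limit has the sign of \(\gamma_{12}\), i.e. opposite to \(\rho\), and is nonzero. Hence \(\rho Y_{12,t}<0\) for all \(t\) in some \([t_\star,T)\), and part (1) applies at each such \(t\).

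The main obstacle is making the perturbation step uniform: obtaining the constant \(\kappa\) and the flow bound \(C\) on a fixed \(\delta\) chosen independently of \(x\). Compactness of \([t,t+\delta]\subset[0,T)\) and boundedness of \(Y\) there handle this.
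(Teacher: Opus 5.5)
Your proof is correct, but part (1) takes a different route from the paper's.

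\textbf{The paper's route.} It tracks the signed ratio $r_s=\iota\widehat X_{1,s}/\widehat X_{2,s}$ with $\iota=-\mathrm{sign}(\rho)$. This ratio satisfies a scalar Riccati equation $\dot r_s=h(s,r_s)$ with $h(t,0)=-|Y_{12,t}|/\eta_1<0$. The paper picks $\delta,\overline r,c$ with $h\le -c$ on $[t,t+\delta]\times[0,\overline r]$ and takes $\varepsilon<\min\{\overline r,c\delta\}$. It then shows that $r$ decreases at rate at least $c$, hits zero before $t+r_t/c$, and becomes negative. Along the way it must check that $\widehat X_2$ cannot vanish while $r>0$.

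\textbf{Your route.} You perturb off the axis point $(0,x_2)$. For $f(s)=\rho\widehat X^0_{1,s}\widehat X^0_{2,s}$ you have $f(t)=0$ and $f'(t)=-\rho Y_{12,t}x_2^2/\eta_1>0$. So $\kappa=f(t+\delta/2)>0$ for small $\delta$, and this value is the same for $x_2=\pm1$. Linearity of the fixed flow map $\Phi(t+\delta/2,t)$, together with homogeneity in $x_2$, then gives
\[
\rho\widehat X_{1,s}\widehat X_{2,s}\ge x_2^2\bigl(\kappa-C|x_1/x_2|\bigr).
\]
One small point: the $(x_1/x_2)^2$ term has to be absorbed into $C$, so take $\varepsilon<\min\{1,\kappa/C\}$, not just $\kappa/C$. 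Your intermediate step choosing $\delta$ with $\rho Y_{12,s}<-c_0$ is not actually needed.

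\textbf{What each buys.} Your argument is more elementary. It avoids the ratio ODE and the non-vanishing argument for $\widehat X_2$, and it produces one time $s=t+\delta/2$ that works uniformly for all admissible $x$. It never uses the well-diversified hypothesis, so it shows more: every portfolio close enough to the $x_2$-axis is poorly diversified at $t+\delta/2$ on the no-jump event. The paper's argument gives a quantitative crossing time $s_*\le t+r_t/c$, which shrinks as $r_t\to0$. It also exhibits the mechanism: $\widehat X_1$ crosses zero while $\widehat X_2$ keeps its sign.

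Part (2) is the same as the paper's: the sign of $\rho Y_{12,t}$ near $T$ is read off from the terminal asymptotics \eqref{eq:260806'}.
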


\begin{proof}
We first prove \ref{prop:loss-diversification.1}. Set
$
\iota=-\mathrm{sign}(\rho)
=
\mathrm{sign}(Y_{12,t})\neq 0$.
On a jump-free interval and as long as \(\widehat X_{2,s}\neq0\), define
$
r_s
=
\iota{\widehat X_{1,s}}/{\widehat X_{2,s}}$.
Since the initial portfolio is well diversified,
$
r_t
=
|{x_1}/{x_2}|
>0$.
Using the continuous optimal-state dynamics, we obtain
$
\dot r_s
=
h(s,r_s)$,
where
\[
h(s,r)
=
\frac{\iota Y_{12,s}}{\eta_2}r^2
+
\left(
\frac{Y_{22,s}}{\eta_2}
-
\frac{Y_{11,s}}{\eta_1}
\right)r
-
\frac{\iota Y_{12,s}}{\eta_1}.
\]
In particular,
\[
h(t,0)
=
-\frac{|Y_{12,t}|}{\eta_1}
<0.
\]
By continuity (recall that $Y$ is deterministic), there exist
\(\delta\in(0,T-t)\), \(\overline r>0\), and \(c>0\) such that
\[
h(s,r)\leq-c,
\qquad
(s,r)\in[t,t+\delta]\times[0,\overline r].
\]
Choose now $
\varepsilon
\in (0, 
\min\{\overline r,c\delta\})$.
On the event \(\{N_{t+\delta}=N_t\}\), if
\(0<r_t<\varepsilon\), then \(r\) cannot exit
\([0,\overline r]\) through \(\overline r\), and, as long as
\(r_s\geq0\),
$
\dot r_s\leq-c$.
Moreover, as long as \(r_s>0\),
\[
\dot{\widehat X}_{2,s}
=
-
\left(
\frac{\iota Y_{12,s}}{\eta_2}r_s
+
\frac{Y_{22,s}}{\eta_2}
\right)
\widehat X_{2,s}.
\]
It follows that \(\widehat X_2\) cannot vanish while \(r>0\).
Hence \(r\) remains well defined as long as it is positive and,
since \(\dot r_s\leq-c\), reaches zero at some time
\[
s_*
\leq
t+\frac{r_t}{c}
<
t+\delta.
\]
Since
\(\dot r_{s_*}=h(s_*,0)\leq-c<0\), the ratio \(r\) becomes
strictly negative immediately after \(s_*\). As \(\rho\iota=-|\rho|\), we then have
$
\rho\widehat X_{1,s}\widehat X_{2,s}
=
\rho\iota r_s\widehat X_{2,s}^2
>0
$
for all \(s>s_*\) sufficiently close to \(s_*\). Thus, the portfolio
becomes poorly diversified. The probability of no dark pool execution on
\([t,t+\delta]\) is \(e^{-\Theta\delta}\), proving \ref{prop:loss-diversification.1}.

For part~\ref{prop:loss-diversification.2},
recall \eqref{eq:260806'}.
Since
\(\rho\gamma_{12}<0\) and
\(|\gamma_{12}|>|\rho|\sigma_1\sigma_2\),
the quantity
\(\rho(\rho\sigma_1\sigma_2+\gamma_{12})\)
is strictly negative. Hence
\(\rho Y_{12,t}<0\) for all \(t\) sufficiently close to \(T\),
which proves the claim.
\end{proof}

Figure~\ref{fig:loss-of-diversification} complements
Proposition~\ref{prop:loss-diversification}. Its parameters satisfy the
condition in part~\ref{prop:loss-diversification.2}, and the numerical
trajectory loses diversification for the fixed initial portfolio
\(x=(1,2)^\top\), which is not chosen arbitrarily close to a coordinate
axis.

\begin{figure}[t]
	\centering
	\includegraphics[width=\textwidth]{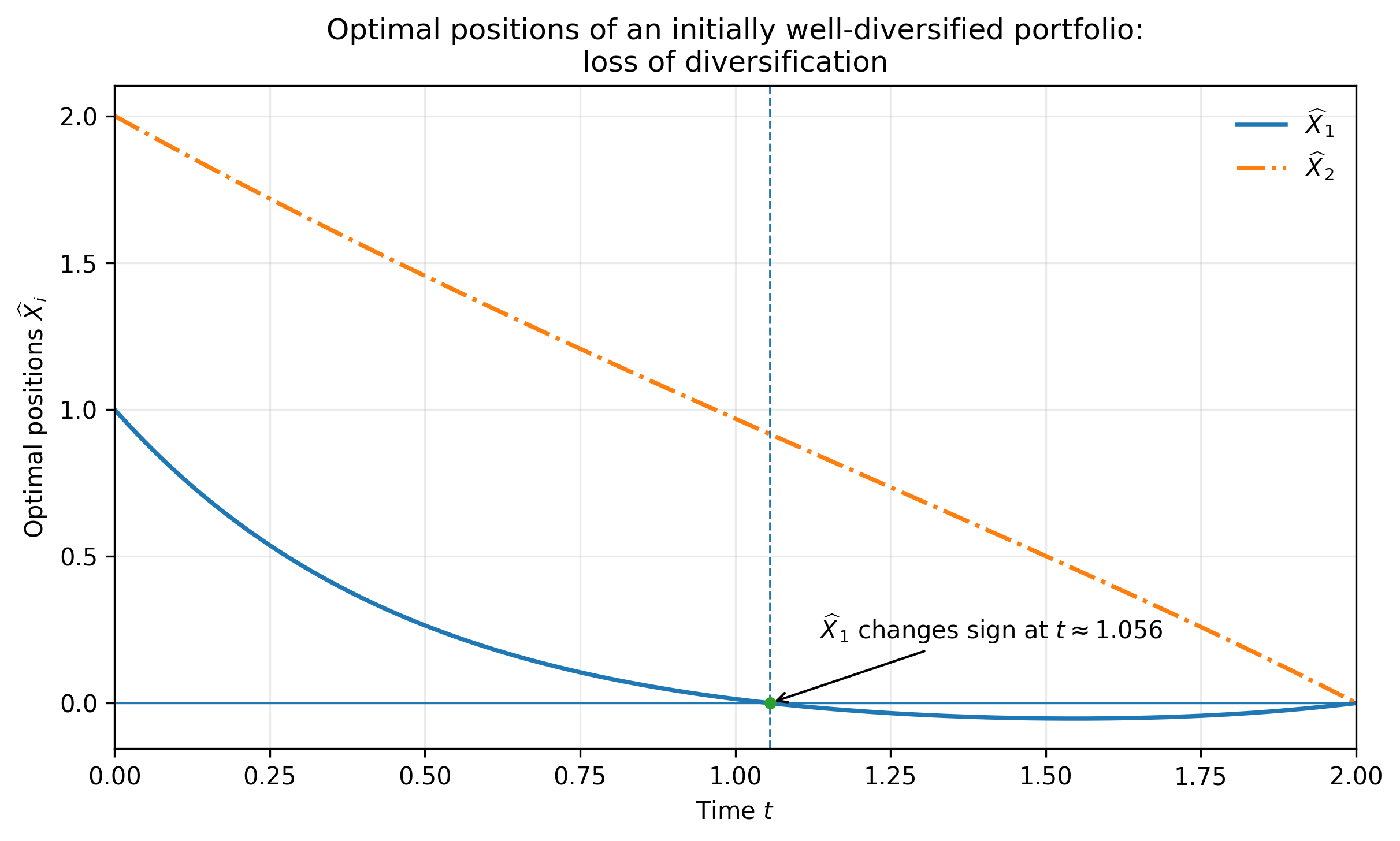}
	\caption{Optimal positions along a path with no dark-pool executions. The parameters are
	\(T=2\), \(x=(1,2)^\top\), \(\sigma_1=0.3\), \(\sigma_2=0.4\),
	\(\rho=-0.05\), \(\eta_1=0.02\), \(\eta_2=0.6\),
	\(\gamma_1=0.25\), \(\gamma_2=2\), \(\gamma_{12}=0.65\), and
	\(\theta_1=\theta_2=0.5\). The singular Riccati ODE is approximated by
	the finite-terminal problem with \(Y^L_T=L\mathbf{I}_2\) and \(L=1000\).
	Since \(\rho x_1x_2<0\), the initial portfolio is well diversified.
	However, \(\widehat X_1\) crosses zero at approximately \(t=1.056\),
	while \(\widehat X_2\) remains positive. Consequently,
	\(\rho\widehat X_{1,t}\widehat X_{2,t}\) changes from negative to
	positive, and the portfolio becomes poorly diversified. This example
	illustrates that diversification protection may fail when
	\(\gamma_{12}\neq0\). The probability that no dark-pool execution
	occurs on \([0,T]\) is
	\(\exp(-(\theta_1+\theta_2)T)=e^{-2}\approx13.5\%\).}
	\label{fig:loss-of-diversification}
\end{figure}

Propositions~\ref{prop:sign-preservation} and
\ref{prop:loss-diversification} show that own-asset adverse selection
alone preserves diversification protection, whereas cross-asset
spillover can destroy it. In Figure~\ref{fig:loss-of-diversification}, negative correlation lowers
the inventory-risk penalty for same-sign positions, whereas positive
cross-asset spillover penalizes same-sign dark-pool orders. These
opposing effects alter the optimal feedback strategy sufficiently for
one inventory component to change sign.

\subsection{Own-asset adverse selection and dark-pool trading}
\label{sec:own-adverse-selection}

We now examine how own-asset adverse selection affects the use of the dark pool. We first characterize the optimal dark-pool order in terms of coordinatewise minimization of the value function. We then specialize to the case without cross-asset spillover and determine when a dark-pool execution transforms a poorly diversified portfolio into a well-diversified one.

Fix \(t\in[0,T)\) and \(i\in\{1,2\}\). For \(x\in\mathbb R^2\), consider the coordinatewise value minimizer
\begin{equation*}
	\zeta_i^\star(t,x)
	=
\mathop{\operatorname{argmin}}_{\zeta\in\mathbb R}
	V_t(x-\zeta e_i)
	=
	\frac{e_i^\top Y_t x}{Y_{ii,t}}.
\end{equation*}
In the absence of adverse selection, \cite[Proposition~4.3]{KS-2015} shows that the optimal dark-pool order $\widehat\beta_i(t,x)$ coincides with this minimizer. In general, however, this identity fails.

\begin{proposition}\label{prop:penalization-min}
Fix \(t\in[0,T)\) and \(i=1,2\).
\begin{enumerate}[(a)]
	\item The identity
	\[
		\widehat\beta_i(t,x)=\zeta_i^\star(t,x),
		\qquad x\in\mathbb R^2,
	\]
	holds if and only if
	\begin{equation}\label{eq:row-identity}
		e_i^\top\mathcal N(Y_t)^{-1}\mathcal M(Y_t)
		=
		\frac{1}{Y_{ii,t}}e_i^\top Y_t.
	\end{equation}

	\item If $\gamma_{12} = 0$, then
	\begin{equation}\label{eq:beta-shrinkage}
		\widehat\beta_i(t,x)
		=
		\frac{\theta_i}{\gamma_i+\theta_iY_{ii,t}}
		e_i^\top Y_t x
		=
		\frac{\theta_iY_{ii,t}}
		{\gamma_i+\theta_iY_{ii,t}}
		\zeta_i^\star(t,x), \qquad x\in\mathbb R^2.
	\end{equation}
	Consequently,
	\[
		\widehat\beta_i(t,x)=\zeta_i^\star(t,x),
		\qquad x\in\mathbb R^2,
	\]
	if and only if \(\gamma_i=0\).

	\item If $\gamma_{12} = 0$, then
	\begin{equation*}
		\widehat\beta_i(t,x)
		=
		\mathop{\operatorname{argmin}}_{\zeta\in\mathbb R}
		\left\{
		V_t(x-\zeta e_i)
		+
		\frac{\gamma_i}{\theta_i}\zeta^2
		\right\}, \qquad x\in\mathbb R^2.
	\end{equation*}
\end{enumerate}
\end{proposition}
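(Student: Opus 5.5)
Everything here is linear algebra on the feedback formula \eqref{value:ODE}: $\widehat\beta(t,x)=\mathcal N(Y_t)^{-1}\mathcal M(Y_t)x$, with $\mathcal M(Y)=\diag(\boldsymbol\theta)Y$ and $\mathcal N(Y)=\Gamma+\diag(\theta_1Y_{11},\theta_2Y_{22})$. By Proposition~\ref{P:deterministic-reduction}, $Y_t>0$, so $Y_{ii,t}>0$ and $\mathcal N(Y_t)>0$, and everything below is well defined. We first verify the formula for $\zeta_i^\star$. The map $\zeta\mapsto V_t(x-\zeta e_i)=x^\top Y_tx-2\zeta e_i^\top Y_tx+\zeta^2Y_{ii,t}$ is a strictly convex quadratic, and its minimizer is $e_i^\top Y_tx/Y_{ii,t}$.

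For part (a), both $\widehat\beta_i(t,\cdot)$ and $\zeta_i^\star(t,\cdot)$ are linear functionals of $x$. They are given by the row vectors $e_i^\top\mathcal N(Y_t)^{-1}\mathcal M(Y_t)$ and $Y_{ii,t}^{-1}e_i^\top Y_t$. Two linear functionals agree for all $x\in\mathbb R^2$ if and only if these row vectors coincide, which is exactly \eqref{eq:row-identity}.

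For part (b), with $\gamma_{12}=0$ the matrix $\mathcal N(Y_t)=\diag(\gamma_1+\theta_1Y_{11,t},\gamma_2+\theta_2Y_{22,t})$ is diagonal. Its inverse is therefore diagonal, and multiplying by $\diag(\boldsymbol\theta)Y_t$ gives the $i$th row $\frac{\theta_i}{\gamma_i+\theta_iY_{ii,t}}e_i^\top Y_t$. This yields the first equality in \eqref{eq:beta-shrinkage}, and the second follows by inserting $Y_{ii,t}/Y_{ii,t}$. The shrinkage factor $\theta_iY_{ii,t}/(\gamma_i+\theta_iY_{ii,t})$ lies in $(0,1]$ and equals $1$ if and only if $\gamma_i=0$. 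If $\gamma_i>0$, we need some $x$ with $\zeta_i^\star(t,x)\neq0$, and $x=e_i$ works since $e_i^\top Y_te_i=Y_{ii,t}>0$. Hence the identity holds for all $x$ if and only if $\gamma_i=0$.

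For part (c), the objective $V_t(x-\zeta e_i)+\frac{\gamma_i}{\theta_i}\zeta^2$ is a quadratic in $\zeta$ with leading coefficient $Y_{ii,t}+\gamma_i/\theta_i>0$. Its unique minimizer is $e_i^\top Y_tx/(Y_{ii,t}+\gamma_i/\theta_i)=\theta_ie_i^\top Y_tx/(\gamma_i+\theta_iY_{ii,t})$, which equals $\widehat\beta_i(t,x)$ by (b). None of these steps is a genuine obstacle. The only points needing care are the positivity facts $Y_{ii,t}>0$ and $\mathcal N(Y_t)>0$ from Proposition~\ref{P:deterministic-reduction}, and the choice $x=e_i$ in the converse direction of (b).
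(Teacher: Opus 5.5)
Your proposal is correct and follows essentially the same route as the paper: expand $V_t(x-\zeta e_i)$ as a quadratic in $\zeta$, compare row vectors of the two linear functionals for (a), use the diagonal form of $\mathcal N(Y_t)$ when $\gamma_{12}=0$ for (b), and minimize the penalized quadratic for (c). Your explicit test vector $x=e_i$ for the converse in (b) is a small detail that the paper leaves implicit.
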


\begin{proof}
 Since
\[
	V_t(x-\zeta e_i)
	=
	x^\top Y_t x
	-
	2\zeta e_i^\top Y_t x
	+
	\zeta^2Y_{ii, t}, \qquad x\in\mathbb R^2,
\]
and \(Y_{ii, t}>0\), its unique minimizer is
\(\zeta_i^\star(t,x)=Y_{ii,t}^{-1}e_i^\top Y_t x\). On the other hand,
\[
	\widehat\beta_i(t,x)
	=
	e_i^\top\mathcal N(Y_t)^{-1}\mathcal M(Y_t)x, \qquad x\in\mathbb R^2.
\]
Thus, \(\widehat\beta_i(t,x)=\zeta_i^\star(t,x)\) for every \(x\in\mathbb R^2\) if and only if
\eqref{eq:row-identity} holds.

If $\gamma_{12} = 0$, then
\[
	\mathcal N(Y_t)
	=
	\diag
	\bigl(
	\gamma_1+\theta_1Y_{11, t},
	\gamma_2+\theta_2Y_{22, t}
	\bigr),
	\qquad
	\mathcal M(Y_t)
	=
	\diag(\theta_1,\theta_2)Y_t.
\]
This gives \eqref{eq:beta-shrinkage} and proves part~(b). Finally,
\[
	V_t(x-\zeta e_i)+\frac{\gamma_i}{\theta_i}\zeta^2
	=
	x^\top Y_tx
	-
	2\zeta e_i^\top Y_tx
	+
	\left(
	Y_{ii, t}+\frac{\gamma_i}{\theta_i}
	\right)\zeta^2, \qquad x\in\mathbb R^2, \zeta \in \mathbb R, 
\]
whose unique minimizer is
\[
	\frac{e_i^\top Y_t x}
	{Y_{ii, t}+\gamma_i/\theta_i}
	=
	\frac{\theta_i}{\gamma_i+\theta_iY_{ii, t}}
	e_i^\top Y_tx
	=
	\widehat\beta_i(t,x),\qquad x\in\mathbb R^2. \qedhere
\]
\end{proof}

When $\gamma_{12} = 0$, 
\eqref{eq:beta-shrinkage} shows that
\[
	\widehat\beta_i(t,x)
	=
	s_{i,t}\zeta_i^\star(t,x),
	\qquad
	s_{i,t}
	=
	\frac{\theta_iY_{ii,t}}
	{\gamma_i+\theta_iY_{ii,t}}
	\in(0,1],\qquad x\in\mathbb R^2.
\]
Moreover, \(s_{i,t}=1\) if and only if \(\gamma_i=0\). Thus, provided $\zeta_i^\star(t,x) \neq 0$, positive own-asset adverse selection strictly shrinks the dark-pool order relative to the order that would minimize the continuation value after an execution, while the two orders coincide when \(\gamma_i=0\).
 Equivalently, the trader balances the reduction in future liquidation costs against the adverse-selection penalty associated with a larger dark-pool order. This representation should not be interpreted as a global comparative-static result in \(\gamma_i\), since \(Y\) itself depends on the model parameters.

Proposition~\ref{P:terminal-asymptotics} also implies that
\(\lim_{t \nearrow T} s_{i,t}=1\); more precisely, it yields
\[
\lim_{t\nearrow T}
\frac{1-s_{i,t}}{T-t}
=
\frac{\gamma_i}{\theta_i\eta_i}.
\]
Thus, the relative effect of own-asset adverse selection on the dark-pool
order becomes asymptotically negligible near the liquidation horizon.

\begin{remark}
Although Proposition~\ref{prop:penalization-min} is stated in the
two-asset constant-coefficient setting of this section, its shrinkage
and penalized-minimization conclusions extend to the general
\(n\)-asset stochastic model when
\(\Gamma=\diag(\gamma_1,\ldots,\gamma_n)\), without requiring
\(\Sigma\) to be diagonal. Indeed, fix \(i=1,\ldots,n\) and set
\(A_i=Y_-+\Psi_i\). Since \(A_i>0\), the coordinatewise minimizer of the continuation value after execution of the $i$-th dark-pool order is
\[
\zeta_i^\star
=
\mathop{\operatorname{argmin}}_{\zeta\in\mathbb R}
(\widehat X_--\zeta e_i)^\top A_i(\widehat X_--\zeta e_i)
=
\frac{e_i^\top A_i\widehat X_-}{(A_i)_{ii}}.
\]
The feedback control \eqref{eq:260708.2} then yields
\[
\widehat\beta_i
=
\frac{\theta_i(A_i)_{ii}}
{\gamma_i+\theta_i(A_i)_{ii}}
\zeta_i^\star
=
\mathop{\operatorname{argmin}}_{\zeta\in\mathbb R}
\left\{
(\widehat X_--\zeta e_i)^\top A_i(\widehat X_--\zeta e_i)
+
\frac{\gamma_i}{\theta_i}\zeta^2
\right\}.
\]
Thus, the interpretation in Proposition~\ref{prop:penalization-min}
is not specific to the two-asset deterministic setting. More generally, for a fixed asset $i= 1, \ldots, n$, the same conclusions hold whenever
$\Gamma_{ij}=0$ for all $j\neq i$; no restriction is needed on the spillover
terms among the remaining assets.
\end{remark}

We next study whether execution of an optimal dark-pool order can rescue a poorly diversified portfolio.

\begin{proposition}\label{prop:rescue}
Assume that \(\eta_{12}=0\), \(\rho\neq0\), and
$\gamma_{12} = 0$. Fix \(t\in[0,T)\) and
\(i,j\in\{1,2\}\) with \(i\neq j\), and let \(x\in\mathbb R^2\) be a
poorly diversified pre-execution position.
If the optimal dark-pool order \(\widehat\beta_i(t,x)\) in asset \(i\) is
executed, let
\[
	x^{\mathrm{post}}=x-\widehat\beta_i(t,x)e_i
\]
denote the post-execution position. Then the post-execution position \(x^{\mathrm{post}}\) is well diversified if
and only if
\begin{equation}\label{condition:rescue}
	\left|\frac{x_j}{x_i}\right|
	>
	\frac{\gamma_i}{\theta_i|Y_{ij,t}|}.
\end{equation}
\end{proposition}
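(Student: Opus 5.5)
The proof is a direct computation with the explicit shrinkage formula of Proposition~\ref{prop:penalization-min}(b); no dynamic argument is needed. Since \(\gamma_{12}=0\), the formula \eqref{eq:beta-shrinkage} gives
\[
\widehat\beta_i(t,x)=\frac{\theta_i}{\gamma_i+\theta_iY_{ii,t}}\bigl(Y_{ii,t}x_i+Y_{ij,t}x_j\bigr).
\]
Only the \(i\)-th coordinate changes after execution, so \(x^{\mathrm{post}}_j=x_j\). The \(i\)-th coordinate becomes
\[
x^{\mathrm{post}}_i=x_i-\widehat\beta_i(t,x)=\frac{\gamma_i x_i-\theta_iY_{ij,t}x_j}{\gamma_i+\theta_iY_{ii,t}}.
\]
The denominator is strictly positive because \(Y_t>0\) by Proposition~\ref{P:deterministic-reduction} and \(\gamma_i\ge0\).

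Next, I check well-diversification directly. The post-execution position is well diversified if and only if \(\rho x^{\mathrm{post}}_ix^{\mathrm{post}}_j<0\). This is equivalent to
\[
\rho\,\gamma_i x_ix_j-\theta_i\rho Y_{ij,t}x_j^2<0.
\]
Two sign facts are needed here. First, since \(\eta_{12}=0\), \(\gamma_{12}=0\) and \(\rho\neq0\), Lemma~\ref{lem:sign-Y12}\ref{lem:sign-Y12.1} gives \(\mathrm{sign}(Y_{ij,t})=\mathrm{sign}(\rho)\). Hence \(\rho Y_{ij,t}=|\rho||Y_{ij,t}|>0\). Second, because \(x\) is poorly diversified, \(\rho x_ix_j=|\rho||x_i||x_j|>0\), and in particular \(x_i,x_j\neq0\).

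Substituting these and dividing by \(|\rho||x_j|>0\), the condition becomes
\[
\gamma_i|x_i|<\theta_i|Y_{ij,t}||x_j|.
\]
Dividing by \(|x_i|\,\theta_i|Y_{ij,t}|>0\) gives exactly \eqref{condition:rescue}. Every step is an equivalence, which proves both directions.

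The only potential obstacle is the strict positivity \(Y_{ij,t}\neq0\). Lemma~\ref{lem:sign-Y12} supplies it, and it is what makes the right-hand side of \eqref{condition:rescue} finite. Once that is in hand, the boundary case where \(x^{\mathrm{post}}_i=0\) needs no separate treatment: it corresponds to equality in \eqref{condition:rescue}, where the post-execution position is neither well nor poorly diversified, consistent with the strict inequality in the statement.
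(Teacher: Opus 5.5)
Your proposal is correct and follows essentially the same route as the paper: it uses the shrinkage formula \eqref{eq:beta-shrinkage} to compute $x_i^{\mathrm{post}}$, takes the sign fact $\rho Y_{ij,t}>0$ from Lemma~\ref{lem:sign-Y12}\ref{lem:sign-Y12.1}, and finishes with a chain of equivalences. The only cosmetic difference is that the paper first rewrites well-diversification as $x_i^{\mathrm{post}}x_i<0$ using $\rho x_ix_j>0$, whereas you work directly with $\rho x_i^{\mathrm{post}}x_j<0$; both reduce to $\gamma_i|x_i|<\theta_i|Y_{ij,t}|\,|x_j|$.
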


\begin{proof}
By Lemma~\ref{lem:sign-Y12}\ref{lem:sign-Y12.1},
	$\rho Y_{ij,t}>0$.
The execution changes only the \(i\)th component. By
\eqref{eq:beta-shrinkage},
\[
	x_i^{\mathrm{post}}
	=
	x_i-\widehat\beta_i(t,x)
	=
	\frac{\gamma_i x_i-\theta_iY_{ij,t}x_j}
	{\gamma_i+\theta_iY_{ii,t}}.
\]
Since \(x_j\) is unchanged and \(\rho x_ix_j>0\), the post-execution
position is well diversified if and only if \(x_i^{\mathrm{post}}x_i<0\). As the
denominator above is positive, this is equivalent to
\[
	\gamma_i x_i^2
	-
	\theta_iY_{ij,t}x_ix_j
	<0.
\]
Moreover, \(\rho Y_{ij,t}>0\) and \(\rho x_ix_j>0\) imply
\(Y_{ij,t}x_ix_j>0\). Hence the preceding inequality is equivalent to
\[
	\gamma_i|x_i|
	<
	\theta_i|Y_{ij,t}|\,|x_j|,
\]
which is precisely \eqref{condition:rescue}.
\end{proof}

If \(\gamma_i=0\), the right-hand side of
\eqref{condition:rescue} vanishes. Hence every execution of the optimal dark-pool order in asset \(i\) transforms a poorly diversified portfolio into a well-diversified one, recovering
\cite[Proposition~4.8(ii)]{KS-2015}. If \(\gamma_i>0\), an execution rescues the portfolio only when the absolute position in the other asset is sufficiently large relative to the absolute position in the executed asset. Thus, positive own-asset adverse selection introduces a nonzero threshold for restoring diversification through a dark-pool execution.
Proposition~\ref{P:terminal-asymptotics} further shows that this threshold
becomes increasingly stringent near the liquidation horizon. Indeed,
\[
\lim_{t\nearrow T}
(T-t)\frac{\gamma_i}{\theta_i|Y_{ij,t}|}
=
\frac{3\gamma_i}{\theta_i|\rho|\sigma_1\sigma_2}.
\]
Thus, if \(\gamma_i>0\), the threshold in
\eqref{condition:rescue} diverges at rate \(1/(T-t)\).

\bibliography{Fu}

\end{document}